\documentclass[prx,twocolumn,superscriptaddress,showpacs,notitlepage,floatfix,nofootinbib]{revtex4-2}
\usepackage{ulem} 
\usepackage{graphicx}
\usepackage[export]{adjustbox} 
\usepackage{braket}
\usepackage{mathrsfs}
\usepackage{extarrows} 
\usepackage{amsmath, amssymb, amsthm}
\usepackage{physics}
\usepackage{subcaption}
\usepackage{caption}
\usepackage{ragged2e}
\usepackage{xcolor}
\usepackage{enumerate}
\usepackage{bbm}
\usepackage{comment}
\usepackage[pdftex,colorlinks=true,linkcolor=violet,citecolor=blue,urlcolor=cyan]{hyperref}
\usepackage[capitalize]{cleveref}
\usepackage{booktabs}
\usepackage{multirow}
\usepackage{algorithm}
\usepackage{algpseudocode} % algorithmicx
\newcommand{\Bin}{\mathrm{Bin}}

\newtheorem{lemma}{Lemma}
\newtheorem{defn}{Definition}
\newtheorem{proposition}{Proposition}

\newtheorem{theorem}{Theorem}
\newtheorem{corollary}{Corollary}

\DeclareMathOperator{\ord}{ord}
\DeclareMathOperator{\supp}{supp}
\DeclareMathOperator{\decomp}{decomp}

\DeclareCaptionJustification{justified}{\justifying}
\usepackage[justification=justified,singlelinecheck=false]{caption}

\begin{document}

\title{Clifford-deformed zero-rate LDPC codes with 50\% biased noise thresholds}

\author{Jagannath Das}
\thanks{These authors contributed equally.}
\affiliation{Department of Physics, Virginia Tech, Blacksburg, Virginia 24061, USA}

\author{Sayandip Dhara}
\thanks{These authors contributed equally.}
\affiliation{Department of Physics, Virginia Tech, Blacksburg, Virginia 24061, USA}

\author{Pedro Medina}
\affiliation{Department of Physics, Virginia Tech, Blacksburg, Virginia 24061, USA}

\author{Arthur Pesah}
\affiliation{Dahlem Center for Complex Quantum Systems, Freie Universität Berlin, 14195 Berlin, Germany}

\author{Arpit Dua}
\email{adua@vt.edu}
\altaffiliation[Current affiliation: ]{QuEra Computing Inc., 1380 Soldiers Field Road, Boston, MA 02135}
\affiliation{Department of Physics, Virginia Tech, Blacksburg, Virginia 24061, USA}

\begin{abstract}
Applying single-qubit Clifford unitaries to a Pauli stabilizer code produces a Clifford-deformed variant whose stabilizers remain Pauli operators, but with locally rotated Pauli axes.
Such deformations provide a simple way to tailor a fixed code to anisotropic noise, and have enabled unusually high thresholds under strongly biased dephasing.
In this work, we discuss zero-rate quantum low-density parity-check (LDPC) codes, for which there exist Clifford-deformed variants where the number of biased logical operators scales slower than the distance, or there exists a basis of logical operators whose overlap satisfies certain scaling conditions; in this case, the code-capacity threshold for the Clifford-deformed variant under i.i.d.\ pure dephasing noise approaches $50\%$. This property provably explains previously known code examples with 50\% biased noise thresholds, such as XY surface code, XZZX surface code, color code, as well as some 3D Clifford-deformed codes. As a concrete new example, we study Clifford deformations of the tile codes of Ref.~\cite{steffan2025tile}. Similar to the phase diagram of 50\% thresholds for random Clifford deformations of the surface code in Ref.~\cite{dua_CDSC}, we find a similar phase diagram for the tile codes. 
We also construct several translationally invariant deformations of the tile code with 50\% thresholds, and present numerical evidence for improved performance at finite bias and under circuit-level noise.
In the circuit-level setting, performance is governed by the residual bias after a full syndrome-extraction cycle, linking our simulations to phenomenological models commonly used to study Clifford-deformed codes.
We estimate this residual bias for different qubit platforms by modeling microscopic implementations of tile-code syndrome extraction.
\end{abstract}

\maketitle
\tableofcontents
\section{Introduction}

Noise in realistic qubits is often strongly anisotropic, with dephasing processes dominating other error channels~\cite{Aliferis_2009, Nigg_2014, QDnoise2012, Wu_2022}.
This bias can be viewed as an opportunity: if the dominant error mechanism is known, codes and decoders can be tailored to exploit it and thereby increase the fault-tolerance threshold~\cite{shor1997faulttolerantquantumcomputation,shorPhysRevA.52.R2493,steane, Kitaev2003} and reduce overhead~\cite{Panos_Preskill, Poulinbiased, Brooksbiased, Ultrahigh2018, XZZX2021, Tuckett2020,srivastava2022xyzhexagonal,san2023cellular, Roffe2023biastailoredquantum,tiurev2024domain,setiawan2025tailoring,bias_preserving_CNOT_cat_qubits_2020, Bias_Cluster, Erasure_bias, Claes2023, Tailored_XZZX,Fazio2025logicalnoisebiasin,Fellous-Asiani2025,kusano2026spincatqubitbiasednoise,PhysRevLett.128.140503}.
In the extreme ``infinite-bias'' code-capacity limit, where only dephasing errors (Pauli $Z$ errors) occur independently on data qubits, $50\%$ is the fundamental information-theoretic threshold: above $p_Z=1/2$, reliable decoding is impossible, while approaching $1/2$ corresponds to near-capacity performance in this setting~\cite{shannon}.

% \AD{let's check if we missed any recent biased noise references among citations here}
% \SD{Added a few recent citations here}

A particularly simple and versatile bias-tailoring mechanism is the Clifford deformation.
Applying a tensor product of single-qubit Clifford unitaries to a Pauli stabilizer code maps its stabilizer group to another Pauli stabilizer group with identical parameters $[[n,k,d]]$, but with locally permuted Pauli axes.
Although this transformation is trivial in the absence of noise, it can substantially enhance the decoding capabilities under biased noise. 
Recent work has demonstrated that suitable deformations can yield dramatically improved thresholds for several topological codes, including thresholds approaching $50\%$ at infinite bias~\cite{Tailoring2019,Huang_2023,dua_CDSC,XZZX2021}.
Not only topological codes, but also in recent times, schemes have been proposed to construct bias-tailored qLDPC codes, such as the bias-tailored lifted product codes, \cite{Roffe2023biastailoredquantum}, and the Clifford deformed bivariate bicycle codes \cite{romanescocodes}. In both cases, the logical error rate has been observed to improve in the large-bias regime.  

However, to date, most $50\%$ results under pure dephasing have been obtained for low-dimensional topological codes.
At the same time, the more general quantum low-density parity-check (LDPC) codes, with constant-weight checks and distances growing with block length, have seen rapid progress and are strong candidates for reducing the asymptotic overhead of fault-tolerant computation~\cite{dinur2022goodquantumldpccodes, Panteleev2021degeneratequantum,panteleev2022asymptoticallygoodquantumlocally,leverrier2022quantumtannercodes}.
This raises a natural question: is near-capacity performance under biased dephasing an inherently topological phenomenon, or can it be engineered more broadly for quantum LDPC codes via local Clifford deformations?

% \paragraph*{Contributions.}
% We show that near-$50\%$ thresholds are not restricted to topological families. \AP{Do we show that? I think our families are still topological}

% \AP{"low-weight" in QEC usually means O(1). Maybe "minimal-weight"? Or just "pure Z" since BLO is really what matters in the end, not MWBLO?},

In this work, we derive general sufficient conditions under which a family of zero-rate quantum LDPC codes admits a Clifford-deformed variant with a $50\%$ infinite-bias code-capacity threshold under i.i.d.\ $Z$ noise. Our first result is a union-bound criterion showing that if every non-trivial pure-$Z$ logical operator has weight growing as $n^\alpha$ and the total number of such logical operators grows subexponentially in $n^\alpha$, then the logical failure probability decays as $\exp(-\Omega(n^\alpha))$ for all $p_Z<1/2$. This recovers the $50\%$ threshold mechanism in a form that is not restricted to previously studied topological examples.
We then introduce a basis of logical operators (BLO), namely an independent generating set for the pure-$Z$ commuting operators in which every basis element is itself a non-trivial pure-$Z$ logical operator. This allows us to sharpen the general counting argument. In particular, if there exists a non-overlapping BLO whose elements all have weight at least $K n^\alpha$, then a union bound over BLO elements alone already implies a $50\%$ threshold. More generally, if the BLO splits into a non-overlapping sector and an overlapping sector whose generated logical operators remain subexponential in number, then the same conclusion follows.

These results suggest a broader picture for Clifford-deformed LDPC codes under biased noise. In some families, the threshold follows directly from controlling the total number of pure-$Z$ logical operators. In others, one can exploit additional structure of the pure-$Z$ logical space, such as a non-overlapping BLO or, more generally, iterative weight-reduction mechanisms induced by macroscopic pure-$Z$ stabilizers, to reduce the decoding problem to one already covered by our threshold criteria. This provides a common framework encompassing some of the previously known $50\%$ biased-noise threshold examples, including XZZX, XY, and random Clifford-deformed surface codes~\cite{XZZX2021,Ultrahigh2018,dua_CDSC}.

As a concrete case study, we investigate the tile codes introduced in Ref.~\cite{steffan2025tile} (see also Refs.~\cite{rmy6-9n89,qv65-vmzr,breuckmann2025logicaloperatorsderivedautomorphisms}). For these codes, we consider both random local deformations and translationally invariant deformations, and numerically obtain a $50\%$ infinite-bias threshold. For one translationally invariant deformation with Hadamard gates on half of the qubits, we applied the weight-reduction technique ~\cite{Huang_2023} to analytically prove an infinite-bias threshold of $50\%$. For other deformations, including the random ones, we numerically demonstrate a threshold of 50\% using a belief-propagation+ordered statistics (BP-OSD) decoder. For all of these codes, we check the scaling of the number of elements in BLO and the scaling of the distance, and we find that it obeys the conditions required by our proof.

In order to assess the practical importance of these results, we use the BP-OSD decoder to numerically study the resulting Clifford-deformed tile codes under finite-bias code-capacity noise models, as well as circuit-level noise models, and obtain enhanced performance for the Clifford-deformed variants. We understand the enhancement in the circuit-level scenario by considering the residual bias at the end of the circuit, which maps the problem to the phenomenological setting. We calculate the residual biases for microscopic implementation of 
syndrome-extraction circuits in various hardware platforms, assessing how native gate sets and compilation strategies
affect bias preservation. This provides concrete evidence at the hardware level for considering LDPC codes tailored to biased noise.

\section{50\% biased noise threshold for zero-rate LDPC codes \label{sec: Clifford-deformed zero-rate LDPC codes}}

As mentioned in the introduction, several previous works~\cite{dua_CDSC,Huang_2023,san2023cellular} have demonstrated $50\%$ thresholds for Clifford-deformed topological codes under biased noise. In this section, we isolate a simple and general sufficient condition for such a threshold in the infinite-bias limit. The key observation is that if the number of biased-noise logical operators grows sufficiently slowly compared with the scaling of their weight, then maximum-likelihood decoding has a $50\%$ threshold under i.i.d.\ $Z$ noise. As mentioned, this formulation provides a general proof for several previously known examples, including the XZZX code, the XY surface code, the Clifford-deformed color code, and the random Clifford-deformed surface code studied in earlier works~\cite{Tailoring2019, dua_CDSC,Huang_2023,san2023cellular}.

\subsection{Setting}

We consider the code-capacity noise model with i.i.d.\ Pauli noise with dephasing bias. Each data qubit independently suffers an $X$, $Y$, or $Z$ error with probabilities $(p_X,p_Y,p_Z)$, where $p_Z$ may be much larger than $p_X$ and $p_Y$. The infinite-bias limit corresponds to $p_X=p_Y=0$ and $p_Z>0$. Unless stated otherwise, our analytical results are for the code-capacity model, while our numerics include both finite-bias code-capacity and circuit-level noise.

Let $\mathsf{P}_n$ be the $n$-qubit Pauli group and let $\mathsf{S}\subset\mathsf{P}_n$ be the stabilizer group of an $[[n,k]]$ Pauli stabilizer code. A Clifford deformation is specified by a choice of single-qubit Clifford unitaries $\{U_i\}_{i=1}^n$, with $U=\bigotimes_{i=1}^n U_i$. The deformed stabilizer group is
\begin{equation}
\mathsf{S}' \;=\; U\,\mathsf{S}\,U^\dagger .
\end{equation}
Because conjugation by single-qubit Cliffords maps Pauli operators to Pauli operators, $\mathsf{S}'$ defines another Pauli stabilizer code with the same code parameters. The benefit of deformation arises under biased noise: by permuting the Pauli axes on each qubit, the deformation alters how the dominant physical error aligns with stabilizers and logical operators, which can significantly affect decoding performance.

\subsection{Union bound over pure-$Z$ logical operators}

Consider a stabilizer code $\mathsf{C}$ on $n$ qubits with stabilizer group $\mathsf{S}$. Let $\mathsf{S}_Z$ denote the subgroup of pure-$Z$ stabilizers, and let
\begin{align}
    \mathsf{S}_Z &:= \mathsf{P}_n^Z \cap \mathsf{S}, \\
    \mathcal{L}_Z
    &:=
    \left( \mathsf{P}_n^Z \cap \mathcal{C}(\mathsf{S}) \right) \setminus \mathsf{S}_Z,
\end{align}
be the set of non-trivial pure-$Z$ logical operators, where $\mathsf{P}_n^Z$ is the group of $n$-qubit Pauli-$Z$ operators and $\mathcal{C}(\mathsf{S})$ denotes the centralizer of $\mathsf{S}$. Given an operator $A \in \mathsf{P}_n^Z$, we write $\supp(A)$ for its support. Given two operators $A,B \in \mathsf{P}_n^Z$, we write $A \cap B \in \mathsf{P}_n^Z$ for the $Z$ operator supported on $\supp(A) \cap \supp(B)$, and $|A|:=|\supp(A)|$ for the Hamming weight of $A$.

In the infinite-bias model, a Pauli-$Z$ error $E$ is sampled i.i.d.\ with rate $p_Z$ and decoded by a most-likely error decoder. A decoding failure occurs whenever there exists a non-trivial pure-$Z$ logical operator $L \in \mathcal{L}_Z$ such that $LE$ is at least as likely as $E$. Since the noise is i.i.d.\ and purely $Z$ type, the likelihood is determined by Hamming weight. For $p_Z<1/2$, the condition $P(LE)\ge P(E)$ is equivalent to
\begin{align}
    |LE| \le |E|.
\end{align}

For a fixed $L$, write the error as $E=E_L E_{\bar L}$ where $E_L=E \cap L$. Since multiplication by $L$ flips errors only on $\supp(L)$, the condition $|LE|\le |E|$ implies that at least half of the qubits in $\supp(L)$ must be in error,
\begin{align}
    |E_L| \ge \frac{|L|}{2}.
\end{align}
Equivalently,
\begin{align}
    |E\cap L| \ge \frac{|L|}{2}.
\end{align}

Applying a union bound over all pure-$Z$ logical operators gives
\begin{align}
    P_{\mathrm{fail}}
    \le
    \sum_{L \in \mathcal{L}_Z}
    \Pr\!\left(\,|E\cap L| \ge \frac{|L|}{2}\right).
    \label{eq:simple_union_bound}
\end{align}

\paragraph{Basis of logical operators}
% Consider a stabilizer code $\mathsf{C}$ on $n$ qubits with stabilizer group $\mathsf{S}$.
% Let $\mathsf{S}_Z$ denote the group of pure $Z$ stabilizers, and $\mathcal{L}_Z$ the set of all non-trivial pure-$Z$ logical operators of $\mathsf{C}$, namely
% \begin{align}
%     \mathsf{S}_Z &:= \mathsf{P}_n^Z \cap \mathsf{S} \\
%     \mathcal{L}_Z
%     &:=
%     \left( \mathsf{P}_n^Z \cap \mathcal{C}(\mathsf{S}) \right) \setminus \mathsf{S}_Z,
% \end{align}
% where $\mathsf{P}_n^Z$ is the group of $n$-qubit Pauli-$Z$ operators and $\mathcal{C}(\mathsf{S})$ denotes the centralizer of $\mathsf{S}$. 
We define the notion of \textit{basis of logical operators} (BLO) as an independent generating set $\mathcal{B}_Z$ of $\mathsf{P}_n^Z \cap \mathcal{C}(\mathsf{S})$, such that every element is non-trivial, i.e. belongs to $\mathcal{L}_Z$. If $\{ S_1^Z,\ldots,S_{n_z}^Z \}$ is basis of $\mathsf{S}_Z$ and $\{L^Z_1,\ldots,L^Z_k\}$ a set of representatives for $k$ independent non-trivial cosets of $\left( \mathsf{P}_n^Z \cap \mathcal{C}(\mathsf{S}) \right) / \mathsf{S}_Z$, we can construct such a BLO as 
\begin{align}
\mathcal{B}_Z= \{L^Z_{1},\ldots,L^Z_{k},
L^Z_1 S^Z_1,\ldots, L^Z_1 S^Z_{n_z} \}.
\label{eq:BLO}
\end{align}
% For any error $E \in \mathsf{P}_n^Z \cap \mathcal{C}(\mathsf{S})$, we write $\decomp(E)$ for the set of basis elements present in the the basis decomposition of $E$ every element of For 
A natural generating set for $\mathsf{P}_n^Z \cap \mathcal{C}(\mathsf{S})$ would be $\{L^Z_1,\ldots,L^Z_k,S^Z_1,\ldots,S^Z_{n_z}\}$, but the stabilizer elements $S^Z_j$ are trivial logical operators. Since we require every BLO element to be a non-trivial pure-$Z$ logical operator, we instead replace each $S^Z_j$ with the product $L^Z_1 S^Z_j \in \mathcal{L}_Z$, yielding the basis in Eq.~\eqref{eq:BLO}. This set still generates the same group, since $S^Z_j = (L^Z_1)^{-1}(L^Z_1 S^Z_j)$, but every element now belongs to $\mathcal{L}_Z$.
Any error $E\in \mathcal{L}_Z$ can be expressed as a non-empty product of elements of $\mathcal{B}_Z$. 
% It is convenient to introduce a basis of logical operators (BLO), because it determines the total number of pure-$Z$ logical operators and also because we will use it for our general theorem. Let $\mathcal{B}_Z$ be a basis for the quotient vector space
% \begin{align}
%     (\mathsf{P}_n^Z \cap \mathcal{C}(\mathsf{S})).
% \end{align}
% Each basis element can be represented by a non-trivial pure-$Z$ logical operator, 

Hence,
\begin{align}
    |\mathcal{L}_Z| = 2^{|\mathcal{B}_Z|}-2^{n_z}.
\end{align}
where $2^{n_z}$ is the number of stabilizers consisting of $Z$s and $I$s alone since $n_z$ is the number of pure $Z$ stabilizer generators. 

% \AP{Currently, $\mathcal{B}_Z$ is a basis of the quotient, meaning that $\mathcal{B}_Z=k$, while $\mathcal{L}_Z$ is the space of all representatives. So that equality doesn't work.}

\paragraph{Counting pure-$Z$ logical operators}

Let $H=(H_X\mid H_Z)$ be a binary check matrix for the Clifford-deformed stabilizer group in symplectic form. A $Z$-type Pauli operator with binary vector representation $z\in\mathbb{F}_2^n$ commutes with all stabilizer generators if and only if
\begin{align}
    H_X z = 0 \pmod 2.
\end{align}
Thus the vector space of pure-$Z$ operators commuting with the stabilizers is $\ker(H_X)$, with
\begin{align}
    \dim(\ker(H_X)) = n-\rank(H_X).
    \label{eq:dimBLO}
\end{align}
This determines the dimension of the pure-$Z$ operators, which is also the number of BLO elements i.e. $|\mathcal{B}_Z|$. 
% \AP{Which is just $k$ as defined, so the rest of the paragraph doesn't tell us much.}.

\subsection{50\% threshold bound}

We now prove the threshold bound. We first prove it using a union bound involving all biased noise logical operators and then later prove it using a union bound involving a non-overlapping basis of logical operators. The first bound is inspired by the union-bound argument as Lemma 1 in Ref.~\cite{san2023cellular}, used there for a biased-noise-tailored color code which has a single biased noise logical. Here, we note this generalizes to any family of Clifford-deformed stabilizer codes for which the number of non-trivial pure-$Z$ logical operators remains subexponential, that is, grows slower than the scaling of their weight. 
In this way, the result captures in a common framework the previously known $50\%$ threshold examples, including XZZX surface code, XY surface code, and certain three-dimensional Clifford-deformed constructions with $50\%$ biased-noise thresholds~\cite{dua_CDSC,Huang_2023,san2023cellular}. We also generalize this result by considering overlapping BLO but with constraints on the scaling of the assignments of overlaps to the participating logicals.

% \AP{Should this be a Theorem rather than a Lemma? That's how it is referred later, and it justifies the use of the term "Corollary" later. Or we relabel the Corollary as a Theorem.} 
% \AD{yeah, let's use theorem and make sure it is referred to as theorem later}\JD{It is changed accordingly}

\begin{theorem}[Threshold bound from the growth rate of pure-$Z$ logicals]
\label{thm:subexp_logicals_threshold}

Consider a family of Clifford-deformed stabilizer codes in the infinite-bias i.i.d.\ $Z$-noise model. Suppose that

\begin{itemize}
\item every non-trivial pure-$Z$ logical operator $L$ satisfies
\begin{align}
|L| \ge K n^\alpha
\end{align}
for constants $K>0$ and $\alpha>0$,

\item the total number of pure-$Z$ logical operators satisfies
\begin{align}
|\mathcal{L}_Z| \le \exp(o(n^\alpha)).
\end{align}
\end{itemize}

Then for any $p_Z<\tfrac12$, the logical failure probability of maximum-likelihood decoding satisfies
\begin{align}
P_{\mathrm{fail}}(n) \le \exp(-\Omega(n^\alpha)),
\end{align}
and therefore the code family has a $50\%$ infinite-bias threshold.

\end{theorem}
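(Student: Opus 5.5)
The plan is to start from the union bound already derived in Eq.~\eqref{eq:simple_union_bound},
\begin{align}
P_{\mathrm{fail}} \le \sum_{L\in\mathcal{L}_Z} \Pr\!\left(|E\cap L|\ge \tfrac{|L|}{2}\right),
\end{align}
and bound each summand uniformly by a quantity exponentially small in $n^\alpha$. For a fixed $L$ with $w=|L|$, the random variable $|E\cap L|$ is distributed as $\Bin(w,p_Z)$, since the errors are i.i.d.\ on the qubits of $\supp(L)$. I would therefore apply a Chernoff--Hoeffding bound for the upper tail:
\begin{align}
\Pr\!\left(\Bin(w,p_Z)\ge w/2\right)\le \exp\!\left(-w\,D(\tfrac12\|p_Z)\right) = \left(2\sqrt{p_Z(1-p_Z)}\right)^{w}.
\end{align}
Here $D$ is the binary relative entropy. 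The key fact is that $c:=D(\tfrac12\|p_Z)=-\ln\!\left(2\sqrt{p_Z(1-p_Z)}\right)$ is strictly positive for every $p_Z<\tfrac12$. Alternatively, I could count subsets directly, $\sum_{j\ge w/2}\binom{w}{j}p_Z^j(1-p_Z)^{w-j}\le 2^w p_Z^{w/2}(1-p_Z)^{w/2}$, which gives the same bound with no concentration inequality.

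Since $c>0$ and the bound is decreasing in $w$, the first hypothesis $|L|\ge Kn^\alpha$ gives every summand at most $\exp(-cKn^\alpha)$. By the second hypothesis,
\begin{align}
P_{\mathrm{fail}}\le |\mathcal{L}_Z|\,e^{-cKn^\alpha}\le \exp\!\left(o(n^\alpha)-cKn^\alpha\right).
\end{align}
For $n$ large enough the $o(n^\alpha)$ term is below $\tfrac{cK}{2}n^\alpha$, so $P_{\mathrm{fail}}\le \exp(-\tfrac{cK}{2}n^\alpha)=\exp(-\Omega(n^\alpha))$. The constant depends on $p_Z$ but is positive for every fixed $p_Z<1/2$, so $P_{\mathrm{fail}}\to 0$ for all $p_Z<1/2$, which is the claimed $50\%$ threshold. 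For the converse direction, $p_Z\ge 1/2$ is already excluded information-theoretically, as noted in the introduction.

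There is no serious obstacle. The only points needing care are the following.
\begin{itemize}
\item The failure event must be correctly reduced to $|E\cap L|\ge |L|/2$. This was done before the statement and holds with ties counted as failures, which is conservative.
\item The tail exponent must be uniform over all $L$. This holds because the bound depends on $L$ only through $|L|$ and is monotone in it.
\item The union bound over $\mathcal{L}_Z$ should be recognised as the step that consumes the subexponential counting hypothesis. This is why $|\mathcal{L}_Z|$ must be $\exp(o(n^\alpha))$ rather than merely $\exp(O(n^\alpha))$; with an $O(n^\alpha)$ count the threshold would instead be set by comparing that growth constant with $cK$.
\end{itemize}
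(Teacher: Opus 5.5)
Your proposal is correct and follows essentially the same route as the paper: the union bound of Eq.~\eqref{eq:simple_union_bound}, a binomial tail bound that is uniform over $L$ because $|L|\ge Kn^\alpha$, and absorption of the $\exp(o(n^\alpha))$ prefactor into the exponent. The only differences are cosmetic. You use the relative-entropy form $\bigl(2\sqrt{p_Z(1-p_Z)}\bigr)^{|L|}$ (or direct subset counting) instead of the paper's Hoeffding bound $\exp[-2|L|(\tfrac12-p_Z)^2]$, and you make the final absorption step explicit with the constant $cK/2$, which the paper leaves implicit.
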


\begin{proof}

Fix $L\in\mathcal{L}_Z$. Under i.i.d.\ $Z$ noise, the number of errors on $\supp(L)$ is
\begin{align}
X_L \sim \mathrm{Bin}(|L|,p_Z).
\end{align}

A necessary condition for $L$ to cause decoding failure is
\begin{align}
X_L \ge \frac{|L|}{2}.
\end{align}

Since $p_Z<\tfrac12$, the threshold $|L|/2$ lies above the mean $p_Z|L|$. Applying a Chernoff–Hoeffding bound gives
\begin{align}
\Pr\!\left(X_L \ge \frac{|L|}{2}\right)
\le
\exp\!\left[-2|L|\left(\tfrac12-p_Z\right)^2\right].
\end{align}

Using $|L|\ge K n^\alpha$, we obtain
\begin{align}
\Pr\!\left(X_L \ge \frac{|L|}{2}\right)
\le
\exp(-\Omega(n^\alpha)).
\end{align}

Summing over all pure-$Z$ logical operators using Eq.~\eqref{eq:simple_union_bound} gives
\begin{align}
P_{\mathrm{fail}}
&\le
|\mathcal{L}_Z|\,\exp(-\Omega(n^\alpha)).
\end{align}

Since $|\mathcal{L}_Z| \le \exp(o(n^\alpha))$, the prefactor cannot overcome the exponential suppression. Therefore
\begin{align}
P_{\mathrm{fail}}(n)
\le
\exp(-\Omega(n^\alpha)),
\end{align}
which vanishes as $n\to\infty$ for every $p_Z<\tfrac12$.
\end{proof}

A particularly simple case occurs when the number of pure-$Z$ logical operators grows polynomially with system size. If
\begin{align}
|\mathcal{L}_Z| \le n^\beta
\end{align}
for some constant $\beta$, then the prefactor in the union bound is polynomial and therefore negligible compared with $\exp(\Omega(n^\alpha))$. In this case, the above theorem immediately implies a $50\%$ infinite-bias threshold.

\subsection{A bound from a non-overlapping basis of logical operators}

The previous lemma sums over all pure-$Z$ logical operators. When there exists a BLO whose supports are pairwise disjoint, one can replace this by a union bound over BLO elements only.

\begin{theorem}[Reduction to a non-overlapping BLO]
\label{lem:nonoverlap_BLO_bound}
Let $\mathcal{B}_Z$ be a BLO such that
\begin{align}
\supp(L\cap L')=\varnothing
\qquad
\text{for all distinct }L,L'\in\mathcal{B}_Z.
\end{align}
Then
\begin{align}
P_{\mathrm{fail}}
\le
\sum_{L\in\mathcal{B}_Z}
\Pr\!\left(\,|E\cap L|\ge \frac{|L|}{2}\right).
\label{eq:nonoverlap_BLO_union}
\end{align}
\end{theorem}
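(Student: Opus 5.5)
The plan is to show that any decoding failure forces at least one \emph{basis} element $L\in\mathcal{B}_Z$ to carry at least half of its support in error, and then take a union bound over $\mathcal{B}_Z$ instead of over all of $\mathcal{L}_Z$.

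First, recall the failure criterion from the discussion preceding Eq.~\eqref{eq:simple_union_bound}. Maximum-likelihood decoding fails only if some $M\in\mathcal{L}_Z$ satisfies $|ME|\le|E|$. Equivalently, $|E\cap M|\ge |M|/2$.

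Next, expand $M$ in the BLO. Since $\mathcal{B}_Z$ is an independent generating set of $\mathsf{P}_n^Z\cap\mathcal{C}(\mathsf{S})$, the operator $M$ is a unique non-empty product $M=\prod_{L\in T}L$ for some subset $T\subseteq\mathcal{B}_Z$. The supports of distinct BLO elements are pairwise disjoint, so there are no cancellations in this product. Hence $\supp(M)=\bigsqcup_{L\in T}\supp(L)$, and
\begin{align}
|M|=\sum_{L\in T}|L|,\qquad |E\cap M|=\sum_{L\in T}|E\cap L| .
\end{align}

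The key step is an averaging argument. Suppose, for contradiction, that $|E\cap L|<|L|/2$ for every $L\in T$. Summing over $T$ gives $|E\cap M|<|M|/2$, which contradicts the failure condition. Therefore at least one $L\in T\subseteq\mathcal{B}_Z$ satisfies $|E\cap L|\ge|L|/2$.

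It follows that the failure event is contained in $\bigcup_{L\in\mathcal{B}_Z}\{|E\cap L|\ge|L|/2\}$. A union bound then gives Eq.~\eqref{eq:nonoverlap_BLO_union}.

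The only delicate point is the additivity $|M|=\sum_{L\in T}|L|$, which needs disjointness to rule out cancellation when the Pauli-$Z$ operators are multiplied. Everything else is elementary. Note that $T$ may contain several elements, or may consist of a single BLO element, and the averaging argument covers both cases uniformly.
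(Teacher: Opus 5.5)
Your proposal is correct and follows the paper's proof essentially step for step. Like the paper, you expand the failing logical $M$ over a nonempty subset of the BLO, use pairwise disjointness to get additivity of $|M|$ and $|E\cap M|$, derive a contradiction if every factor had fewer than half its support in error, and conclude with a union bound over $\mathcal{B}_Z$. Your remark that $|ME|\le|E|$ is actually equivalent to $|E\cap M|\ge|M|/2$ (since $|ME|=|E|+|M|-2|E\cap M|$) slightly sharpens the paper's ``implies'' but does not change the argument.
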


\begin{proof}
Let $E \in \mathsf{P}_n^Z$ be an error that causes decoding failure. Then there exists a non-trivial pure-$Z$ logical operator
\[
M=\prod_{L\in S} L
\]
for some nonempty subset $S\subseteq \mathcal{B}_Z$, such that
\[
|E\cap M| \ge \frac{|M|}{2}.
\]
Because the supports of the BLO elements are pairwise disjoint, we have
\[
\supp(M)=\bigsqcup_{L\in S}\supp(L)
\qquad\text{and}\qquad
|M|=\sum_{L\in S}|L|.
\]
Hence
\[
|E\cap M|
=
\sum_{L\in S}|E\cap L|.
\]
If for every $L\in S$ one had
\[
|E\cap L| < \frac{|L|}{2},
\]
then summing over $L\in S$ would give
\[
|E\cap M|
<
\frac{1}{2}\sum_{L\in S}|L|
=
\frac{|M|}{2},
\]
a contradiction. Therefore, there exists at least one $L\in S$ such that
\[
|E\cap L| \ge \frac{|L|}{2}.
\]
This proves that every failure event is contained in the union of the events on the right-hand side of Eq.~\eqref{eq:nonoverlap_BLO_union}. Taking probabilities and applying the union bound gives the claim of the lemma. 
\end{proof}

\begin{corollary}[Threshold bound from a non-overlapping BLO]
\label{cor:nonoverlap_BLO_threshold}
Consider a family of Clifford-deformed stabilizer codes in the infinite-bias i.i.d.\ $Z$-noise model. Suppose that for each blocklength $n$ there exists a non-overlapping BLO $\mathcal{B}_Z(n)$ such that every $L\in \mathcal{B}_Z(n)$ satisfies        $|L| \ge K n^\alpha$ for constants $K>0$ and $\alpha>0$, 
then for any $p_Z<\tfrac12$, the logical failure probability satisfies
\begin{align}
    P_{\mathrm{fail}}(n)\le \exp(-\Omega(n^\alpha)),
\end{align}
and hence the code family has a $50\%$ infinite-bias threshold.
\end{corollary}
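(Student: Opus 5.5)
The plan is to combine Theorem~\ref{lem:nonoverlap_BLO_bound} with the same Chernoff--Hoeffding estimate used in the proof of Theorem~\ref{thm:subexp_logicals_threshold}. The key point is that the union bound now runs over the basis elements only, rather than over all of $\mathcal{L}_Z$. Disjointness of supports then automatically forces the number of basis elements to be polynomially bounded.

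First, fix $n$ and $p_Z<\tfrac12$, and invoke Theorem~\ref{lem:nonoverlap_BLO_bound} for the given non-overlapping BLO $\mathcal{B}_Z(n)$. This gives
\begin{align}
P_{\mathrm{fail}}(n)\le \sum_{L\in\mathcal{B}_Z(n)}\Pr\!\left(|E\cap L|\ge \tfrac{|L|}{2}\right).
\end{align}
For each $L$, the quantity $|E\cap L|\sim\Bin(|L|,p_Z)$, and the threshold $|L|/2$ exceeds the mean $p_Z|L|$. Hoeffding's inequality therefore gives a bound of $\exp[-2|L|(\tfrac12-p_Z)^2]\le \exp[-2K n^\alpha(\tfrac12-p_Z)^2]$, using the hypothesis $|L|\ge Kn^\alpha$.

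Second, I will bound the number of terms. The supports of distinct elements of $\mathcal{B}_Z(n)$ are pairwise disjoint subsets of the $n$ qubits, and each has size at least $Kn^\alpha$. Hence
\begin{align}
|\mathcal{B}_Z(n)|\,Kn^\alpha\le \sum_{L\in\mathcal{B}_Z(n)}|L|\le n,
\end{align}
so $|\mathcal{B}_Z(n)|\le n^{1-\alpha}/K$. (Incidentally, this shows the hypothesis can only hold with $\alpha\le 1$.) Combining the two steps yields $P_{\mathrm{fail}}(n)\le K^{-1}n^{1-\alpha}\exp[-2K(\tfrac12-p_Z)^2n^\alpha]$. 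Since a polynomial prefactor is $\exp(O(\log n))=\exp(o(n^\alpha))$, this is $\exp(-\Omega(n^\alpha))$ and tends to zero for every $p_Z<\tfrac12$, which is the claimed $50\%$ threshold.

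There is no serious obstacle. The only step requiring a small idea is the counting bound on $|\mathcal{B}_Z(n)|$, which replaces the subexponential-count hypothesis of Theorem~\ref{thm:subexp_logicals_threshold}. Everything else is a direct reuse of Theorem~\ref{lem:nonoverlap_BLO_bound} and the Hoeffding estimate. One should just note that the implicit constant in $\Omega(n^\alpha)$ depends on $p_Z$ (through $(\tfrac12-p_Z)^2$) and on $K$, but not on $n$.
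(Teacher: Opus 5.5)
Your proposal is correct and matches the paper's proof step for step. The paper likewise applies the BLO-only union bound of Theorem~\ref{lem:nonoverlap_BLO_bound}, bounds each term by Hoeffding as $\exp[-2|L|(\tfrac12-p_Z)^2]$, and uses disjointness of supports to get $|\mathcal{B}_Z(n)|\le n/(Kn^\alpha)=O(n^{1-\alpha})$, so the prefactor is only polynomial and is absorbed into $\exp(-\Omega(n^\alpha))$.
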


\begin{proof}
By ~\cref{lem:nonoverlap_BLO_bound},
\begin{align}
P_{\mathrm{fail}}
\le
\sum_{L\in \mathcal{B}_Z(n)}
\Pr\!\left(|E\cap L|\ge \frac{|L|}{2}\right).
\end{align}
For each $L\in \mathcal{B}_Z(n)$, the random variable
\begin{align}
X_L := |E\cap L|
\end{align}
is distributed as $\mathrm{Bin}(|L|,p_Z)$. Since $p_Z<\tfrac12$, a Chernoff--Hoeffding bound gives
\begin{align}
\Pr\!\left(X_L\ge \frac{|L|}{2}\right)
\le
\exp\!\left[-2|L|\left(\tfrac12-p_Z\right)^2\right].
\end{align}
Using $|L|\ge K n^\alpha$, we obtain
\begin{align}
\Pr\!\left(X_L\ge \frac{|L|}{2}\right)
\le
\exp(-\Omega(n^\alpha)).
\end{align}
Therefore
\begin{align}
P_{\mathrm{fail}}
&\le
|\mathcal{B}_Z(n)|\,\exp(-\Omega(n^\alpha)).
\end{align}

Since the supports of the BLO elements are pairwise disjoint and each has weight at least $K n^\alpha$, we have
\begin{align}
    |\mathcal{B}_Z(n)| \le \frac{n}{K n^\alpha} = O(n^{1-\alpha}),
\end{align}
so the prefactor is at most polynomial and therefore negligible compared with the stretched-exponential suppression $\exp(-\Omega(n^\alpha))$ and hence,
\begin{align}
P_{\mathrm{fail}}(n)\le \exp(-\Omega(n^\alpha)).
\end{align}
This proves the claim.
\end{proof}

Thus, if a family admits a non-overlapping BLO whose elements all have weight at least $K n^\alpha$, then the same Chernoff--Hoeffding argument yields a $50\%$ infinite-bias threshold. This is precisely the mechanism realized in examples such as XZZX on a torus.

\paragraph*{Hybrid extension.}
The same reasoning extends to a mixed situation in which only part of the BLO is non-overlapping. Suppose the BLO decomposes as
\[
\mathcal{B}_Z(n)=\mathcal{B}_Z^{\mathrm{dis}}(n)\sqcup \mathcal{B}_Z^{\mathrm{ov}}(n),
\]
where the elements of $\mathcal{B}_Z^{\mathrm{dis}}(n)$ are pairwise disjoint, while overlaps are allowed within $\mathcal{B}_Z^{\mathrm{ov}}(n)$. Then one may bound decoding failure by a sum of two contributions: a union bound over basis elements in the disjoint sector, and a union bound over all pure-$Z$ logical operators generated by the overlapping sector. If every basis element has weight at least $K n^\alpha$, if the number of logical operators generated by $\mathcal{B}_Z^{\mathrm{ov}}(n)$ satisfies
\[
|\mathcal{L}_Z^{\mathrm{ov}}(n)|\le \exp(o(n^\alpha)),
\]
then considering $|\mathcal{B}_Z^{\mathrm{dis}}(n)|=O(n)$, the contributions from both parts of $|\mathcal{B}_Z(n)$ are bounded by $\exp(-\Omega(n^\alpha))$, and hence
\[
P_{\mathrm{fail}}(n)\le \exp(-\Omega(n^\alpha)).
\]
Thus, a $50\%$ infinite-bias threshold still follows provided the overlapping sector generates only a subexponential number of logical operators.

A more general extension, in which overlapping BLO elements are handled by assigning overlap penalties across basis elements, is deferred to ~\cref{app:overlap_covering}.

\paragraph*{Remark on the non-zero-rate case}

The above sufficient conditions are not expected to hold for non-zero-rate families. Indeed, if the number of encoded qubits scales extensively, $k=\Theta(n)$, and hence contributes $k=\Theta(n)$ independent pure-$Z$ logical operators, then the total number of non-trivial pure-$Z$ logical operators grows exponentially,
\begin{align}
    |\mathcal{L}_Z|=\exp(\Theta(n)),
\end{align}
so the subexponential-growth assumption in ~\cref{thm:subexp_logicals_threshold} fails.

Moreover, the non-overlapping BLO condition is also incompatible with a non-zero rate. If $\mathcal{B}_Z$ is a non-overlapping BLO, then the supports of its elements are pairwise disjoint, so
\begin{align}
    \sum_{L\in\mathcal{B}_Z}|L| \le n.
\end{align}
If, in addition, every BLO element has weight at least $K n^\alpha$, then
\begin{align}
    |\mathcal{B}_Z| \le \frac{n}{K n^\alpha}=\mathcal{O}(n^{1-\alpha}).
\end{align}
Thus, to accommodate an extensive number of independent pure-$Z$ logical directions, one would need $\alpha=0$, i.e.\ BLO elements of bounded weight, which is incompatible with the distance growth required for a $50\%$ threshold. Equivalently, a non-zero-rate family cannot simultaneously have an extensive pure-$Z$ logical space and a non-overlapping BLO whose elements all have growing weight.

Therefore, both threshold mechanisms discussed above are naturally zero-rate mechanisms: non-zero-rate families are not expected to satisfy either the subexponential bound on the total number of pure-$Z$ logical operators or the existence of a non-overlapping BLO with sufficiently sparse support growth.
This is consistent with the capacity of the binary symmetric channel: at rate $R>0$, the Shannon limit implies that reliable communication is impossible once the crossover probability exceeds $h^{-1}(1-R)<\tfrac{1}{2}$, where $h$ is the binary entropy function~\cite{shannon}. In particular, a family with $k=\Theta(n)$ cannot achieve a $50\%$ threshold under i.i.d.\ bit-flip (or $Z$) noise.

\section{Tile codes and Clifford deformation}
\label{sec: Clifford Deformed Tile code}

In ~\cref{sec: Clifford-deformed zero-rate LDPC codes}, we derived sufficient conditions for a $50\%$ threshold for Clifford-deformed LDPC codes in the infinite-bias limit. In this section, we study concrete families of tile codes under biased Pauli noise and evaluate their decoding performance through numerical simulations and an analysis of their pure-$Z$ basis logical operators (BLOs).

Our primary focus is on Clifford-deformed variants of the planar, geometrically local tile codes introduced in Refs.~\cite{steffan2025tile,qv65-vmzr}. These codes have bounded-weight local stabilizers and exhibit favorable efficiency metrics $kd^2/n$ compared to the surface code. Their planar layout and local checks also make them naturally compatible with planar hardware platforms such as superconducting architectures. We estimate thresholds at infinite bias and examine whether the corresponding BLOs satisfy the structural hypotheses of ~\cref{thm:subexp_logicals_threshold}, providing evidence for $50\%$ thresholds in the infinite-bias setting for suitable deformations.

The remainder of this section is organized as follows. In ~\cref{Tile Code description}, we define the tile-code families considered here, including both open and periodic boundary conditions. ~\cref{subsec: phase_diagram} presents the phase diagram of randomly Clifford-deformed periodic tile codes. Translationally invariant (TI) deformations and their decoding performance are summarized in ~\cref{TI codes}, where we also describe a weight-reduction procedure for a refined periodic subfamily of linearly deformed tile codes. Finally, ~\cref{sec:mwblo} studies BLO structure in support of ~\cref{thm:subexp_logicals_threshold}.

\subsection{Tile code families}
\label{Tile Code description}

To evaluate thresholds, we use a family of tile codes with increasing distance and fixed logical dimension. The tile codes in Refs.~\cite{steffan2025tile,qv65-vmzr} were introduced as fully planar open-boundary qLDPC codes with geometrically local weight-6 stabilizers and code families with logical dimensions $6 \le k \le 13$.

Here, we focus on a family with fixed logical dimension $k=8$ and increasing distance. An example is the $[[288,8,12]]$ open-boundary tile code shown in ~\cref{fig:tile-code-all}. In this representation, vertices correspond to stabilizer generators and edges correspond to physical qubits. Bulk stabilizers have weight six and are supported within a $3\times 3$ tile pattern; boundary stabilizers arise from truncations of this bulk pattern. Among the 16 equivalent bulk configurations identified in Ref.~\cite{qv65-vmzr}, each yields the same code parameters $[[288,8,12]]$. For the stabilizer set used here, the checks are independent, so the number of encoded qubits is
\begin{equation}
k = n - \mathrm{rank}(S)=8.
\end{equation}

Extending the bulk pattern to an $L\times L$ lattice yields the following open-boundary family:
\begin{align*}
    L=6:&\quad [[72,8,4]],\quad kd^2/n = 1.78,\\
    L=8:&\quad [[128,8,6]],\quad kd^2/n = 2.25,\\
    L=10:&\quad [[200,8,9]],\quad kd^2/n = 3.24,\\
    L=12:&\quad [[288,8,12]],\quad kd^2/n = 4.00,\\
    L=13:&\quad [[338,8,13]],\quad kd^2/n = 4.00,\\
    L=14:&\quad [[392,8,15]],\quad kd^2/n = 4.59.
\end{align*}
Each code in this family has
\begin{equation}
    n = 2L^2, \qquad k = 8,
\end{equation}
and is LDPC since stabilizer weights remain bounded independently of $n$.

In~\cref{subsec: phase_diagram}, to mitigate finite-size boundary effects in threshold estimation, we also consider periodic boundary conditions with the same bulk stabilizers. In this periodic setting, lattice dimensions are restricted to $L_x, L_y\in 7\mathbb{Z}$, and the code encodes six logical qubits (as opposed to eight in the open-boundary case). We use these periodic codes to study random single-qubit Clifford deformations and to map out the corresponding phase diagram. For the TI deformations in ~\cref{TI codes}, we return to the open-boundary family above. For the weight-reduction study, we use a refined periodic subfamily specified in ~\cref{app:weight-reduction}.

\begin{figure*}[t]
    \centering
    \begin{subfigure}[b]{0.48\textwidth}
        \centering
        \includegraphics[width=\linewidth,height=0.6\linewidth,keepaspectratio]{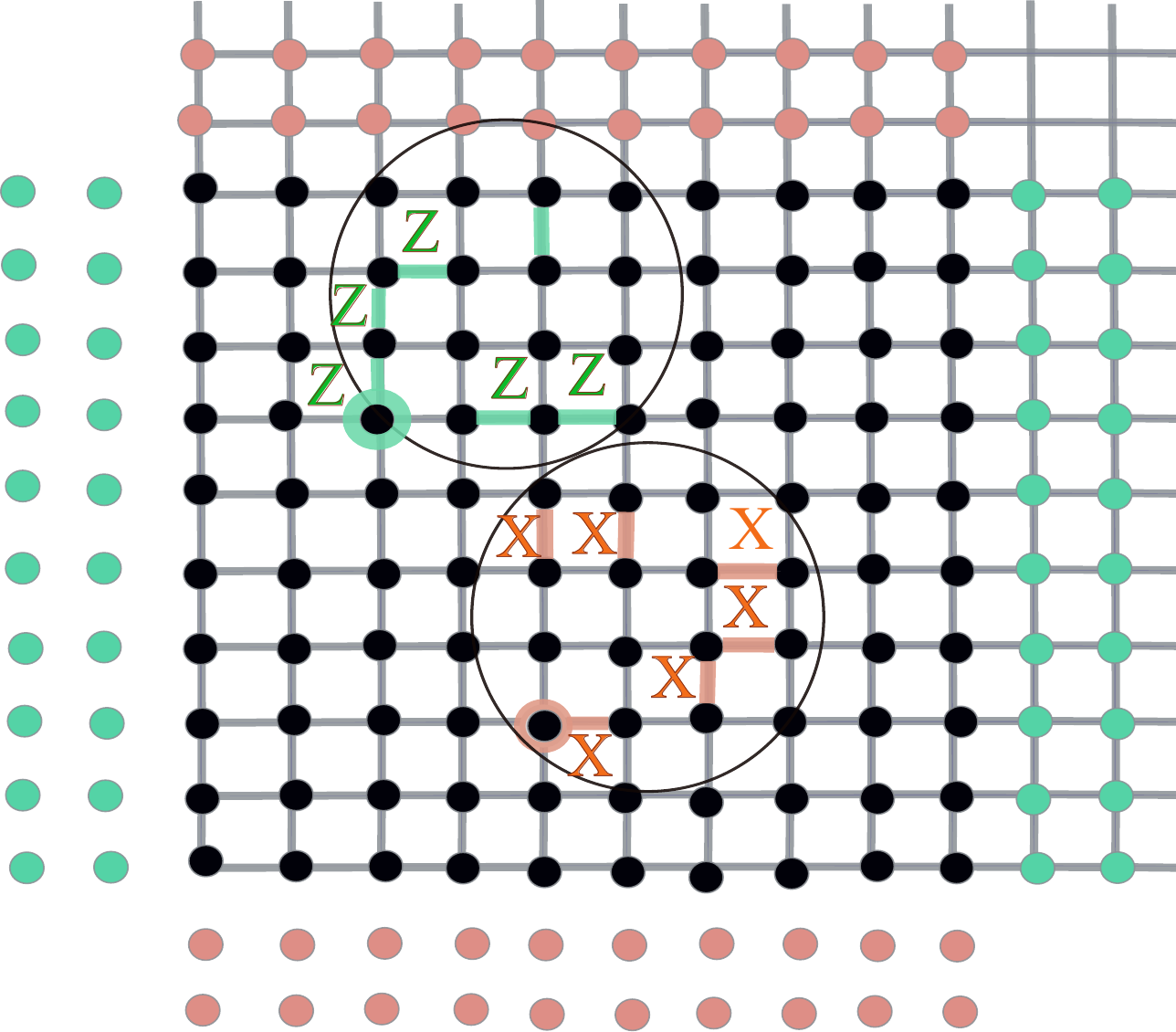}
        \captionsetup{justification=raggedright,singlelinecheck=false}
        \caption{Bulk stabilizer generators ($X$ and $Z$ type) of the $[[288,8,12]]$ tile code.}
        \label{fig:bulk}
    \end{subfigure}\hfill
    \begin{subfigure}[b]{0.48\textwidth}
        % \centering
        \includegraphics[width=\linewidth,height=0.6\linewidth,keepaspectratio]{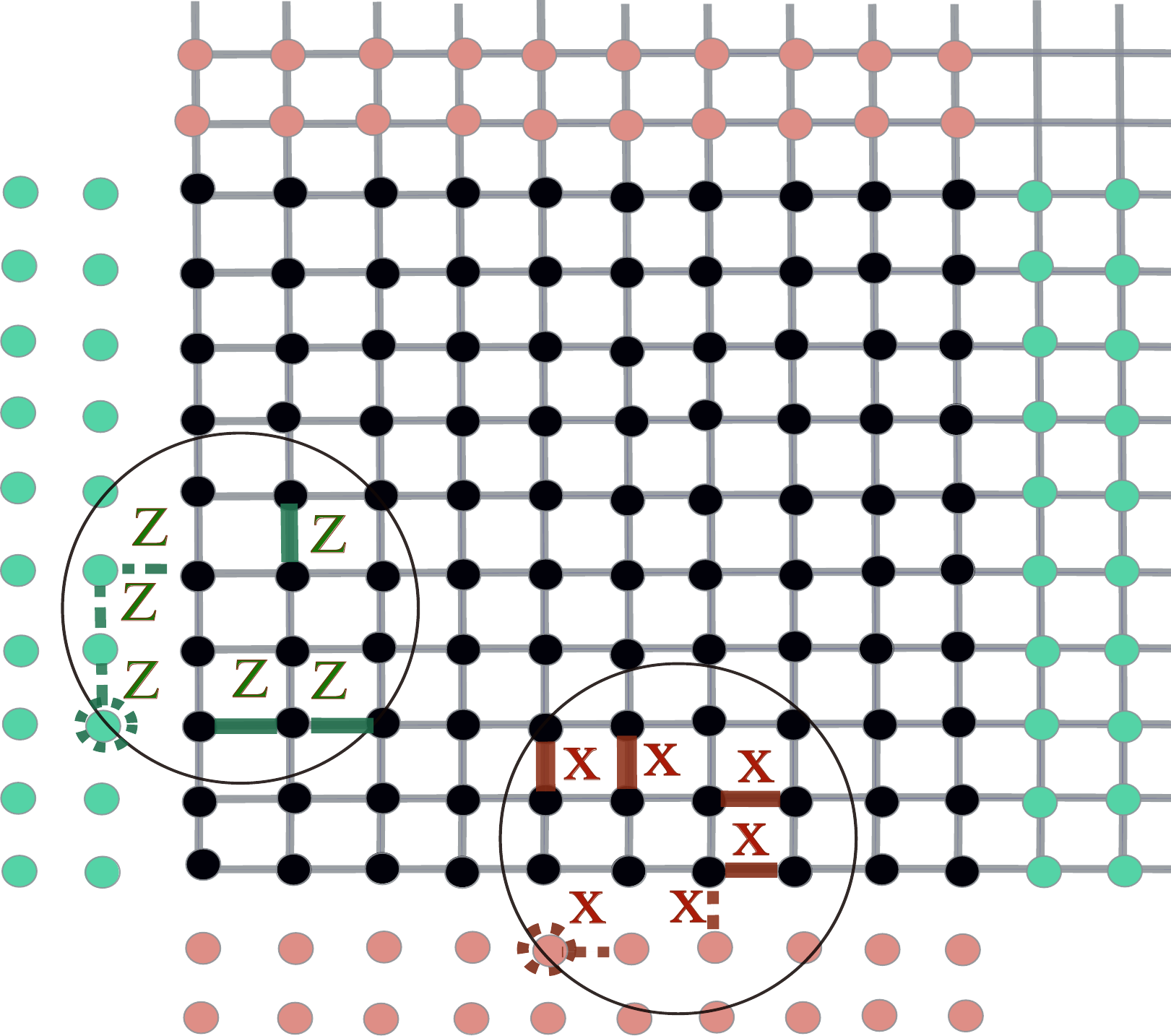}
        \captionsetup{justification=raggedright,singlelinecheck=false}
        \caption{Boundary stabilizer generators ($X$ and $Z$ type) of the $[[288,8,12]]$ tile code.}
        \label{fig:boundary}
    \end{subfigure}

    \vspace{0.3cm}

    \begin{subfigure}[b]{0.48\textwidth}
        \centering
       \includegraphics[width=\linewidth,height=0.6\linewidth,keepaspectratio]{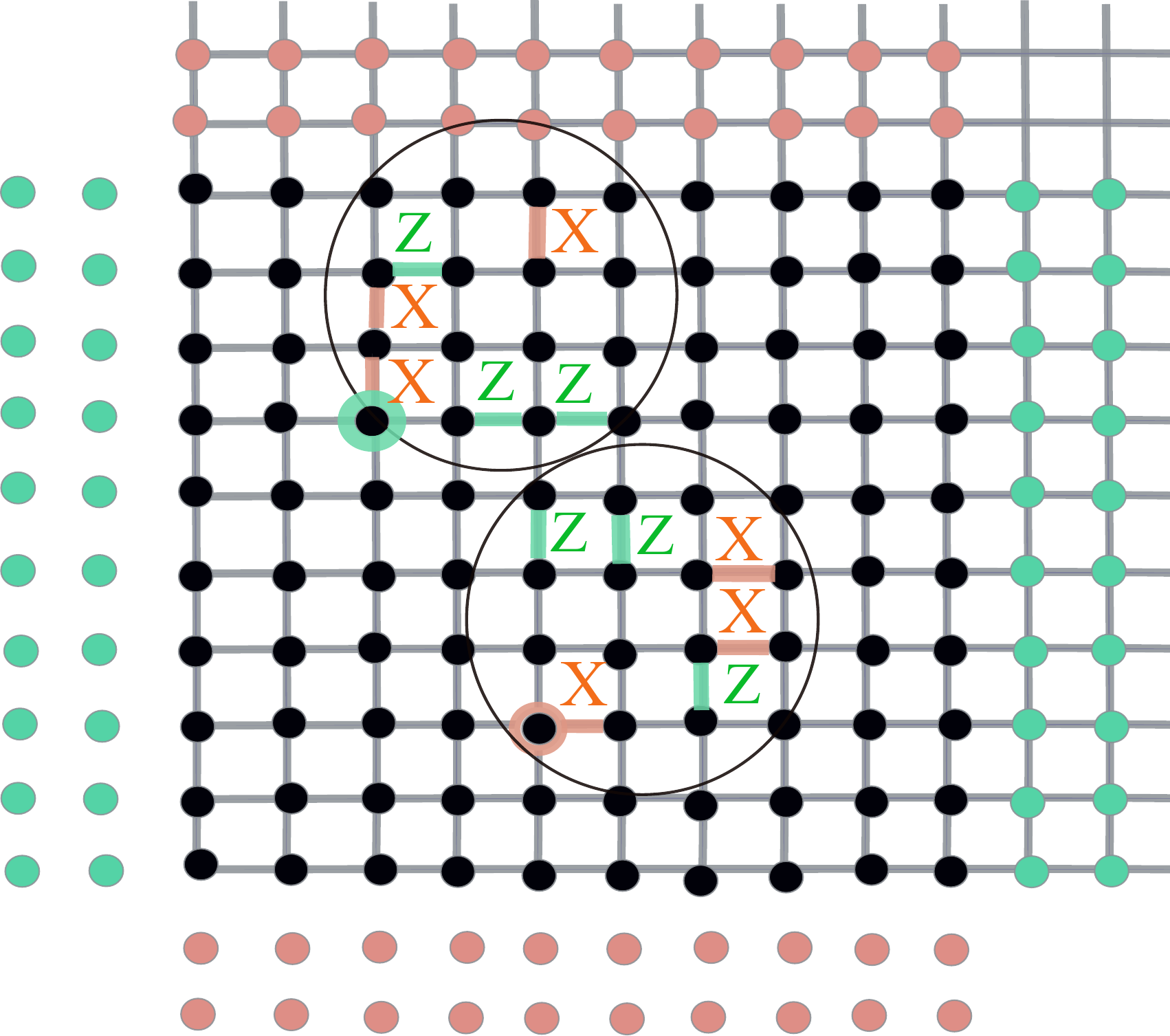}
       \captionsetup{justification=raggedright,singlelinecheck=false}
        \caption{Linear-deformed open tile code, obtained by applying $H$ on all vertical bonds.}
        \label{fig:linear-deformation}
    \end{subfigure}\hfill
    \begin{subfigure}[b]{0.48\textwidth}
        \centering
        \includegraphics[width=\linewidth,height=0.6\linewidth,keepaspectratio]{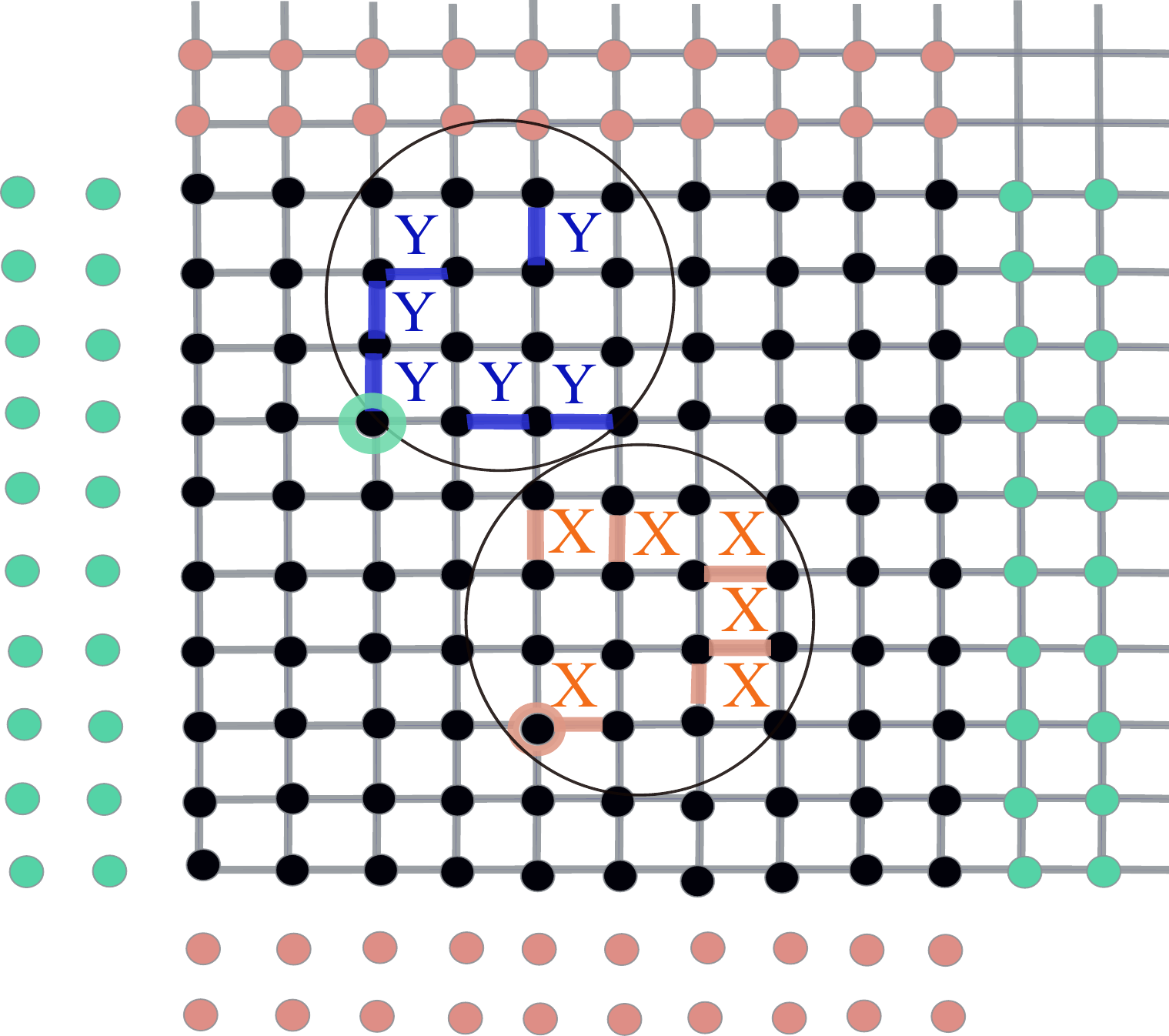}
        \captionsetup{justification=raggedright,singlelinecheck=false}
        \caption{$XY$-deformed open tile code, obtained by applying $SH$ on all qubits.}
        \label{fig:xy-deformation}
    \end{subfigure}

    \captionsetup{justification=justified,singlelinecheck=false}
    \caption{Stabilizers of the $[[288,8,12]]$ tile code and its Clifford-deformed variants. In the undeformed tile code, qubits reside on the edges of the lattice, and dots denote stabilizer generators. Boundary dots correspond to single-type stabilizers: mint-green dots denote $Z$-type stabilizers and orange dots denote $X$-type stabilizers. Bulk black dots indicate locations where both $X$- and $Z$-type stabilizers are present. The support of each stabilizer on nearby qubits is illustrated pictorially. In the bulk, stabilizer generators have weight six, acting on three vertical and three horizontal qubits, as in (a). At the boundaries, truncated geometry produces stabilizers of reduced weight (2, 3, or 4); examples of a weight-3 $Z$-type stabilizer and a weight-4 $X$-type stabilizer are shown in (b). Panels (c) and (d) show bulk stabilizers of Clifford-deformed tile codes with $n=288$ and open boundary: (c) a linear deformation obtained by applying $H$ on all vertical bonds, and (d) an $XY$ deformation obtained by applying $SH$ on all qubits.}
    \label{fig:tile-code-all}
\end{figure*}

\subsection{Random Clifford-deformed tile codes}
\label{subsec: phase_diagram}
In Ref.~\cite{dua_CDSC}, random Clifford-deformed surface codes were studied, and a phase diagram of thresholds for typical realizations was obtained. A central observation is that, under biased noise, code performance can be optimized by tuning the rates of Pauli-permutation deformations, specifically the $X\leftrightarrow Z$ and $Y\leftrightarrow Z$ exchanges. The mechanism is transparent at the level of the decoding graph: a CSS code under pure $Z$ noise has a decoding graph with two disconnected components, and $Y$-type deformations couple these components, converting what would otherwise be independent decoding problems into a single connected one. Crucially, this mechanism is a property of the Clifford-deformation structure itself under biased noise, not just of the surface-code geometry, and should therefore extend to any CSS code.
 
In this work, we test this idea on LDPC codes precisely. We construct random Clifford-deformed tile codes (CDTCs) by starting from a CSS tile code and independently applying, to each qubit, one of the single-qubit Clifford operators
\[
H,\qquad HSH,\qquad I
\]
with probabilities $\Pi_{XZ}$, $\Pi_{YZ}$, and $1-\Pi_{XZ}-\Pi_{YZ}$, respectively, where $S=\sqrt{Z}$.
Conjugation by $H$ swaps $X\leftrightarrow Z$, while conjugation by $HSH$ swaps $Y\leftrightarrow Z$ (up to phases). Thus $(\Pi_{XZ},\Pi_{YZ})$ specifies the probabilities of local Pauli-axis permutations.
We explore the full two-parameter phase space
\begin{equation}
\Pi_{XZ}\ge 0,\qquad \Pi_{YZ}\ge 0,\qquad \Pi_{XZ}+\Pi_{YZ}\le 1.
\end{equation}

\begin{figure}[t]
\centering
\includegraphics[width=1\columnwidth]{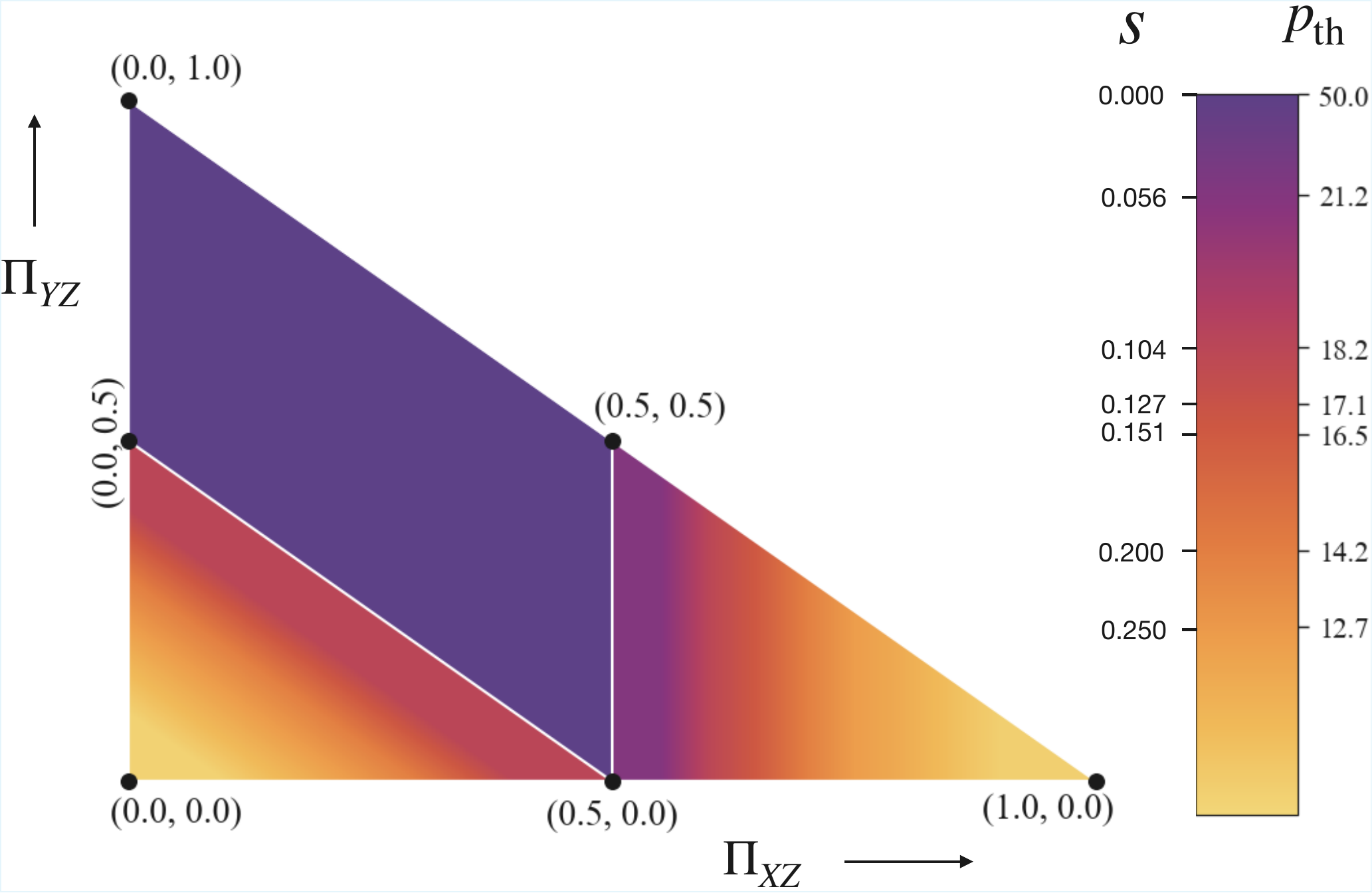}
\captionsetup{justification=justified,singlelinecheck=false}
\caption{Phase diagram of periodic tile codes is shown under random Clifford deformation at infinite bias ($\eta\to\infty$). The axes $(\Pi_{XZ},\Pi_{YZ})$ specify the probabilities of applying local Clifford permutations that swap $X\leftrightarrow Z$ and $Y\leftrightarrow Z$, respectively. Colors indicate regions with different estimated thresholds. The rhombic region bounded by the phase points $(0.5,0)$, $(0.5,0.5)$, $(0,1)$, and $(0,0.5)$ exhibits a $50\%$ threshold. Along the white boundary, we do not observe a sharp crossing in the threshold plot for most of the phase points. Moving away from this regime, the threshold decreases monotonically. Iso-threshold contours appear on the two sides of the rhombic regime, forming vertical lines on the right side and diagonal lines on the left side. The slope $s$ in the scaling relation $|\mathcal{B}_Z(n)| = sn + c$ characterizes the growth of basis logical operators with system size: larger $s$ corresponds to a lower threshold, while the $50\%$ regime corresponds to $s=0$.} 
\label{fig:phase_diagram}
\end{figure}

For this phase-diagram study, we use periodic boundary conditions and restrict to the infinite-bias limit ($\eta\to\infty$). In the $(\Pi_{XZ},\Pi_{YZ})$ plane, the point $(0,0)$ corresponds to the undeformed CSS code. The line $\Pi_{YZ}=0$ corresponds to ensembles with only $X\leftrightarrow Z$ deformations applied with probability $\Pi_{XZ}$. The point $(0,1)$ corresponds to a fixed deformation where every physical $Z$ operator is mapped to $Y$ (up to phases).

We estimate thresholds by averaging logical error rates over both deformation realizations and sampled error configurations for each point $(\Pi_{XZ},\Pi_{YZ})$. Decoding is performed using belief propagation with ordered statistics decoding (BP-OSD)~\cite{Poulin_BP,roffe_decoding_2020} in the infinite-bias setting. We find a connected region of the phase diagram with an estimated $50\%$ threshold. Empirically, this region is bounded by
\begin{equation}
0.5 < \Pi_{XZ}+\Pi_{YZ} < 1,
\qquad
\Pi_{XZ}\le 0.5,
\end{equation}
as shown in ~\cref{fig:phase_diagram}. This qualitative structure closely parallels the phase diagram obtained for random Clifford-deformed surface codes in Ref.~\cite{dua_CDSC}.

For phase points near the boundary of the $50\%$ region, such as $(0.5,0)$ and $(0,1)$, threshold extraction from BP-OSD is less conclusive for the specific periodic tile-code sizes considered here, and we do not observe a sharp crossing. Representative threshold estimates for selected boundary points are provided in ~\cref{app:phase-diagram}.

Moving away from the $50\%$ region, the estimated threshold decreases monotonically. We also observe distinct iso-threshold regimes on either side of the $50\%$ region. On the left side, the iso-threshold contours are approximately described by
\begin{equation}
\Pi_{XZ}+\Pi_{YZ}=\text{constant},
\end{equation}
while on the right side, they are approximately described by
\begin{equation}
\Pi_{XZ}=\text{constant}.
\end{equation}
We verify this behavior by comparing representative phase points along the diagonal and vertical white guide lines in ~\cref{fig:phase_diagram}. The origin of these iso-threshold contours, and their relation to the structure of pure-$Z$ logical operators, is analyzed in ~\cref{sec:mwblo}.

\subsection{Translational-invariant CDTCs}
\label{TI codes}

Random Clifford deformations characterize typical performance through ensemble averaging, but individual realizations can be atypical and can induce markedly different logical-operator structure. Since the geometry and overlap structure of pure-$Z$ logical operators depend on the deformation pattern, it is useful to study structured realizations where symmetry is enforced by construction. We therefore analyze translation-invariant (TI) Clifford-deformed tile codes with open boundaries.

We consider three TI deformations: (i) linear, (ii) $XY$, and (iii) a TI realization associated with the phase point $(0.25,0.5)$.

\paragraph{Linear deformation.}
In the linear deformation, we apply a Hadamard gate $H$ to every vertical qubit (the vertical bonds in Fig.~\ref{fig:linear-deformation}). This deformation preserves translational invariance. After deformation, each bulk stabilizer generator contains three $X$ and three $Z$ Paulis.
\begin{figure}[t]
\centering
\includegraphics[width=0.9\columnwidth]{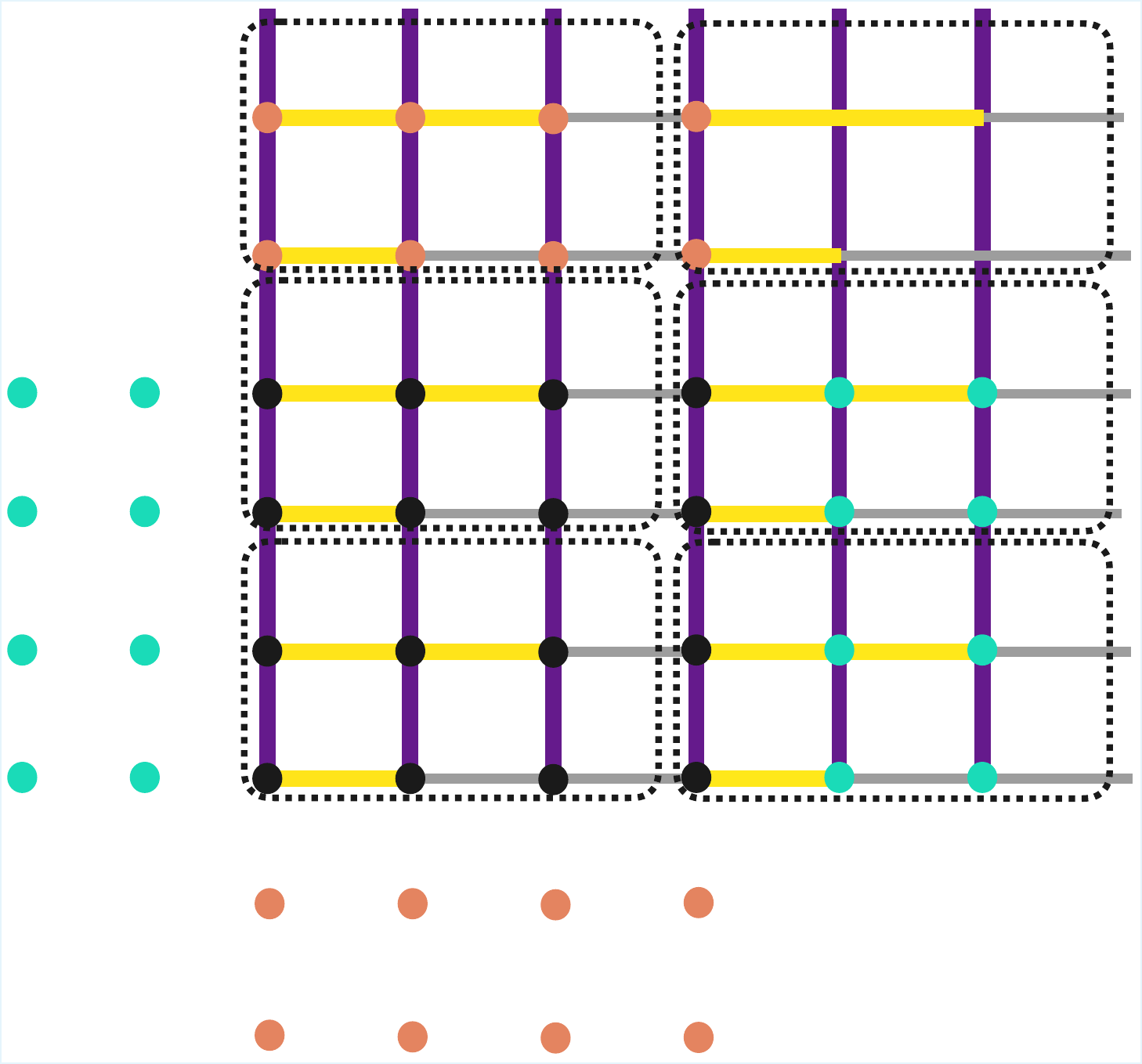}
\captionsetup{justification=justified,singlelinecheck=false}
\caption{Translation-invariant realization associated with the phase point $(0.25,0.5)$ for the open tile code $[[72,8,4]]$. The dotted rectangle denotes a unit cell. The color of any edge reflects the deformation applied to the qubit on that edge. On dark purple (yellow) qubits, we apply $YZ$ ($XZ$) deformations, while gray qubits remain undeformed.}
\label{fig:TI_middle_unit_cell}
\end{figure}

\paragraph{$XY$ deformation.}
In the $XY$ tile code, we apply the single-qubit unitary $HSH$ to every qubit (Fig.~\ref{fig:xy-deformation}). This maps $Z$-type stabilizers to $Y$-type stabilizers (up to phases) while preserving the locality pattern.

\paragraph{TI realization associated with $(0.25,0.5)$.}
We also study a TI deformation whose unit cell is shown in Fig.~\ref{fig:TI_middle_unit_cell}. The unit cell highlighted by the dotted rectangle is chosen so that, within it, the local deformations match the probabilities $(\Pi_{XZ},\Pi_{YZ})=(0.25,0.5)$. On qubits colored dark purple (yellow), we apply $Y\leftrightarrow Z$ ($X\leftrightarrow Z$) deformations, while gray qubits remain undeformed.

All three TI deformations show threshold behavior consistent with a $50\%$ threshold at infinite bias for the code sizes studied here (see Fig.~\ref{fig:ti} in the appendix), in contrast to the $\approx 10\%$ threshold of the undeformed open CSS tile-code family. This is consistent with prior work on Clifford-deformed surface codes, where suitable deformations can induce additional structure that improves decoding performance under biased noise~\cite{dua_CDSC}.

For the linear deformation, we can also prove analytically the existence of a family of periodic lattices for which the code has a $50\%$ threshold at infinite bias. The full proof is given in Appendix~\ref{app:weight-reduction}. It is based on the weight-reduction method~\cite{Huang_2023,tiurev2024domain}, which constructs a recursive decoding strategy in terms of repetition-code subproblems with decreasing check weight at each step. We show that this method applies to a subset of $L\times L$ periodic lattices satisfying explicit number-theoretic constraints stated in the appendix.

\subsection{Basis of logical operators for CDTCs} \label{sec:mwblo}
The proof in \cref{sec: Clifford-deformed zero-rate LDPC codes} shows that the scaling, distance, and cardinality of a generating set of pure-$Z$ logical operators control whether a $50\%$ threshold is possible at infinite bias.
% This motivates a direct examination of the basis of pure-$Z$ logical operators (MWBLO) for CDTCs.
In what follows, we study BLOs in both random and translation-invariant deformations, focusing mainly on two quantities: the number of logical operators in the basis and the $Z$ distance (minimum weight of a nontrivial pure-$Z$ logical).
This analysis supports the observed threshold behavior and the iso-threshold contours in Fig.~\ref{fig:phase_diagram}, and it provides a concrete diagnostic for when the conditions of Theorem~\ref{thm:subexp_logicals_threshold} are realized.

Following this perspective, it is also important to understand how the size of the BLO affects the overlap structure among its generators. In particular, we observe that for the deformed tile codes, the basis size scales linearly with the system size,
\[
|\mathcal{B}_Z(n)| = s n + c .
\]
where c is a constant. Note that in general, the basis size can scale sub-linearly; for example, in the XZZX surface code, it scales as $~n^{1/2}$. 
For ~\cref{thm:subexp_logicals_threshold} to ensure a 50\% threshold, it is necessary that the total number of pure-$Z$ logical operators grows at most sub-exponentially with system size. In the case where the slope $s=0$, both the number of logical generators and the total number of pure-$Z$ logical operators remain constant as the system size increases. Under these conditions, ~\cref{thm:subexp_logicals_threshold} guarantees a 50\% threshold.

In contrast, when the slope $s>0$, the number of independent logical generators increases with system size. This leads to an exponential growth in the total number of pure-$Z$ logical operators. As a consequence, the condition of sub-exponential growth is violated, and the 50\% threshold is no longer achieved.
Also, as additional generators are introduced, their supports are expected to intersect more frequently, leading to larger overlaps among the basis logical operators. Physically, this implies that a single physical error can simultaneously contribute to multiple logical explanations. Consequently, the decoder faces an enlarged set of competing logical hypotheses capable of explaining the observed syndrome. 

This proliferation of competing logical explanations increases the number of error configurations that can produce decoding ambiguity. In effect, logical failure can occur with fewer errors on average, thereby reducing the amount of noise the decoder can tolerate. As a result, the conditions required for ~\cref{thm:subexp_logicals_threshold} are no longer satisfied, and the proof guaranteeing a $50\%$ threshold does not apply in this regime. Consistent with this expectation, we observe numerically that the threshold decreases as the slope $s$ in $|\mathcal{B}_Z(n)| = sn + c$ increases (See Fig. \ref{fig:slope_vs_threshold}).

\subsubsection{Translation-invariant deformations}

For every translational-invariant deformation of the tile code, we have found that the number of elements in the BLO is constant at $8$, matching the $8$ logical qubits of the underlying open CSS tile-code family. That implies slope $s\approx 0.$  In contrast, the $Z$-distance (i.e., the weight of the minimum-weight pure-$Z$ logical) increases with system size.  A summary of MWBLO data for the TI-deformed tile codes is provided in  \cref{app:mwbo} of the Appendix.

Since for all translation-invariant deformations studied here, $|\mathcal{B}_Z(n)|$ is constant in system size, and the observed distance scaling is consistent with $0<\alpha\le 1;$ a 50\% threshold is expected from ~\cref{thm:subexp_logicals_threshold}. 

\subsubsection{Random deformations}
We now summarize the BLO results for random Clifford-deformed tile codes with periodic boundary conditions across the phase space in ~\cref{fig:phase_diagram}.

\paragraph{Inside the $50\%$ threshold regime.}
We consider representative phase points
\begin{multline*}
(\Pi_{XZ},\Pi_{YZ})\in\{(0.25,0.5),\ (0.1875,0.635),\\ (0.3125,0.375)\},
\end{multline*}
which lie in the region with an estimated $50\%$ threshold.
For these points, the BLO cardinality is essentially constant across system sizes (i.e slope $s\approx 0$) and equals the number of logical qubits, which is $6$ for the corresponding periodic CSS tile codes.
Moreover, the weight of  the minimum-weight pure-$Z$ logical grows approximately linearly with system size, consistent with  $0<\alpha\le 1.$
Near the boundary of the $50\%$ region, the BLO size begins to increase very slowly with system size, although the threshold remains close to $50\%$ for the sizes accessible in our simulations.
For most of the phase points on the boundary where BP-OSD does not yield a sharp threshold crossing, the extracted BLO size and $Z$ distance do not show systematic finite-size behavior.

\paragraph{Outside the $50\%$ threshold regime.}
Moving away from the $50\%$ region along the $\Pi_{XZ}$ axis, the threshold decreases on both sides, with the endpoints $(0,0)$ (undeformed CSS) and $(1,0)$ attaining the lowest thresholds.
The BLO results provide a consistent explanation: as the number of elements in the BLO starts to grow significantly with system size, the ~\cref{thm:subexp_logicals_threshold} no longer holds.
Equivalently, the code admits an increasingly dense collection of distinct near-minimum-weight pure-$Z$ logical representatives, which produces many competing logical error pathways and weakens the union-bound mechanism of ~\cref{sec: Clifford-deformed zero-rate LDPC codes}.

In this case of increasing $|\mathcal{B}_Z(n)|,$, we study how the threshold gets affected by the slope of the linear scaling of $|\mathcal{B}_Z(n)|.$  We illustrate this by examining four phase points
% \begin{equation*}
% (\Pi_{XZ},\Pi_{YZ})\in
% \{(0.25,0.5),\ (0.55,0.125),\ (0.125,0.25),\ (0.125,0.125)\},
% \end{equation*}
\begin{multline*}
(\Pi_{XZ},\Pi_{YZ}) \in
\{(0.25,0.5),\ (0.55,0.125), \\
 (0.125,0.25),\ (0.125,0.125)\},
\end{multline*}
with progressively decreasing thresholds (See ~\cref{fig:phase_diagram} for details). Our numerical analysis shows that the threshold drops sharply as the BLO size increases significantly with $n$  (See ~\cref{fig:support_for_threshold_decrease_final}a in Appendix), consistent with a violation of the ~\cref{thm:subexp_logicals_threshold}.  We explicitly show how the threshold decreases with the increase of the slope $s$ in \cref{fig:slope_vs_threshold}. 

\paragraph{Scaling across iso-threshold contours.}
~\cref{fig:phase_diagram} also exhibits iso-threshold contours on either side of the $50\%$ region.
On the left, the contours are approximately given by $\Pi_{XZ}+\Pi_{YZ}=\mathrm{const}$, while on the right they are approximately given by $\Pi_{XZ}=\mathrm{const}$.
Although we do not yet have an analytical proof, the BLO data corroborate this organizing principle. In particular, the phase points within the same contour have essentially identical BLO scaling with system size i.e., the slope $s$ (See ~\cref{fig:support_for_threshold_decrease_final}b in Appendix).
Different contours correspond to different $s$, and $s$ increases as the threshold decreases.
\begin{figure}[t!]
\centering
\includegraphics[width=1\columnwidth]{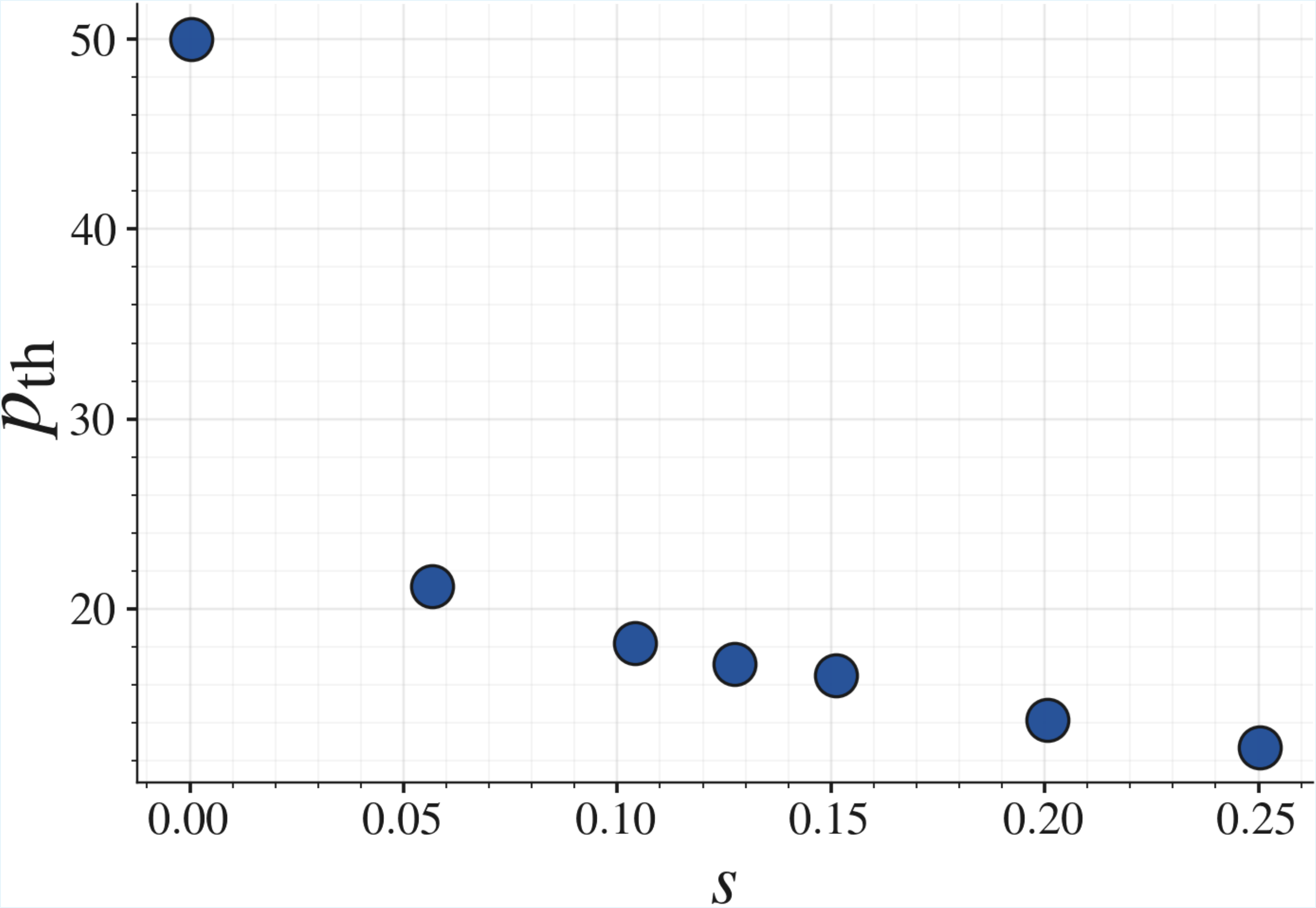}
\captionsetup{justification=justified,singlelinecheck=false}
\caption{Threshold ($p_{\text{th}}$) as a function of the slope $s$ in the linear scaling $|\mathcal{B}_Z(n)| = sn + c$ for the phase points shown in ~\cref{fig:support_for_threshold_decrease_final} in Appendix. The threshold decreases with increasing $s$.\label{fig:slope_vs_threshold}}
\end{figure}

\section{Performance of CDTCs under finite-bias code-capacity model}

While the infinite-bias code-capacity model isolates the mechanism behind the $50\%$ threshold, it is an idealized limit in which only a single Pauli error channel is present. In practice, noise typically has a large but finite bias, so subdominant Pauli components can disrupt the logical-operator structure that underlies optimal infinite-bias performance. We therefore study the finite-bias code-capacity model to quantify how thresholds and subthreshold scaling evolve as the bias is reduced.

\subsection{Noise model and decoding setup}

We compare four open-boundary tile-code variants: the undeformed CSS tile code, the linear deformation, the $XY$ deformation, and the translation-invariant (TI) deformation associated with the phase point $(0.25,0.5)$. Each physical qubit is subjected to i.i.d.\ biased Pauli noise with total error rate
\begin{equation}
p = p_X+p_Y+p_Z,
\end{equation}
and bias parameter $\eta$ defined by the ratio
\begin{equation}
\frac{p_Z}{p_X}=\frac{p_Z}{p_Y}=\eta,
\quad\text{equivalently}\quad
p_X:p_Y:p_Z = 1:1:\eta.
\end{equation}
Thus,
\begin{equation}
p_X=p_Y=\frac{p}{2+\eta},
\quad
p_Z=\frac{\eta p}{2+\eta}.
\end{equation}
Because $X$ and $Y$ errors enter symmetrically in this model, it suffices to consider single-qubit Clifford deformations whose conjugation action on Paulis is either trivial, an $X\leftrightarrow Z$ permutation, or a $Y\leftrightarrow Z$ permutation.\footnote{An $X\leftrightarrow Z$ permutation is implemented by a Hadamard gate $H$. A $Y\leftrightarrow Z$ permutation can be implemented by $HSH$, where $S=\sqrt{Z}$.}

Once the chosen Clifford deformation is applied to the parity-check matrix, we sample full Pauli error patterns from the finite-bias channel and decode using BP-OSD~\cite{Poulin_BP,Roffe2023biastailoredquantum,Roffe_LDPC_Python_tools_2022}. All reported logical error rates are estimated from $100{,}000$ Monte Carlo trials, using the full deformed parity-check matrix and the full sampled Pauli error vector.

\subsection{Threshold and subthreshold behavior versus bias}

Across all deformations studied, the estimated threshold increases monotonically with the bias $\eta$. This is expected: increasing $\eta$ suppresses the $X/Y$ components and drives the channel closer to the infinite-bias regime for which the deformations are designed to be most effective. The corresponding thresholds, together with the hashing bound as functions of $\eta$, are presented in ~\cref{fig:Threshold_vs_bias_logical_finite_bias_final}a. At large bias, the thresholds of the Clifford-deformed codes exceed the corresponding hashing bound. At lower bias, the performance of BP+OSD is expected to degrade in the presence of $Y\leftrightarrow Z$ type deformations; nevertheless, by analogy with the linear ($X \leftrightarrow Z$ only) deformation, where optimal decoding achieves thresholds above the hashing bound, it is plausible that suitably optimized decoding could also surpass the hashing bound in this regime.

~\cref{fig:Threshold_vs_bias_logical_finite_bias_final}b shows the logical error rate as a function of $\eta$ at fixed physical error rate $p \approx 0.14$ for the open-boundary code with $n=200$ and distance $d=9$. Among the four variants, the TI $(0.25,0.5)$ deformation exhibits the strongest suppression of logical error as the bias increases.

To probe subthreshold scaling at fixed bias, ~\cref{fig:Threshold_vs_bias_logical_finite_bias_final}c shows the logical error rate versus distance at $\eta=100$ and the same physical error rate $p \approx 0.14$. The logical error rate decays fastest for TI $(0.25,0.5)$, indicating the most favorable finite-bias scaling among the variants considered. Overall, these results show that the Clifford-deformed variants retain a clear finite-bias performance hierarchy, with TI $(0.25,0.5)$ performing best in both the threshold and subthreshold regimes.

\begin{figure}[t!]
\centering
\includegraphics[width=1\columnwidth]{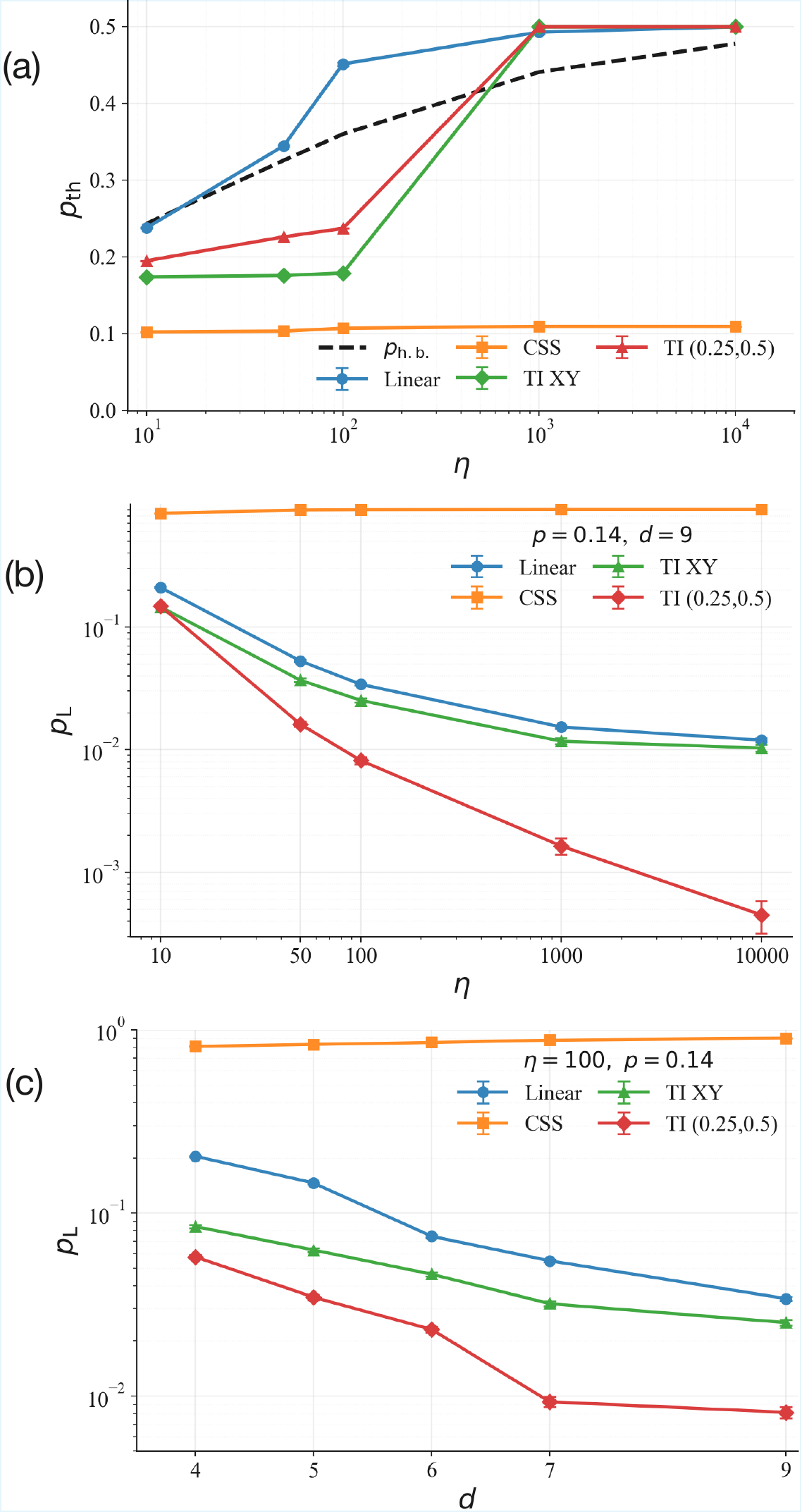}
\captionsetup{justification=justified,singlelinecheck=false}
\caption{\textbf{Performance under finite-bias code-capacity noise.} (a) The thresholds ($p_{\mathrm{th}}$) for four open-boundary variants (CSS, linear, $XY$, and TI $(0.25,0.5)$) and hashing bound ($p_{\mathrm{h.b.}}$) as functions of $\eta.$ (b) Logical error rate versus bias $\eta$ for the same variants at $p\approx0.14$ and $d=9$. (c) Subthreshold scaling: logical error rate versus distance at $\eta=100$ and $p\approx 0.14$. Each point is estimated using $10^5$ Monte Carlo trials with the BP-OSD decoder.}
\label{fig:Threshold_vs_bias_logical_finite_bias_final}
\end{figure}

\section{Performance of CDTCs under circuit-level noise model}

In a circuit-level noise model, faults arise from imperfect gates, measurements, resets, and idle operations. Even when the underlying physical noise is strongly biased, syndrome-extraction circuits generally mix Pauli components, producing correlated noise with reduced effective bias at the logical level. Circuit-level benchmarking is therefore essential for assessing practical fault-tolerant performance.
We study the same four open-boundary variants as above: the undeformed CSS tile code, the linear deformation, the $XY$ deformation, and the TI deformation associated with the phase point $(0.25,0.5)$. Here we employ a biased single-qubit Pauli noise model applied after Clifford operations (including both single- and two-qubit gates), before measurement, and after reset, with additional noise applied to all data qubits at the beginning of each round, while neglecting explicit idle errors.  We use Stim~\cite{gidney2021stim} to construct and sample full stabilizer-measurement circuits. Concretely, we adapt the surface-code-style syndrome-extraction templates provided in the StimBPOSD framework~\cite{higgott2024stimbposd} to the tile-code geometry and then apply the appropriate Clifford pattern to obtain the circuits for each deformed variant.

For decoding, we combine Stim’s circuit-level sampling with BP-OSD via the StimBPOSD interface~\cite{higgott2024stimbposd}. This integration enables efficient decoding of correlated detector-error patterns arising from circuit-level noise.

\subsection{Syndrome-extraction circuit structure}

~\cref{fig:circuit} shows an example syndrome-extraction circuit for the linear-deformed tile code (bulk checks shown; boundary checks are analogous but have lower weight). All data qubits are initialized in the $\ket{+}$ basis. Check qubits are initialized in the $\ket{+}$ basis at the start of each measurement round and are reset to $\ket{+}$ after measurement.

Let $\mathcal{V}$ denote the set of vertical data qubits and $\mathcal{H}$ the set of horizontal data qubits. In the linear deformation, the vertical data qubits in $\mathcal{V}$ are acted on by Hadamard gates $H$ before entangling with check qubits. Next, for each $X$-type check of the original CSS code, the corresponding check qubit is coupled to three horizontal data qubits via $CX$ gates and to three vertical data qubits via $CZ$ gates. Conversely, for each $Z$-type check, the check qubit is coupled to three horizontal data qubits via $CZ$ and to three vertical data qubits via $CX$. In all two-qubit gates, the check qubit is the control, and the data qubit is the target.

After each round, all check qubits are measured and then reset in the $\ket{+}$ basis. At the end of the circuit, all data qubits are measured in the $\ket{+}$ basis. Immediately before this final measurement, the vertical data qubits are acted on by $H$ so that the measured observables are deterministic in the chosen memory basis.

The circuits for the other deformed variants are constructed analogously by applying the corresponding Clifford pattern and updating the entangling gates accordingly. In the first measurement cycle, we place detectors only on checks in the chosen memory basis, since these checks are initialized in the $+1$ eigenspace and any deviation can be identified directly without comparing to earlier measurements. In subsequent cycles, detectors are placed on all checks to compare outcomes between consecutive rounds.

\begin{figure}[t]
\centering
\includegraphics[width=1\columnwidth]{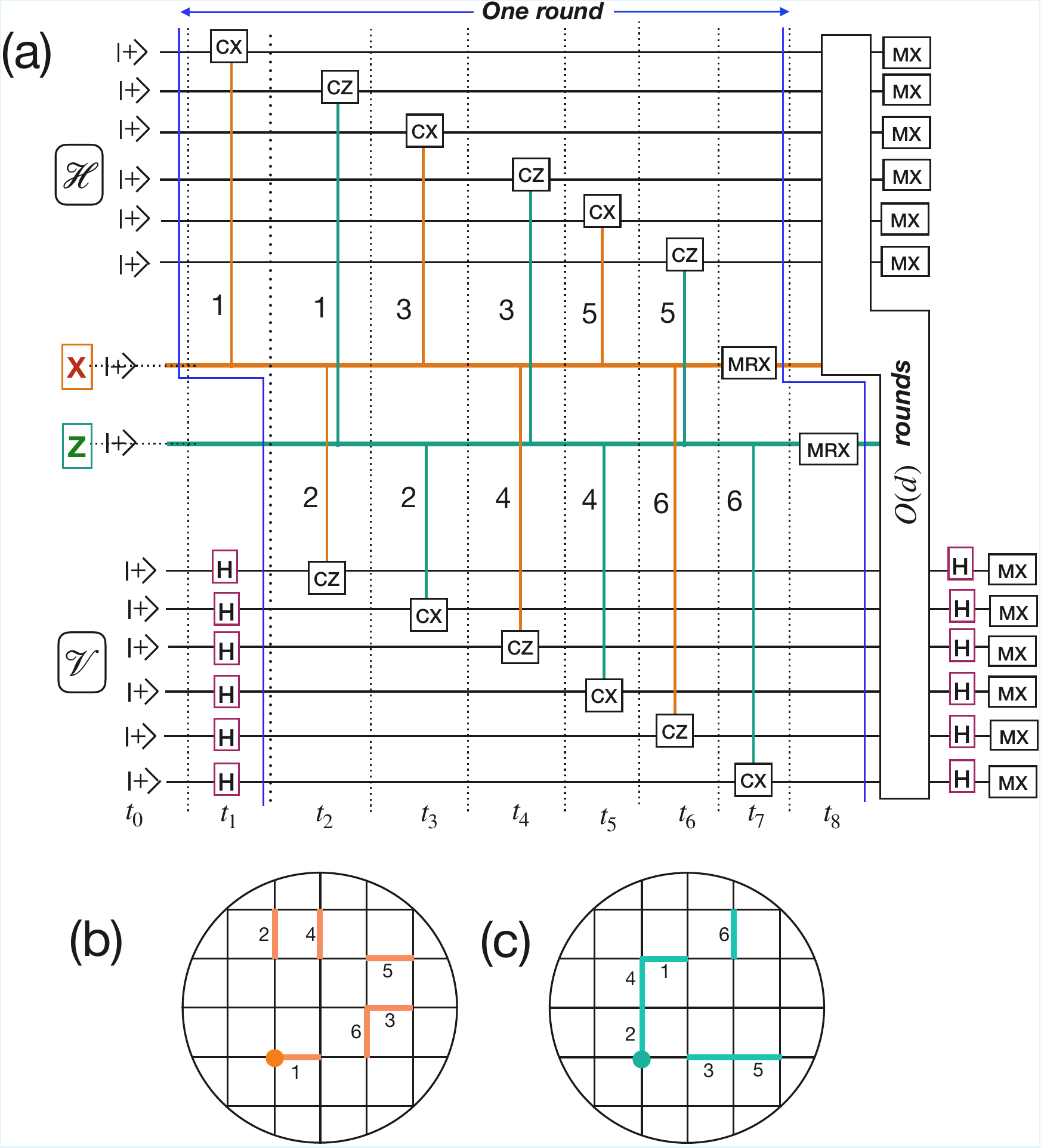}
\captionsetup{justification=justified,singlelinecheck=false}
\caption{\textbf{Circuit diagram for the tile code under linear deformation.} Example syndrome-extraction circuit showing $CX$, $CZ$, and Hadamard ($H$) gates. The circuit in (a) illustrates how a single bulk $X$ check and a single bulk $Z$ check (defined in the CSS code) couple to three vertical and three horizontal data qubits after deformation. Here $\mathcal{H}$ and $\mathcal{V}$ represent the horizontal and vertical qubits, respectively, in the tile code. All data qubits and checks are initialized in the $\vert+\rangle$ state. MRX denotes measurement and reset in the $\vert +\rangle$ basis, whereas MX denotes only measurement in the same basis. The qubit ordering for $X$ and $Z$ type stabilisers is shown in (b) and (c), respectively.}
\label{fig:circuit}
\end{figure}

% .
As explained here, when implementing each stabilizer-measurement circuit, it is important to avoid spacetime error mechanisms that can reduce the effective circuit distance. Such mechanisms can arise from error propagation involving both data qubits and ancilla qubits during stabilizer extraction, and they depend on the ordering of the two-qubit gates. With an unfavorable ordering, a single fault may propagate to produce a data-qubit error whose support is contained within a logical operator, thereby reducing the circuit distance below the code distance.
We provide the details of the circuit ordering optimization procedure in the  
\cref{sec:appendix_optimization}

To check that the chosen gate ordering preserves circuit distance, we analyze the circuit’s detector error model (DEM) using the codeDistance Python package \cite{webster2026}. As suggested in this work, we use both the heuristic and the exact algorithms mentioned to check the circuit distance for the CSS and one corresponding Clifford-deformed (linear deformation) circuit. Using the QDistEvol algorithm \cite{10874169,webster2026}, which has a significantly higher accuracy for the quantum LDPC code dataset than other heuristic algorithms, we found the circuit distance to be the same as the code distance for circuits up to d=9.  
Next, we perform the exact calculation using a Mixed Integer Linear Programming (MILP) based method with the help of the Gurobi solver implemented through the codeDistance package\cite{webster2026}. All the CSS and the linear-deformed open tile code circuits up to d=9 mentioned before give the same distance as the heuristic method before. Therefore, we can say that we have the optimal ordering for these open tile code circuit implementations, and they don't propagate hook errors, which reduces circuit distance.
% Stim’s routine \texttt{search\_for\_undetectable\_logical\_errors}~\cite{gidney2021stim}. This routine performs a heuristic search for collections of DEM error mechanisms whose combined action flips a logical observable while producing no detection events. The DEM includes both graph-like (degree-2) mechanisms and higher-degree hyperedges, unless truncated by user-defined limits. In our implementation, we allow error mechanisms to generate up to six detection events. The routine returns candidate sets of mechanisms that implement an undetectable logical fault; the smallest number of mechanisms found provides an upper bound on the circuit distance, defined as the minimum number of circuit-level fault mechanisms capable of inducing an undetected logical failure~\cite{kang2025quits}.

% For the gate ordering used in our simulations, the estimated circuit distance is never smaller than the corresponding code distance. Within the explored truncation limits of the heuristic search, this indicates that hook-error-induced distance reduction is not observed. While we do not claim this ordering is unique or optimal, it satisfies the necessary distance-preservation check used throughout this work.

\subsection{Circuit-level performance under biased noise}

Our results show that Clifford-deformed tile codes retain substantial performance advantages under biased noise even in the full circuit-level setting.

~\cref{fig:Threshold_vs_bias_logical_bias_final}a shows circuit-level thresholds as a function of bias $\eta$ for the four variants. At $\eta=10{,}000$, the estimated thresholds are approximately $0.634\%$ (CSS), $1.5\%$ (linear), $0.814\%$ ($XY$), and $1.04\%$ (TI $(0.25,0.5)$). Additional details are provided in ~\cref{app:stim}. All circuit-level simulations use StimBPOSD with $100{,}000$ Monte Carlo trials and $8$ measurement rounds.

Compared to code-capacity thresholds, circuit-level thresholds are substantially smaller than expected. This can be understood using renormalization of the physical error rate and the bias: even if each elementary operation is subjected to strongly biased noise, gate evolution, and measurements produce a phenomenological (including errors on the ancillas) effective noise channel at the end of the full measurement cycle with significantly reduced bias and enhanced effective physical error rates at the end of the circuit. This leads to a reduction in threshold. We explain this in more detail in Sec .~\cref {sec:qubit-platforms} below for circuit implementation in realistic hardware platforms. 

\cref{fig:Threshold_vs_bias_logical_bias_final}b shows the logical error rate versus bias at $p=0.005$ for distance $d=9$. ~\cref{fig:Threshold_vs_bias_logical_bias_final}c shows subthreshold scaling versus distance at $p=0.005$ and $\eta=100$. Across these diagnostics, TI $(0.25,0.5)$ is the best-performing variant.

\begin{figure}[t!]
\centering
\includegraphics[width=1\columnwidth]{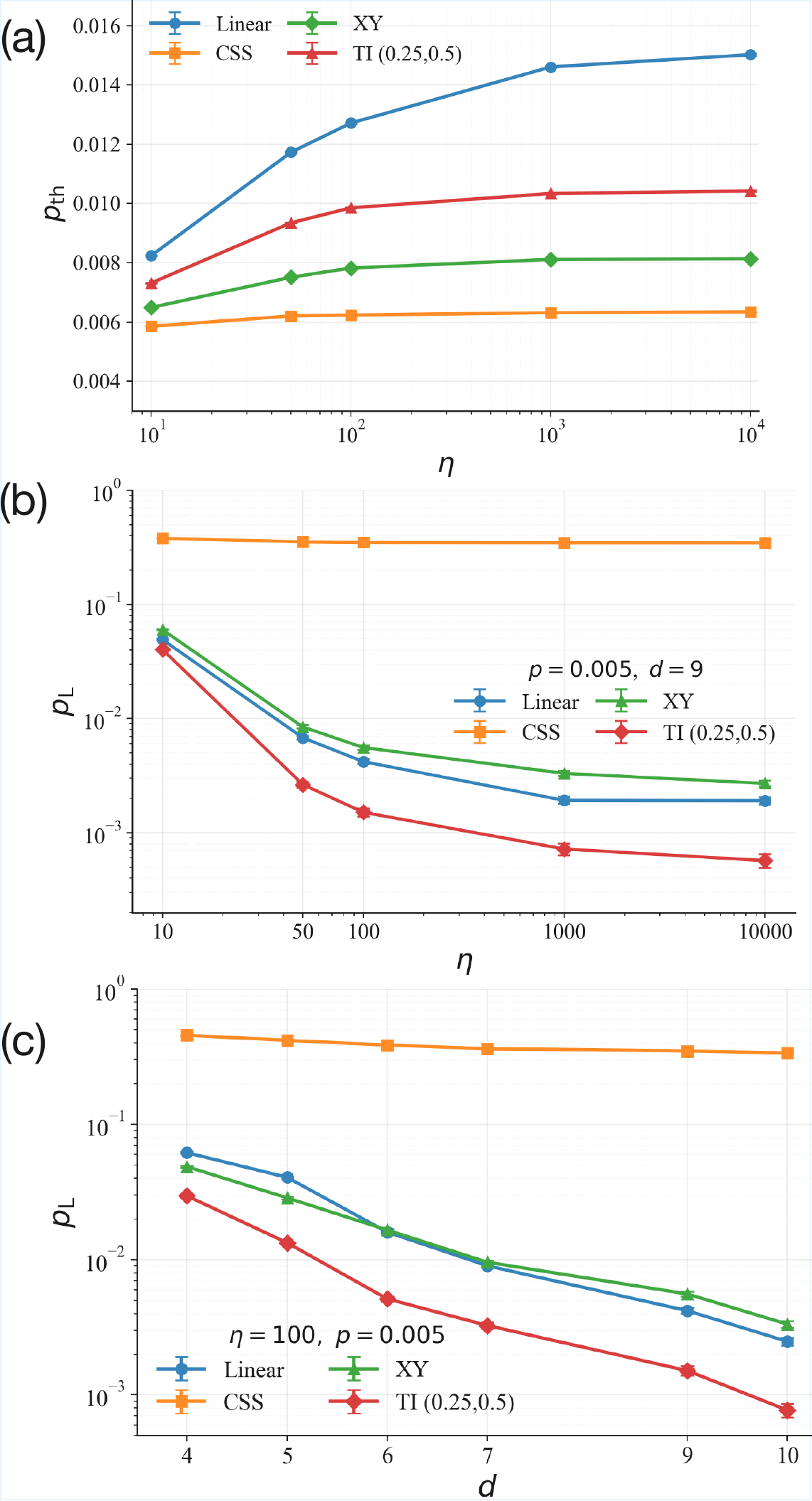}
\captionsetup{justification=justified,singlelinecheck=false}
\caption{\textbf{Performance under circuit-level noise.} (a) Circuit-level threshold versus bias $\eta$ for open tile codes under four variants: CSS, linear, $XY$, and TI $(0.25,0.5)$. (b) Logical error rate versus bias at $p=0.005$ and $d=9$ (code with $n=200$). (c) Subthreshold scaling: logical error rate versus distance at $\eta=100$ and $p=0.005$. Each data point is obtained from circuit-level simulations using the StimBPOSD decoder with $10^5$ Monte Carlo trials and $8$ rounds of syndrome extraction.}
\label{fig:Threshold_vs_bias_logical_bias_final}
\end{figure}

\section{Residual bias and effective phenomenological mapping for tile codes on different hardware platforms}
\label{sec:qubit-platforms}

In the previous section, we analyzed performance using standard circuit-level noise models. In such models, Pauli channels are assigned to operations such as gates, idle periods, initialization, and measurement, but are not tied to specific hardware implementations. Here, we instead consider hardware-informed circuit-level noise models, where Pauli channels are derived from microscopic descriptions of native gate dynamics. This additional structure can modify the noise bias during circuit execution, since realistic gate implementations need not preserve dephasing bias.

To relate these microscopic circuit-level models to phenomenological threshold results, we construct a mapping from one round of the time-translation-invariant (TI) syndrome-extraction (SE) circuit to an effective repeated-round Pauli noise model, i.e., a stationary Pauli noise model applied identically at each round. This mapping provides a simplified description for circuit-level performance, since it replaces a full threshold analysis for each microscopic variation of the noise model with the extraction of effective parameters that can be directly compared against known phenomenological threshold curves.

Concretely, for a TI-SE circuit, we sample Pauli error configurations from the location-dependent circuit-level noise model and propagate them through the Clifford circuit to the end of the syndrome-extraction round. This produces a distribution over end-of-round Pauli strings acting on both data and ancilla qubits. To enable comparison with phenomenological models, we approximate this distribution by its single-qubit marginal statistics, defining effective Pauli error probabilities \((\bar{p}_x,\bar{p}_y,\bar{p}_z)\). From these we obtain the total effective error rate \(p_{\mathrm{eff}}\) and effective bias \(\eta_{\mathrm{eff}}\). More details of this procedure can be found in~\cref{app:effective_bias}.

This approximation replaces the full correlated distribution of propagated Pauli strings with the corresponding single-qubit marginals, so that each multi-qubit Pauli string contributes to the effective probabilities according to the single-qubit Pauli operators appearing in its support. Its validity depends on the structure of the underlying noise. When errors are predominantly local and independent at injection, multi-qubit Pauli strings arise mainly from propagation through the Clifford circuit, which redistributes error support without changing the probability scaling of error events. In this regime, correlated error probabilities scale comparably to products of single-qubit error rates, and the marginal approximation is well justified. 

In contrast, hardware-informed noise models include two-qubit Pauli channels that introduce correlated errors directly at the level of the noise model. These correlations need not obey the same scaling and therefore are not fully captured by single-qubit marginals. In this work, we nevertheless employ the marginal approximation as a reduced description, using it to define an effective phenomenological model. This reduced description captures only single-qubit statistics and does not fully represent correlated error structure. When needed, one can instead retain and sample the full distribution of propagated Pauli strings to construct a correlated effective noise model.

Given \((p_{\mathrm{eff}},\eta_{\mathrm{eff}})\), we use the phenomenological threshold curve as a reference to assess threshold proximity for the effective model. Being below threshold indicates that logical error rates can be suppressed by increasing code distance, but does not guarantee low logical error rates at a fixed distance. Cases that lie well below threshold are expected to be more robust to the neglected correlated structure, whereas cases closer to threshold are inherently more sensitive to modeling details and should be interpreted with greater caution.

We focus on finite-dimensional qubit platforms. In such systems, CZ gates tend to preserve dephasing bias because they do not propagate $Z$ errors into $X/Y$ errors, whereas a perfectly bias-preserving CX gate cannot be implemented~\cite{no-go-theorem-2019}. As a result, platforms whose native entangling gate is CX generally reduce bias during circuit execution. In both native implementations, the complementary entangling gate is implemented via Hadamard conjugation. Since syndrome-extraction circuits generally require both types of entangling operations, these Hadamard operations introduce additional mechanisms that can reduce the effective bias regardless of the native implementation.

To incorporate hardware-specific effects, we assign Pauli channels to the native gates of each platform, obtained from microscopic simulations of the gate dynamics (see ~\cref{app:native_gates,app:noise_channels}). These channels are constructed to match a target physical error rate \(p\), corresponding to a specified gate fidelity. Throughout this section, we consider a system bias \(\eta_{\mathrm{sys}}=100\) and a physical error rate \(p=3\times 10^{-4}\), corresponding to a gate fidelity of approximately \(99.97\%\). While this value is somewhat beyond current experimental capabilities for some platforms, it represents a target regime where fault-tolerant operation is expected to become feasible.

We consider three compilation strategies: (i) a no-native compilation, where both CX and CZ gates are available directly and no hardware-specific gate noise is applied, (ii) native compilations without physical gate noise, where the circuit structure reflects CX- or CZ-native constraints but uses only single-qubit Pauli channels, and (iii) hardware-informed native compilations, where platform-specific two-qubit Pauli channels are included. This hierarchy allows us to distinguish the effects of circuit structure from those of hardware-induced correlations.

~\cref{tab:effective-bias} reports the total effective error rate \(p_{\mathrm{eff}}\), the resulting effective bias \(\eta_{\mathrm{eff}}\), and logical error rates obtained from standard phenomenological simulations using the corresponding effective Pauli noise model. All extracted effective models lie below the corresponding phenomenological thresholds, as determined from ~\cref{fig:Threshold_vs_bias_pheno}. The observed logical error rates generally track the distance of each effective model \((p_{\mathrm{eff}},\eta_{\mathrm{eff}})\) from the threshold curve, with configurations exhibiting larger \(p_{\mathrm{eff}}\) and smaller \(\eta_{\mathrm{eff}}\) lying closer to threshold and hence showing higher logical error rates. Deviations from this trend can occur when differences in the underlying Pauli error composition or neglected correlated structure become significant, as observed for the XY deformation.

We first compare the no-native circuits with the native compilations in the absence of hardware-specific gate noise. For CSS, Linear, and TI-middle codes, introducing native compilation (CX or CZ) increases the total effective error rate \(p_{\mathrm{eff}}\) and reduces the effective bias \(\eta_{\mathrm{eff}}\), moving the effective noise closer to the threshold curve and resulting in higher logical error rates. The difference between CX- and CZ-native implementations is relatively small for CSS and Linear codes, and modest for the TI-middle code, with the CX-native implementation yielding a slightly more favorable \((p_{\mathrm{eff}},\eta_{\mathrm{eff}})\). In contrast, the XY deformation shows a pronounced difference between native compilation strategies. In the no-native setting, the XY circuit is implemented using CX and CY gates without requiring CZ gates. A CZ-native compilation must therefore construct all entangling operations via \(H\)-conjugation, introducing additional Hadamard layers. These layers redistribute errors between Pauli components, reducing the effective bias \(\eta_{\mathrm{eff}}\) and increasing the total error rate \(p_{\mathrm{eff}}\), pushing the effective model significantly closer to the threshold curve and leading to higher logical error rates. More generally, cases with similar \(\eta_{\mathrm{eff}}\) but different \(p_{\mathrm{eff}}\) exhibit different logical error rates, indicating that performance depends on both parameters rather than on bias alone.

Finally, we compare the no-physical-gate and hardware-informed implementations. The inclusion of platform-specific two-qubit Pauli channels modifies the effective error distribution, typically increasing \(p_{\mathrm{eff}}\) and reducing \(\eta_{\mathrm{eff}}\). However, for CSS, Linear, and TI-derived codes, the relative performance between CX- and CZ-native compilations remains largely unchanged within each code. This indicates that most of the bias is lost due to restricting the code compilation to one entangling gate rather than from the additional noise introduced by the hardware. The XY deformation remains the main exception, where compilation overhead and the resulting change in error composition lead to significantly different performance across implementations.

\begin{figure}[t!]
\centering
\includegraphics[width=1\columnwidth]{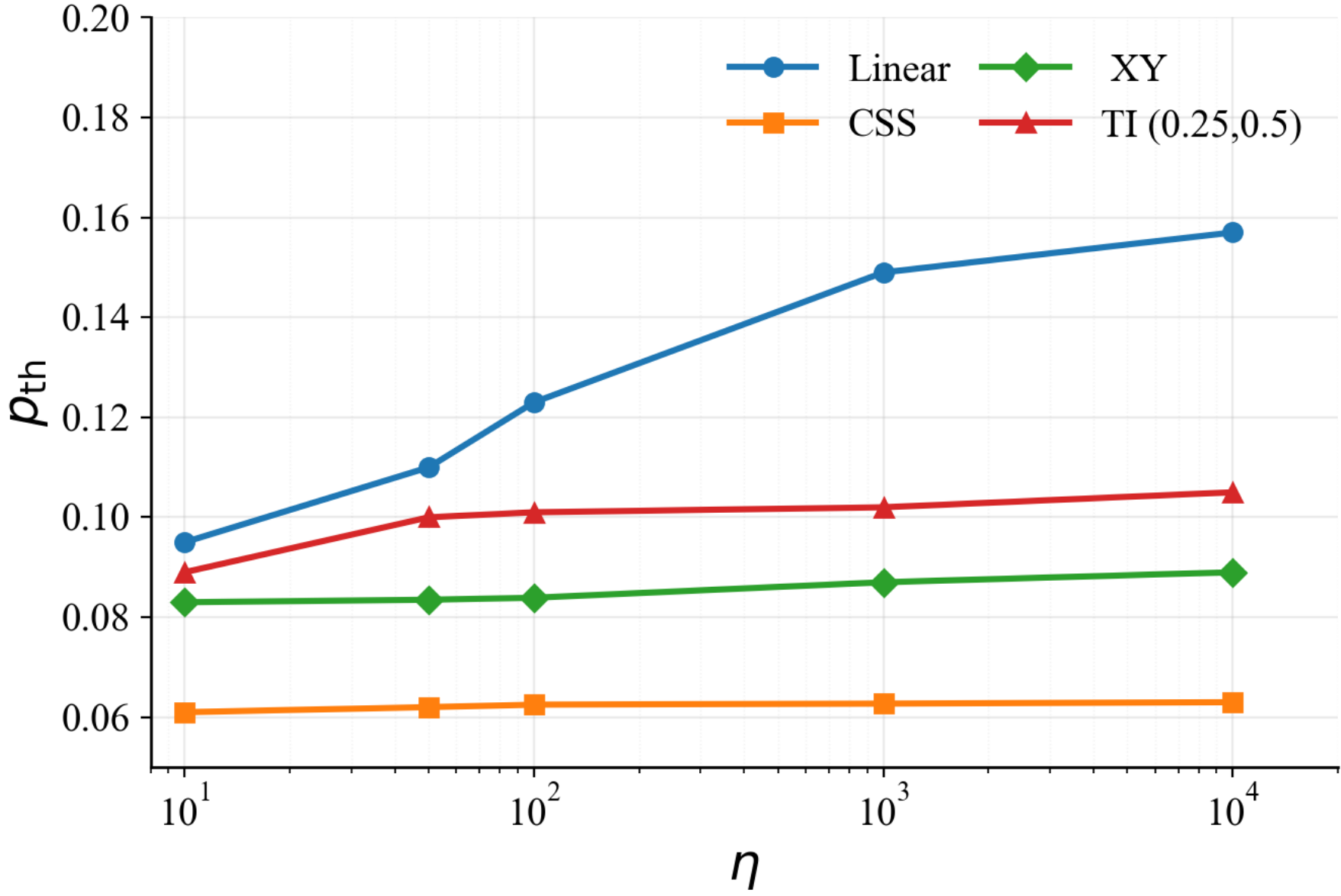}
\captionsetup{justification=justified,singlelinecheck=false}
\caption{\textbf{Phenomenological threshold as a function of bias.}
Threshold error rate as a function of noise bias \(\eta\) for the four code variants under a fault-tolerant phenomenological Pauli noise model. Each point is obtained from Monte Carlo simulations using the StimBPOSD decoder with \(10^5\) samples and \(8\) rounds of syndrome extraction. This curve defines the threshold boundary in \((p,\eta)\) space used to assess the effective parameters \((p_{\mathrm{eff}},\eta_{\mathrm{eff}})\) extracted from circuit-level noise models.}
\label{fig:Threshold_vs_bias_pheno}
\end{figure}

\begin{table}[t]
\centering
\scriptsize
\setlength{\tabcolsep}{4pt}
\renewcommand{\arraystretch}{1.05}

\begin{tabular}{l l l r r r}
\toprule
Code & Comp. & Platform &
\multicolumn{1}{c}{$p_{\mathrm{eff}}$} &
\multicolumn{1}{c}{$\eta_{\mathrm{eff}}$} &
\multicolumn{1}{c}{$p_l$} \\
 & strat. & &
\multicolumn{1}{c}{$\times 10^{-2}$} & & \multicolumn{1}{c}{$\times 10^{-3}$} \\
\midrule

\multirow{7}{*}{CSS}
& No & \multicolumn{1}{c}{--} & 0.296 & 73.9 & 1.59 \\
\cmidrule(lr){2-6}
& \multirow{3}{*}{CX}
  & No phys.-gate & 0.708 & 10.2 & 12.2 \\
& & Trapped ions & 0.726 & 5.91 & 15.2 \\
& & Superconduct. & 0.713 & 7.15 & 14.9 \\
\cmidrule(lr){2-6}
& \multirow{3}{*}{CZ}
  & No phys.-gate & 0.741 & 10.4 & 13.8 \\
& & Trapped ions & 0.764 & 10.1 & 16.8 \\
& & Neutral atoms & 0.746 & 10.5 & 15.7 \\
\midrule

\multirow{7}{*}{Linear}
& No & \multicolumn{1}{c}{--} & 0.303 & 24.7 & 0.24 \\
\cmidrule(lr){2-6}
& \multirow{3}{*}{CX}
  & No phys.-gate & 0.765 & 9.06 & 2.36 \\
& & Trapped ions & 0.816 & 6.13 & 6.79 \\
& & Superconduct. & 0.804 & 7.29 & 5.57 \\
\cmidrule(lr){2-6}
& \multirow{3}{*}{CZ}
  & No phys.-gate & 0.734 & 9.91 & 2.51 \\
& & Trapped ions & 0.744 & 8.40 & 4.03 \\
& & Neutral atoms & 0.725 & 8.72 & 3.45 \\
\midrule

\multirow{7}{*}{XY}
& No & \multicolumn{1}{c}{--} & 0.403 & 72.2 & 0.31 \\
\cmidrule(lr){2-6}
& \multirow{3}{*}{CX}
  & No phys.-gate & 1.018 & 56.5 & 2.55 \\
& & Trapped ions & 0.965 & 9.19 & 2.26 \\
& & Superconduct. & 0.992 & 17.6 & 2.43 \\
\cmidrule(lr){2-6}
& \multirow{3}{*}{CZ}
  & No phys.-gate & 2.239 & 13.2 & 22.6 \\
& & Trapped ions & 2.225 & 12.4 & 21.5 \\
& & Neutral atoms & 2.203 & 13.2 & 20.8 \\
\midrule

\multirow{7}{*}{TI Mid.}
& No & \multicolumn{1}{c}{--} & 0.443 & 41.2 & 0.09 \\
\cmidrule(lr){2-6}
& \multirow{3}{*}{CX}
  & No phys.-gate & 1.276 & 23.0 & 1.29 \\
& & Trapped ions & 1.220 & 9.42 & 2.62 \\
& & Superconduct. & 1.221 & 13.4 & 2.17 \\
\cmidrule(lr){2-6}
& \multirow{3}{*}{CZ}
  & No phys.-gate & 1.805 & 13.6 & 5.88 \\
& & Trapped ions & 1.775 & 12.8 & 5.61 \\
& & Neutral atoms & 1.749 & 13.5 & 5.01 \\
\bottomrule
\end{tabular}

\captionsetup{justification=justified}
\caption{\textbf{Effective error rate and bias across compilation strategies and platforms.}
For each configuration we report the total effective error rate \(p_{\mathrm{eff}} = \bar{p}_x + \bar{p}_y + \bar{p}_z\), the effective bias \(\eta_{\mathrm{eff}} = \bar{p}_z/(\bar{p}_x+\bar{p}_y)\), and the logical error rate \(p_l\) obtained from phenomenological simulations using the effective Pauli noise model. Compilation strategies and platforms progress from no-native and no hardware two-qubit noise (\textbf{No - }), to native structure with only single-qubit noise (\textbf{CX/CZ - No phys.-gate}), to native structure with platform-specific two-qubit Pauli channels (\textbf{CX/CZ - Trapped ions, Superconduct., Neutral atoms}).}
\label{tab:effective-bias}
\end{table}

%\begin{figure}[t]
%  \centering
%%  \begin{subfigure}[b]{0.9\linewidth}
%   \centering
%    \includegraphics[width=\linewidth]{cnot_ptm.pdf}
%    \caption{Noiseless CX PTM.}
%    \label{fig:ptm-cx-a}
%%  \end{subfigure}\hfill
%  \begin{subfigure}[b]{0.9\linewidth}
%    \centering
%    \includegraphics[width=\linewidth]{cz_ptm.pdf}
%    \caption{Noiseless CZ PTM.}
%    \label{fig:ptm-cz-b}
%  \end{subfigure}
%
%  \caption{Pauli transfer matrices (PTMs) for noiseless native gates. (a) CX gate obtained by simulating Eq.~\eqref{eq:H_MS} (plus single-qubit rotations) with $\Omega = 100~\mathrm{kHz}$, $\theta=\pi/2$, and $t=\theta/\Omega$. The PTM for the superconducting-qubit platform is identical in the noiseless case. (b) CZ gate obtained by simulating Eq.~\eqref{eq:H_ZZ_ions} (plus single-qubit rotations) with $\Omega = 100~\mathrm{kHz}$, $\theta=\pi$, and $t=\theta/\Omega$. The PTM for the neutral-atom platform is identical in the noiseless case.}
%  \label{fig:ptm-noiseless}
%\end{figure}

\section{Conclusion}

We show that Clifford deformations can drive zero-rate quantum LDPC codes toward the optimal $50\%$ threshold in the infinite-bias code-capacity setting, with the sufficient condition governed by the structure of pure-$Z$ logical operators. For tile codes, we identify both random and translation-invariant deformations consistent with this behavior and map out a phase diagram exhibiting an extended $50\%$ region.

At finite bias, the performance hierarchy among tile-code variants persists in both code-capacity and circuit-level settings. Translation-invariant deformations, and in particular TI $(0.25,0.5)$, yield the strongest thresholds and the most favorable subthreshold scaling among the variants studied, while the undeformed CSS code performs the worst. Circuit-level thresholds are substantially smaller than code-capacity thresholds, consistent with bias renormalization induced by syndrome-extraction circuits, yet the relative advantage of Clifford-deformed variants remains robust.

%Finally, we quantify the extent to which dephasing bias is preserved in realistic syndrome-extraction circuits for the tile code and its Clifford deformations across different qubit platforms. By deriving gate noise channels from Lindblad simulations and propagating Pauli errors through the resulting circuits, we compute the effective bias $\eta_{\mathrm{eff}}$ for both idealized and platform-specific gate sets. This provides a concrete estimate of how native gates and compilation strategies impact bias preservation and, consequently, the practical performance of biased-noise error-correcting codes.

% \PM{New conclusion about my section}
Finally, we relate hardware-informed circuit-level noise to an effective phenomenological description for tile codes and their Clifford deformations across different qubit platforms. By propagating Pauli errors through syndrome-extraction circuits, we extract the effective parameters \(p_{\mathrm{eff}}\) and \(\eta_{\mathrm{eff}}\) for each implementation. This provides a practical way to compare compilation strategies and hardware platforms through their position relative to the phenomenological threshold curve, highlighting how both circuit structure and noise properties influence performance.

Overall, our results show that Clifford deformations provide a systematic route to biased-noise optimization for LDPC codes, linking high thresholds and strong subthreshold behavior to concrete, checkable properties of the pure-$Z$ logical-operator structure.\\

\noindent\textit{Note:} The code used for the simulations in this work is publicly available at 
\url{https://github.com/JD0111phys/biased_noise_tile_code}.

\section{Acknowledgement}
The authors acknowledge Advanced Research Computing at Virginia Tech for providing computational resources and technical support that have contributed to the results reported within this paper (\url{https://arc.vt.edu}). AD was supported through NSF award 2515064 and through the start-up fund at VT. AP was supported by the Engineering and Physical Sciences Research Council (EP/S021582/1) and PASQUOPS (BMFTR).

\bibliographystyle{unsrtnat}
\bibliography{bib}

\clearpage
\onecolumngrid

%\appendix

\section{Appendix}
\subsection{Threshold bound for the case of overlapping BLO via assigned overlap loads}
\label{app:overlap_covering}

In ~\cref{sec: Clifford-deformed zero-rate LDPC codes}, we proved two sufficient conditions for a $50\%$ infinite-bias threshold: a union bound over all pure-$Z$ logical operators, and a sharper bound when there exists a non-overlapping BLO. The purpose of this appendix is to extend the latter argument to the general case in which BLO elements may overlap. The main idea is to assign overlap qubits to basis elements, define corresponding shifted bad events, and then show that every logical failure is covered by one of these bad events.

We keep the notation of the main text. In particular, $\mathcal{L}_Z$ denotes the set of non-trivial pure-$Z$ logical operators, $\mathcal{B}_Z=\{L_1,\ldots,L_{n_B}\}$ denotes a BLO, with $n_B=k+n_z$. For any $L\in\mathcal{L}_Z$, the notation $\decomp(L)\subseteq [n_B]$ denotes the set of BLO indices appearing in the basis decomposition of $L$.

Throughout this appendix, we work in the infinite-bias code-capacity setting, namely i.i.d.\ $Z$-only noise with per-qubit error rate $p_Z\in(0,1/2)$. For a sampled Pauli $Z$-error $E\in \mathsf{P}_n^Z$, we write
\begin{equation}
p(E)=p_Z^{|E|}(1-p_Z)^{n-|E|}.
\end{equation}
As discussed in the main text, for a non-trivial pure-$Z$ logical operator $L\in\mathcal{L}_Z$, define
\begin{equation}
U(L)
:=
\{\,E\in\mathsf{P}_n^Z:\ |LE|\le |E|\,\}.
\end{equation}
Under i.i.d.\ $Z$ noise with $p_Z<1/2$, this is exactly the condition that $LE$ is at least as likely as $E$. By the half-weight criterion proved in the main text,
\begin{equation}
E\in U(L)
\quad\Longleftrightarrow\quad
|E \cap L| \ge |L|/2.
\label{eq:app_half_weight_recall}
\end{equation}

\subsubsection{Assigned overlap loads and basis bad events}

When BLO elements overlap, products of basis logicals can cancel on overlap qubits and produce logical operators supported on symmetric differences. Consequently, an error pattern can be uncorrectable for a product logical even when no single basis element satisfies the naive half-weight condition on its full support. To control this while summing only over the BLO, we assign overlap qubits to basis elements\footnote{The role of overlap assignment is already visible in the simplest case of two overlapping BLO elements. Suppose $L_3=L_1L_2$ and let $c:=|L_1\cap L_2|$. Since overlap qubits cancel in the product, one has $|L_3|=|L_1|+|L_2|-2c$. Thus, any uncorrectable pattern for $L_3$, namely one with at least $|L_3|/2$ errors on $\supp(L_3)$, must be covered by shifted half-weight events on $L_1$ and $L_2$ as long as the total shift assigned to the two generators is at least the overlap $c$. For instance, if $\delta_1,\delta_2\ge 0$ satisfy $\delta_1+\delta_2\ge c$, then
\[
\{x_3\ge |L_3|/2\}
\subseteq
\{x_1\ge |L_1|/2-\delta_1\}
\cup
\{x_2\ge |L_2|/2-\delta_2\}.
\]
Likewise, one may assign the entire overlap asymmetrically to one generator: if $\delta\ge c$, then
\[
\{x_3\ge |L_3|/2\}
\subseteq
\{x_1\ge |L_1|/2\}
\cup
\{x_2\ge |L_2|/2-\delta\}.
\]
These are immediate special cases of the general covering lemma proved below.}.

Let
\[
\Omega:=\{\,q\in[n]: \big|\{i\in[n_B]:\ q\in \supp(L_i)\}\big|\ge 2\,\}
\]
be the set of overlap qubits, i.e. the qubits that belong to more than one basis logical. For each $q\in\Omega$, define
\[
\mathcal{B}(q):=\{i\in[n_B]:\ q\in \supp(L_i)\}.
\]
Choose an assignment map
\[
a:\Omega\to \mathcal{B}_Z
\qquad\text{such that}\qquad
a(q)\in \mathcal{B}(q)\quad\text{for every }q\in\Omega.
\]
Thus, each overlap qubit is charged to exactly one BLO element that contains it.

\begin{defn}[Assigned overlap load]
For $i\in[n_B]$, define its assigned overlap load by
\begin{equation}
o_i:=\bigl|\{q\in \Omega\cap \supp(L_i): a(q)=L_i\}\bigr|.
\label{eq:oL_def_app}
\end{equation}
\end{defn}

The role of \(o_i\) is to relax the half-weight threshold on \(L_i\) by the amount of overlap that has been assigned to it.

\begin{defn}[Shifted thresholds and bad events]
For $i\in[n_B]$, let $X_i\sim \Bin(|L_i|,p_Z)$ be the number of $Z$ errors on $\supp(L_i)$. Define the shifted thresholds
\begin{equation}
t_i:=\frac{|L_i|}{2}-o_i,
\label{eq:tL_def_app}
\end{equation}
and define the associated bad event
\begin{equation}
\mathrm{BAD}(i)
:=
\bigl\{ E\in\mathsf{P}_n^Z:\ | E \cap L_i | \ge t_i\bigr\}.
\label{eq:BAD_def_app}
\end{equation}
Equivalently,
\[
\Pr(\mathrm{BAD}(i))=\Pr(X_i\ge t_i).
\]
\end{defn}

\subsubsection{A covering lemma for products of overlapping basis logicals}

We now state the key combinatorial lemma. It shows that if a product logical has at least half its support in error, and if the shifted thresholds on its BLO factors sum to at most half the support of the product, then at least one basis bad event must occur.

\begin{lemma}[General covering lemma]
\label{lem:general_union_covering}
Let $L \in \mathcal{L}_Z$, $S=\decomp(L)$, and $E \in \mathsf{P}_n^Z$.
Suppose some real numbers $\{a_i\}_{i\in S}$ satisfy
\begin{equation}
\sum_{i\in S} a_i \le \frac{|L|}{2}.
\label{eq:general_sum_condition_app}
\end{equation}
Then
\begin{equation}
|E\cap L|\ge \frac{|L|}{2} \Longrightarrow
\exists i \in S,\, |E\cap L_i| \ge a_i.
\label{eq:general_union_cover_app}
\end{equation}
\end{lemma}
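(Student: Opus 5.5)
The plan is a contrapositive argument that mirrors the proof of \cref{lem:nonoverlap_BLO_bound}. The single change is that the disjoint-union identity $|E\cap M|=\sum_{L\in S}|E\cap L|$ used there is replaced by a subadditivity inequality, which holds whether or not the basis elements overlap.

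The first step is to record how the support of $L$ sits inside the supports of its basis factors. Since $L=\prod_{i\in S}L_i$ with $S=\decomp(L)$, and multiplication of pure-$Z$ operators corresponds to addition of binary vectors, $\supp(L)$ is the set of qubits covered by an odd number of the $\supp(L_i)$, $i\in S$. In particular
\[
\supp(L)\subseteq \bigcup_{i\in S}\supp(L_i).
\]
Intersecting with $\supp(E)$ and using the union bound on cardinalities then gives the key inequality
\[
|E\cap L|\;\le\;\Bigl|\bigcup_{i\in S}\supp(E\cap L_i)\Bigr|\;\le\;\sum_{i\in S}|E\cap L_i|.
\]
I will also note that $S\neq\varnothing$: since $L\in\mathcal{L}_Z$ is non-trivial, it is not the identity, so its basis decomposition is a non-empty product, as stated after Eq.~\eqref{eq:BLO}.

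Next, suppose for contradiction that $|E\cap L|\ge |L|/2$ but $|E\cap L_i|<a_i$ for every $i\in S$. Because $S$ is non-empty, summing these strict inequalities keeps a strict inequality. Combining this with the key inequality and the hypothesis \eqref{eq:general_sum_condition_app} gives
\[
|E\cap L|\le\sum_{i\in S}|E\cap L_i|<\sum_{i\in S}a_i\le \frac{|L|}{2},
\]
which contradicts $|E\cap L|\ge|L|/2$. Hence some $i\in S$ satisfies $|E\cap L_i|\ge a_i$, which is \eqref{eq:general_union_cover_app}.

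I do not expect a real obstacle here. The only points needing care are justifying the support inclusion for products with cancellations, which is just the symmetric-difference description above, and handling the strictness of the summed inequality, which is where non-emptiness of $S$ is needed. The $a_i$ may be arbitrary reals, including negative values, and the argument never uses their sign. Applying the lemma with $a_i=t_i$ from \eqref{eq:tL_def_app} is deferred to the subsequent verification that $\sum_{i\in S}t_i\le|L|/2$; that check, not this lemma, is where the overlap-assignment bookkeeping enters.
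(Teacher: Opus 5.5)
Your proposal is correct and is essentially the paper's proof: both argue by contradiction through the chain $|E\cap L|\le\sum_{i\in S}|E\cap L_i|<\sum_{i\in S}a_i\le|L|/2$, using $\supp(L)\subseteq\bigcup_{i\in S}\supp(L_i)$. You also state explicitly that $S\neq\varnothing$, which keeps the summed inequality strict; the paper leaves this implicit.
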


\begin{proof}
Assume $|E\cap L|\ge |L|/2$, but that for every $i\in S$, $|E\cap L_i|<a_i$. Then,
\[
\sum_{i\in S} |E\cap L_i| <\sum_{i\in S} a_i\le \frac{|L|}{2}.
\]
On the other hand, every error counted in $|E\cap L|$ lies on the support of at least one $L_i$, so
\[
|E\cap L| \le \sum_{i\in S} |E\cap L_i| <\frac{|L|}{2},
\]
contradicting $|E\cap L| \ge |L|/2$. This proves the claim.
\end{proof}

This motivates the following condition on the assigned loads.

\begin{defn}[Covering condition]
\label{def:covering_condition_app}
The assigned loads $\{o_i\}_{i\in[n_B]}$ satisfy the covering condition if for every non-trivial $L\in\mathcal{L}_Z$, writing $S=\decomp(L)$,
one has
\begin{equation}
\sum_{i\in S}\left(\frac{|L_i|}{2}-o_i\right)\le \frac{|L|}{2}.
\label{eq:covering_condition_app}
\end{equation}
\end{defn}

Under this condition, every logical failure is covered by a basis bad event.

\textit{Remark:} We should note that we don't know of any examples satisfying Definition 3 (covering condition) which requires that for every nontrivial logical, the shifted thresholds sum to at most $|L|/2$. Also, for highly overlapping BLOs, it may be impossible to find loads $o_i$ satisfying the covering condition.

\begin{proposition}[Reduction to basis bad events]
\label{prop:reduction_to_BAD_app}
Assume that the assigned loads satisfy Definition~\cref{def:covering_condition_app}. Then
\begin{equation}
\{\text{logical failure}\}
\subseteq
\bigcup_{i \in [n_B]}\mathrm{BAD}(i),
\label{eq:failure_subset_BAD_app}
\end{equation}
and therefore
\begin{equation}
P_{\mathrm{fail}}
\le
\sum_{i=1}^{n_B}\Pr(\mathrm{BAD}(i)).
\label{eq:pfail_BAD_union_app}
\end{equation}
\end{proposition}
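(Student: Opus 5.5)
The plan is to show that every error pattern causing logical failure already lies in one of the basis bad events $\mathrm{BAD}(i)$. The probability bound \eqref{eq:pfail_BAD_union_app} then follows from the ordinary union bound. All the combinatorial work is carried by Lemma~\ref{lem:general_union_covering}; the covering condition is exactly what is needed to invoke it.

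\emph{Step 1: reduce failure to a half-weight event.} Let $E\in\mathsf{P}_n^Z$ be an error on which maximum-likelihood decoding fails. As recalled in the main text, there is then a non-trivial $L\in\mathcal{L}_Z$ with $|LE|\le|E|$. Equivalently, $E\in U(L)$, and by the half-weight criterion \eqref{eq:app_half_weight_recall} this means $|E\cap L|\ge |L|/2$. Set $S=\decomp(L)$. Since $L$ is non-trivial and $\mathcal{B}_Z$ generates $\mathsf{P}_n^Z\cap\mathcal{C}(\mathsf{S})$, the set $S$ is non-empty.

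\emph{Step 2: invoke the covering lemma.} Choose $a_i:=t_i=|L_i|/2-o_i$ for each $i\in S$. The covering condition of Definition~\ref{def:covering_condition_app}, applied to this particular $L$, states precisely that $\sum_{i\in S}a_i\le |L|/2$, which is hypothesis \eqref{eq:general_sum_condition_app}. Lemma~\ref{lem:general_union_covering} then gives some $i\in S$ with $|E\cap L_i|\ge t_i$, that is, $E\in\mathrm{BAD}(i)$. Since $L$ and $E$ were arbitrary, this proves the set inclusion \eqref{eq:failure_subset_BAD_app}.

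\emph{Step 3: take probabilities.} Taking probabilities of both sides of the inclusion and applying subadditivity yields \eqref{eq:pfail_BAD_union_app}.

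The one point needing care is the use of Lemma~\ref{lem:general_union_covering}. Its proof relies on $\supp(L)\subseteq\bigcup_{i\in S}\supp(L_i)$. This holds because $L=\prod_{i\in S}L_i$ and the product of $Z$-type operators has support equal to the symmetric difference of the supports, which is contained in their union. Apart from this check, the argument is purely set-theoretic.
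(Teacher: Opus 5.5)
Your proposal is correct and follows the paper's proof essentially line by line. A failure yields some $L\in\mathcal{L}_Z$ with $|E\cap L|\ge |L|/2$; the covering condition for this $L$ is exactly the hypothesis of the general covering lemma with $a_i=t_i=|L_i|/2-o_i$, which gives an $i\in\decomp(L)$ with $E\in\mathrm{BAD}(i)$, and the union bound finishes. Your extra checks (that $\decomp(L)$ is non-empty and that $\supp(L)\subseteq\bigcup_{i\in S}\supp(L_i)$ because a product of $Z$-type operators is supported on the odd-multiplicity set) only make explicit what the paper's lemma proof uses implicitly.
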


\begin{proof}
Let $E \in \mathsf{P}_n^Z$ be an error that leads to a logical failure after decoding. Then there exists $L\in\mathcal{L}_Z$ such that $E\in U(L)$. Let $S=\decomp(L)$. By Eq.~\eqref{eq:app_half_weight_recall},
\[
|E \cap L | \ge |L|/2.
\]
By Definition~\cref{def:covering_condition_app}, the thresholds
\[
t_i=\frac{|L_i|}{2}-o_i
\qquad (i\in S)
\]
satisfy the hypothesis of Lemma~\cref{lem:general_union_covering}. Therefore, there exists $i \in S$ such that $|E \cap L_i| \geq t_i$. This means that
\[
E\in \bigcup_{i\in S}\mathrm{BAD}(i)
\subseteq
\bigcup_{i\in[n_B]}\mathrm{BAD}(i).
\]
Taking probabilities and applying the union bound yields \eqref{eq:pfail_BAD_union_app}.
\end{proof}

To get a tighter upper bound on the failure probability, the assignment map $a$ should be chosen so that the shifted thresholds
\[
t_i=\frac{|L_i|}{2}-o_i
\]
remain as large as possible for all basis elements. Equivalently, one wants to choose the loads $\{o_i\}$ so that the reduction from the naive half-weight threshold $|L|/2$ is as small as possible. This optimization can be formulated as an integer linear program. Introduce binary variables $y_{q,L}\in\{0,1\}$ for each overlap qubit $q\in\Omega$ and each $L\in\mathcal{B}(q)$, with $y_{q,L}=1$ if $q$ is assigned to $L$. The constraints are
\[
\sum_{L\in\mathcal{B}(q)} y_{q,L}=1
\qquad\text{for all } q\in\Omega,
\]
together with
\[
o_i=\sum_{q\in\Omega\cap\supp(L)} y_{q,{L_i}}
\qquad\text{for all } i\in [n_B].
\]
We could then optimize, for example, $\max_i o_i$, or any other linear objective reflecting the desired notion of minimal threshold reduction.
In practice, we could either optimize the loads subject only to the assignment constraints, or include the covering inequalities as additional linear constraints so that the resulting assignment is directly compatible with Definition~\cref{def:covering_condition_app}.

\subsubsection{Threshold bound with assigned overlap loads}

We can now derive the asymptotic failure bound from the bad-event reduction.

\begin{theorem}[Threshold bound with assigned overlap loads]
\label{thm:overlap_threshold_assigned}
Consider a family of zero-rate LDPC codes together with a fixed Clifford deformation. Assume that for each blocklength $n$ there exists a BLO $\mathcal{B}_Z(n)=\{L_1,\ldots,L_{n_B^{(n)}} \}$ and an assignment map $a$ with loads $\{o_i\}_{i\in \left[n_B^{(n)}\right]}$ such that:
\begin{itemize}
    \item the loads satisfy the covering condition of Definition~\cref{def:covering_condition_app};
    \item for every $i \in \left[n_B^{(n)}\right]$,
    \[
    |L_i| \geq K_1 n^{\alpha}
    \]
    for some constants $K_1>0$ and $0<\alpha \le 1$;
    \item for every $i \in \left[n_B^{(n)}\right]$,
    \[
    o_i \leq K_2 n^{\beta}
    \]
    for some constants $K_2\ge 0$ and $0\le \beta<\alpha$
\end{itemize}
Then under i.i.d.\ infinite-bias $Z$ noise with $p_Z<\tfrac12$,
\[
P_{\mathrm{fail}}(n)\le \exp\!\bigl(-\Omega(n^{\alpha})\bigr),
\]

In particular, the family has a $50\%$ infinite-bias threshold.
\end{theorem}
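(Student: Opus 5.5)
By \cref{prop:reduction_to_BAD_app}, the covering condition already gives
\[
P_{\mathrm{fail}}\le\sum_{i=1}^{n_B^{(n)}}\Pr\bigl(X_i\ge t_i\bigr),\qquad t_i=\tfrac{|L_i|}{2}-o_i .
\]
So the remaining work is to bound each term with a Chernoff--Hoeffding estimate and then control the number of terms. For each $i$, write $t_i-p_Z|L_i|=|L_i|\bigl(\tfrac12-p_Z\bigr)-o_i$. Using $o_i\le K_2n^\beta$ and $|L_i|\ge K_1n^\alpha$ with $\beta<\alpha$, we get $o_i/|L_i|\le (K_2/K_1)n^{\beta-\alpha}\to0$. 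Hence, for all sufficiently large $n$,
\[
t_i-p_Z|L_i|\ge \gamma|L_i|,\qquad \gamma:=\tfrac12\bigl(\tfrac12-p_Z\bigr)>0,
\]
uniformly in $i$. Hoeffding's inequality for $X_i\sim\Bin(|L_i|,p_Z)$ then gives
\[
\Pr(X_i\ge t_i)\le\exp(-2\gamma^2|L_i|)\le\exp(-2\gamma^2K_1n^\alpha).
\]

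Next we bound the number of terms. The BLO is an independent generating set of $\ker(H_X)\subseteq\mathbb{F}_2^n$ by \eqref{eq:dimBLO}, so $n_B^{(n)}\le n$ regardless of overlaps. Therefore
\[
P_{\mathrm{fail}}(n)\le n\exp(-2\gamma^2K_1n^\alpha)=\exp\bigl(-2\gamma^2K_1n^\alpha+\log n\bigr)=\exp(-\Omega(n^\alpha)),
\]
since $\alpha>0$ means $\log n=o(n^\alpha)$. For the finitely many small $n$ where the uniform margin fails, the bound holds trivially after adjusting constants. This gives the stated decay for every $p_Z<\tfrac12$, and hence a $50\%$ threshold.

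The main obstacle is ensuring the shifted threshold stays a constant fraction of $|L_i|$ above the mean uniformly over all basis elements. This is exactly where $\beta<\alpha$ enters: with $\beta=\alpha$ the margin would depend on $K_2/K_1$ and could vanish for $p_Z$ near $\tfrac12$.

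A secondary point is that the union is now over $n_B$ BLO elements rather than over all of $\mathcal{L}_Z$. With overlapping supports, the disjointness bound $n/(Kn^\alpha)$ is no longer available, so we use the dimension bound $n_B\le n$ instead. A polynomial prefactor suffices.
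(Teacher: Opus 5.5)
Your proposal is correct and follows essentially the same route as the paper. Both arguments reduce to the basis bad events via \cref{prop:reduction_to_BAD_app}, apply Chernoff--Hoeffding with the margin $\tfrac12-p_Z-o_i/|L_i|=\Omega(1)$ secured by $\beta<\alpha$, and absorb the prefactor using $n_B^{(n)}\le n$ (your explicit uniform margin $\gamma=\tfrac12(\tfrac12-p_Z)$ is simply a concrete version of the paper's $\Omega(1)$ step).
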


\begin{proof}
By Proposition~\cref{prop:reduction_to_BAD_app},
\begin{equation}
P_{\mathrm{fail}}
\le
\sum_{i=1}^{n_B^{(n)}}\Pr(\mathrm{BAD}(i)).
\label{eq:proof_pf_BAD_app}
\end{equation}
Fix $i\in \left[n_B^{(n)}\right]$, and let
\[
X_i\sim \Bin(|L_i|,p_Z)
\]
be the number of $Z$ errors on $\supp(L_i)$. By definition,
\[
\Pr(\mathrm{BAD}(i))=\Pr(X_i\ge t_i),
\qquad
t_i=\frac{|L_i|}{2}-o_i.
\]
Applying the Chernoff--Hoeffding bound gives
\begin{equation}
\Pr(X_i\ge t_i)
\le
\exp\!\left(-2\frac{(t_i-p_Z|L_i|)^2}{|L_i|}\right).
\label{eq:chernoff_main_app}
\end{equation}
Now
\[
t_i-p_Z|L_i|
=
|L_i|\left(\frac12-p_Z-\frac{o_i}{|L_i|}\right).
\]
Since
\[
|L_i| \geq K_1 n^{\alpha},
\qquad
o_i \leq K_2 n^{\beta},
\qquad
\beta<\alpha,
\]
we have
\[
\frac{o_i}{|L_i|}
\leq
\frac{K_2}{K_1}n^{\beta-\alpha}
\xrightarrow[n\to\infty]{}0.
\]
Hence for fixed $p_Z<1/2$,
\[
\frac12-p_Z-\frac{o_i}{|L_i|}=\Omega(1),
\]
and therefore
\begin{equation}
\Pr(\mathrm{BAD}(i))
\le
\exp\!\bigl(-\Omega(n^{\alpha})\bigr).
\label{eq:BAD_exp_bound_app}
\end{equation}
Substituting \eqref{eq:BAD_exp_bound_app} into \eqref{eq:proof_pf_BAD_app} gives
\[
P_{\mathrm{fail}}
\leq
\sum_{i=1}^{n_B^{(n)}} \exp\!\bigl(-\Omega(n^{\alpha})\bigr)
\le
n_B^{(n)}\,\exp\!\bigl(-\Omega(n^{\alpha}\bigr).
\]
Using $n_B^{(n)}\le n$ as shown in Eq.~\eqref{eq:BLO} of the main text, we obtain
\[
P_{\mathrm{fail}}
\le
n\,\exp\!\bigl(-\Omega(n^{\alpha}\bigr)
=
\exp\!\bigl(-\Omega(n^{\alpha})\bigr).
\]
Thus $P_{\mathrm{fail}}\to 0$ for every fixed $p_Z<1/2$, proving the claimed threshold.
\end{proof}

\textit{Remark.}
For constant-rate families, one expects that any feasible assignment forces $o_i$ to scale comparably to $|L_i|$ for some subset of BLO elements. In that regime, the bound above suggests a reduced threshold determined by the asymptotic ratio $o_i/|L_i|$.

% \subsubsection{Two examples of assigned overlap loads}

\subsection{Proof for biased noise threshold of the linear-deformed tile code using weight reduction}
\label{app:weight-reduction}

Consider a periodic $7\ell \times 7\ell$ lattice, with $\ell$ odd. Qubits are placed on edges and stabilizers correspond to the linear-deformed tile code described in \cref{sec: Clifford Deformed Tile code} and shown in \cref{fig:linear-deformation}, where a Hadamard has been applied to all vertical qubits. We prove here that there exists a family of such codes that has a $50\%$ threshold at infinite bias:
\begin{theorem}
    Let $\mathcal{L}$ the set of odd primes $\ell$ such that:
    \begin{itemize}
        \item \(\ell \equiv 2 \pmod{3}\),
        \item \(2\) is a primitive root modulo \(\ell\), that is
            \(\operatorname{ord}_\ell(2) = \ell - 1\).
    \end{itemize}
    Then the family of linear-deformed codes on a periodic lattice of size $7\ell \times 7\ell$ with $\ell \in \mathcal{L}$ has a $50\%$ threshold at infinite bias.
\end{theorem}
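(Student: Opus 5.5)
The plan is to reduce infinite-bias decoding of the linear-deformed code to two classical cyclic codes on the torus. Then I would show, using the weight-reduction idea of Ref.~\cite{Huang_2023}, that these classical codes have bounded kernel dimension and that every nontrivial kernel element has weight $\Omega(\ell)$. Since $n=2(7\ell)^2$, this verifies the hypotheses of \cref{thm:subexp_logicals_threshold} with $\alpha=1/2$.

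\textbf{Step 1 (decoupling).} After applying $H$ to all vertical qubits, each bulk generator has three $X$'s and three $Z$'s. A $Z$ error on a horizontal qubit is detected only by the $X$-part of the original $X$-checks. A $Z$ error on a vertical qubit is detected only by the original $Z$-checks, whose vertical $Z$'s have become $X$'s. So $H_X$ in \cref{eq:dimBLO} is block diagonal, and the pure-$Z$ logicals are exactly $\ker(H_X^{\mathrm{hor}})\oplus\ker(H_X^{\mathrm{ver}})$. Each block is a translation-invariant classical code on a $7\ell\times7\ell$ torus with weight-3 checks. Each such code is described by a bivariate polynomial $f(x,y)$ with three monomials, read off from \cref{fig:linear-deformation}. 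Its kernel is the annihilator of $f$ in $R=\mathbb{F}_2[x,y]/(x^{7\ell}-1,y^{7\ell}-1)$.

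\textbf{Step 2 (algebra of the kernel).} I would factor $x^{7\ell}-1=(x^7-1)(\text{rest})$ over $\mathbb{F}_2$, with $x^7-1=(x+1)(x^3+x+1)(x^3+x^2+1)$. Because $\ell$ is odd, $x^{7\ell}-1$ is squarefree, so $R$ splits by the Chinese remainder theorem into a product of fields indexed by pairs of irreducible factors. The condition $\operatorname{ord}_\ell(2)=\ell-1$ makes $\Phi_\ell$ irreducible. The remaining cyclotomic pieces $\Phi_{7\ell}$ then split into a controlled number of factors of known degree, determined by $\operatorname{ord}_{7\ell}(2)=\operatorname{lcm}(3,\ell-1)$. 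The condition $\ell\equiv2\pmod 3$ ensures $3\nmid \ell-1$, which pins down this lcm and the field degrees. This should also rule out accidental vanishing of $f$ at roots coming from the $\ell$-part. The goal is to show that $f$ vanishes only on components coming from the period-7 part, i.e.\ from factors of $x^7-1$ and $y^7-1$. The kernel dimension is then independent of $\ell$, giving a constant number of BLO elements, so $|\mathcal{L}_Z|=O(1)\le\exp(o(n^{1/2}))$.

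\textbf{Step 3 (weight reduction and distance).} To lower-bound the weight of nonzero kernel elements, I would multiply $f$ by suitable polynomials, which amounts to summing checks. The aim is to reach a check of the form $(1+x^{a})g(y)$ or a pure one-direction weight-2 relation. This is the recursive reduction to repetition codes: a codeword $c$ satisfies $fc=0$, hence $(1+x^{a})c'=0$ for a derived word $c'$. So $c'$ is constant along cycles of $x^{a}$, whose length grows like $\ell$ when $\gcd(a,7\ell)$ is bounded. If $c\neq0$ then $c'\neq 0$ on some line (by Step 2's control of the kernel), so $c'$ is nonzero on $\Omega(\ell)$ sites. Because the reduction multiplies weights by a bounded factor, $|c|\ge K\ell=K' n^{1/2}$.

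\textbf{Conclusion and main obstacle.} Combining Steps 2 and 3 with \cref{thm:subexp_logicals_threshold} gives $P_{\mathrm{fail}}\le\exp(-\Omega(n^{1/2}))$ for all $p_Z<1/2$. The main obstacle is Step 2/3 jointly: showing that the derived one-dimensional words are nonzero whenever $c$ is, i.e.\ that no kernel element hides in the $\ell$-dependent cyclotomic components. I would attack this first by explicitly computing $\gcd$'s of the reduced polynomials with $\Phi_\ell(x)$ and $\Phi_{7\ell}(x)$ using the two number-theoretic hypotheses.
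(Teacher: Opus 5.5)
You have the first step right: after the Hadamards the pure-$Z$ problem splits into two translation-invariant weight-3 classical codes. Your overall route also differs genuinely from the paper's, and it is viable: show $\ker H_X$ has $\ell$-independent dimension and large minimum weight, then apply Theorem~\ref{thm:subexp_logicals_threshold}. The paper never counts logicals. It uses the Frobenius-scaled checks $\tilde S_k$ (up to translation just $f^{2^k}$) to build materialized symmetries of length $\operatorname{ord}_{7\ell}(2)=3(\ell-1)$, and analyzes an explicit two-stage repetition-code decoder.

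The gap is that you never prove the statement everything rests on: $f$ has no zeros on $\mu_{7\ell}^2$ outside $\mu_7^2$. Degree bookkeeping cannot give this. The zero set is a union of Galois orbits, and knowing that $\Phi_\ell$ is irreducible and $\Phi_{7\ell}$ splits into two degree-$3(\ell-1)$ factors does not stop a whole orbit from lying in it. This really happens at other admissible sizes. Under $u=x^2y$, $v=x^2y^2$ (a lattice automorphism, since its determinant $2$ is a unit mod odd $L$), the check becomes $1+u+v$. For $L=63$, every $u\in\mu_{63}\setminus\{1\}=\mathbb{F}_{64}^*\setminus\{1\}$ has $1+u\in\mu_{63}$, so each block has a $62$-dimensional kernel. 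Step~3 is also only a sketch: you exhibit no reduction with a bounded-weight multiplier and $\gcd(a,7\ell)=O(1)$. It becomes unnecessary once Step~2 is proved.

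The missing idea is an arithmetic argument that actually uses both hypotheses. You need that no $u\in\mu_{7\ell}\setminus\mu_7$ has $1+u\in\mu_{7\ell}$.
\begin{itemize}
\item First, $\mathbb{F}_{2^{\ell-1}}^*\cap\mu_{7\ell}=\mu_\ell$, because $7\mid 2^{\ell-1}-1$ would force $3\mid\ell-1$.
\item If $u=\beta$ has order $\ell$, then $1+\beta\in\mu_\ell$, so $1+\beta=\beta^s$. Hence $\Phi_\ell$, irreducible since $2$ is primitive mod $\ell$, divides $1+x+x^s$. Modulo $x^\ell-1$ the only multiples of $\Phi_\ell$ are $0$ and the all-ones word of weight $\ell\ge 5$, while $1+x+x^s$ has weight $1$ or $3$.
\item If $u=\alpha\beta$ has order $7\ell$, with $\alpha\in\mu_7$ and $\beta\in\mu_\ell$, take the norm from $\mathbb{F}_{2^{3(\ell-1)}}$ to $\mathbb{F}_{2^{\ell-1}}$. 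Since $2^{\ell-1}\equiv 2 \pmod 7$, the norm of $1+u$ is $(1+\alpha\beta)(1+\alpha^2\beta)(1+\alpha^4\beta)$. Depending on the trace of $\alpha$, this equals $1+\beta+\beta^3$ or $1+\beta^2+\beta^3$, and it must lie in $\mu_\ell$. The same weight argument, now with a word of weight $2$ or $4$, rules this out.
\end{itemize}
So each block's kernel lives in the $\mu_7^2$ components, i.e.\ it consists of $7$-periodic patterns. It is exactly $6$-dimensional, and every nonzero element has weight at least $\ell^2=n/98$. Your Step~3 is then superfluous. Theorem~\ref{thm:subexp_logicals_threshold} applies with $\alpha=1$ and $|\mathcal{L}_Z|\le 2^{12}$, giving $P_{\mathrm{fail}}=e^{-\Omega(n)}$ for maximum-likelihood decoding. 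That is stronger than the paper's $e^{-\Omega(\sqrt n)}$, whereas the paper's route has the advantage of an explicit efficient decoder.
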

\noindent
The reason for choosing such a family will become clear in the proof. We note that the set $\mathcal{L}$ can be proven to be infinite under Artin's conjecture on primitive roots, which we therefore assume to be true here. Numerically, one can show that this set is large enough for the theorem to be relevant for any practical purpose.

% \AD{Is it possible to remove these proof numbers? i.e. the 2 in front of \textbf{Proof} below}
\begin{proof}
At infinite $Z$ bias, only the $X$ part of the stabilizers becomes relevant. We notice that the $X$ parts of the Clifford-deformed $X$ and $Z$ stabilizers have the same shape up to rotations and reflections of the lattice, with the first being supported on horizontal qubits and the second on vertical qubits (see \cref{fig:linear-deformation}). 
Without loss of generality, we can therefore focus our attention to the $X$ part of the Clifford-deformed $X$ stabilizers, which behaves like a classical parity check acting against $Z$ errors. Considering a periodic lattice of size $L \times L$ such that every horizontal edge is given integer coordinates (as shown in \cref{fig:materialized-symmetry-1-a}), there is one such weight-3 check for every coordinates $(x,y)$, supported on edges $(x,y)$, $(x+2,y+1)$, $(x+2, y+2)$. We call such check $\tilde{S}_0^{(x,y)}$. More precisely, writing $X_{(x,y)}$ for a Pauli $X$ operator supported on the qubit located at coordinates $(x,y)$, we have by defn:
\begin{align}
    \tilde{S}_0^{(x,y)} = X_{(x,y)} X_{(x+2,y+1)} X_{(x+2, y+2)}
\end{align}

To prove that the code has a threshold of $50\%$ at infinite bias, we use the \textit{weight-reduction method}~\cite{Huang_2023,tiurev2024domain}. It consists of first finding linear materialized symmetries, that is, sets of $\Theta(L)$ checks along a one-dimensional manifold whose product is the identity. Any such symmetry leads to a 1D repetition code problem, whose decoding allows us to find the value of new checks of reduced weight. We can then repeat this process until all the qubits are decoded. Since such a decoding method reduces to decoding repetition codes, it has a threshold of $50\%$. In our case, we start by constructing new weight-4 checks by multiplying the checks $\tilde{S}_0^{(x,y)}$ appropriately. We then find materialized symmetries involving those weight-4 checks, and show that those materialized symmetries involve $\Theta(L)$ qubits. Finally, we determine the weight-reduced decoding problem and show that the full decoding problem has a threshold of $50\%$.

\medskip
\textbf{Recursive construction of weight-4 checks. }
On an infinite lattice (that is, before imposing the modulo $7\ell$ periodicity), we can obtain enlarged versions of the initial checks by multiplying them in an appropriate way, as shown in \cref{fig:materialized-symmetry-1-b} and \cref{fig:materialized-symmetry-1-c}. More precisely, we can obtain a check $\tilde{S}_1^{(x,y)}$ supported on the edges $(x,y)$, $(x+4,y+2)$ and $(x+4,y+4)$ by multiplying $\tilde{S}_0^{(x,y)}$, $\tilde{S}_0^{(x+2,y+1)}$ and $\tilde{S}_0^{(x+2,y+2)}$:
\begin{align}
    \begin{split}
        \tilde{S}_1^{(x,y)} 
        &= \tilde{S}_0^{(x,y)} \tilde{S}_0^{(x+2,y+1)} \tilde{S}_0^{(x+2,y+2)} \\
        &= \left(X_{(x,y)} X_{(x+2,y+1)} X_{(x+2,y+2)}\right) \left(X_{(x+2,y+1)} X_{(x+4,y+2)} X_{(x+4,y+3)}\right) \left(X_{(x+2,y+2)} X_{(x+4,y+3)} X_{(x+4,y+4)}\right) \\
        & = X_{(x,y)} X_{(x+4,y+2)} X_{(x+4,y+4)}.
    \end{split}
\end{align}
We can repeat this process recursively to obtain a check $\tilde{S}_k^{(x,y)}$ supported on edges $(x,y)$, $(x+2^{k+1},y+2^k)$ and $(x+2^{k+1}, y+2^{k+1})$ by multiplying $\tilde{S}_{k-1}^{(x,y)}$, $\tilde{S}_{k-1}^{(x+2^k,y+2^{k-1})}$ and $\tilde{S}_{k-1}^{(x+2^k,y+2^k)}$. Indeed, assuming that this support is correct for $S_{k-1}^{(x,y)}$, we have:
\begin{align}
    \begin{split}
        \tilde{S}_k^{(x,y)}
        &= \tilde{S}_{k-1}^{(x,y)} \tilde{S}_{k-1}^{(x+2^k,y+2^{k-1})} \tilde{S}_{k-1}^{(x+2^k,y+2^k)} \\
        &= \Bigl( X_{(x,y)} X_{(x+2^{k},y+2^{k-1})} X_{(x+2^{k}, y+2^{k})} \Bigr) \\
        &\quad \cdot \Bigl( X_{(x+2^k,y+2^{k-1})}
                X_{(x+2^{k+1},y+2^k)}
                X_{(x+2^{k+1}, y+3 \cdot 2^{k-1})} \Bigr) \\
        &\quad \cdot \Bigl( X_{(x+2^k,y+2^k)}
                X_{(x+2^{k+1},y+3 \cdot 2^{k-1})}
                X_{(x+2^{k+1}, y+2^{k+1})} \Bigr). \\
        &= X_{(x,y)} X_{(x+2^{k+1},y+2^k)} X_{(x+2^{k+1},y+2^{k+1})}
    \end{split}
\end{align}

Next, we multiply two successive checks $\tilde{S}_k^{(x,y)}$ and $\tilde{S}_{k+1}^{(x,y)}$, to obtain a new weight-4 check that we call $S_k^{(x',y')}$, with $x'=x+2^{k+1}$ and $y'=y+2^k$, as illustrated in \cref{fig:materialized-symmetry-2-a} and \cref{fig:materialized-symmetry-2-b}. The check $S_k^{(x,y)}$ is supported on the edges $(x,y)$, $(x,y+2^k)$, $(x+2^{k+1},y+2^k)$ and $(x+2^{k+1},y+3 \cdot 2^k)$:
\begin{align}
    S_k^{(x,y)} &= \tilde{S}_k^{(x-2^{k+1}, y-2^k)} \tilde{S}_{k+1}^{(x-2^k, y-2^{k-1})} \\
    &= \left( X_{(x-2^{k+1},y-2^k)} X_{(x,y)} X_{(x,y+2^k)} \right)
    \left( X_{(x-2^{k+1},y-2^k)} X_{(x+2^{k+1},y+2^k)} X_{(x+2^{k+1},y+3 \cdot 2^k)} \right) \\
    &= X_{(x,y)} X_{(x,y+2^k)} X_{(x+2^{k+1},y+2^k)} X_{(x+2^{k+1},y+3 \cdot 2^k)}
\end{align}

\medskip
\textbf{Materialized symmetry on a periodic lattice. }
To obtain a materialized symmetry, we consider products of the form
\begin{align}
    \begin{split}
        M_k^{(x,y)} &= \prod_{j=0}^k S_j^{(x+2^{j+1}-2,y+2^j-1)} \\
        &= \prod_{j=0}^k \left( X_{(x+2^{j+1}-2,y+2^j-1)} X_{(x+2^{j+1}-2,y+2^{j+1}-1)} X_{(x+2^{j+2}-2,y+2^{j+1}-1)} X_{(x+2^{j+2}-2,y+2^{j+2}-1)} \right) \\
        &= X_{(x,y)} X_{(x,y+1)} X_{(x+2^{k+2}-2,y+2^{k+1}-1)} X_{(x+2^{k+2}-2,y+2^{k+2}-1)}
    \end{split}
\end{align}
We see that on an infinite lattice, such a product is supported on four edges: $(x,y)$, $(x,y+1)$, $(x+2^{k+2}-2,y+2^{k+1}-1)$, and $(x+2^{k+2}-2,y+2^{k+2}-1)$. Therefore, on a finite toric lattice of size $L \times L$, if there exists a $k > 0$ such that
\begin{align}
    (x,y) &\equiv (x+2^{k+2}-2,y+2^{k+1}-1) \pmod L \label{eq:condition-for-symmetry-a}
    \\
    (x,y+1) &\equiv (x+2^{k+2}-2,y+2^{k+2}-1) \pmod L \label{eq:condition-for-symmetry-b}
\end{align}
we get the materialized symmetry $M_k=I$. An example is shown in \cref{fig:materialized-symmetry-3}.

\begin{figure}
    \centering
    \begin{subfigure}[b]{0.28\linewidth}
        \includegraphics[width=\linewidth]{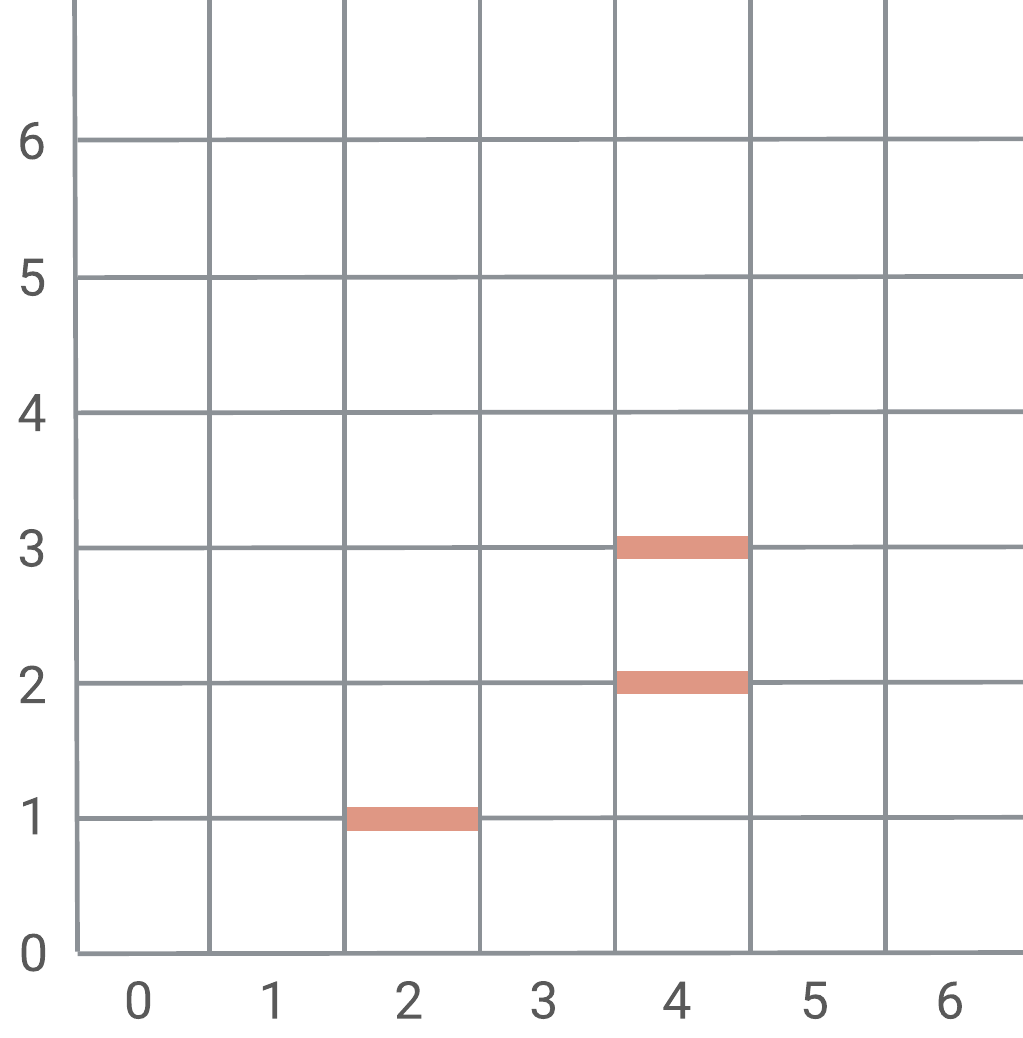}
        \caption{\label{fig:materialized-symmetry-1-a}}
    \end{subfigure}
    \begin{subfigure}[b]{0.28\linewidth}
        \includegraphics[width=\linewidth]{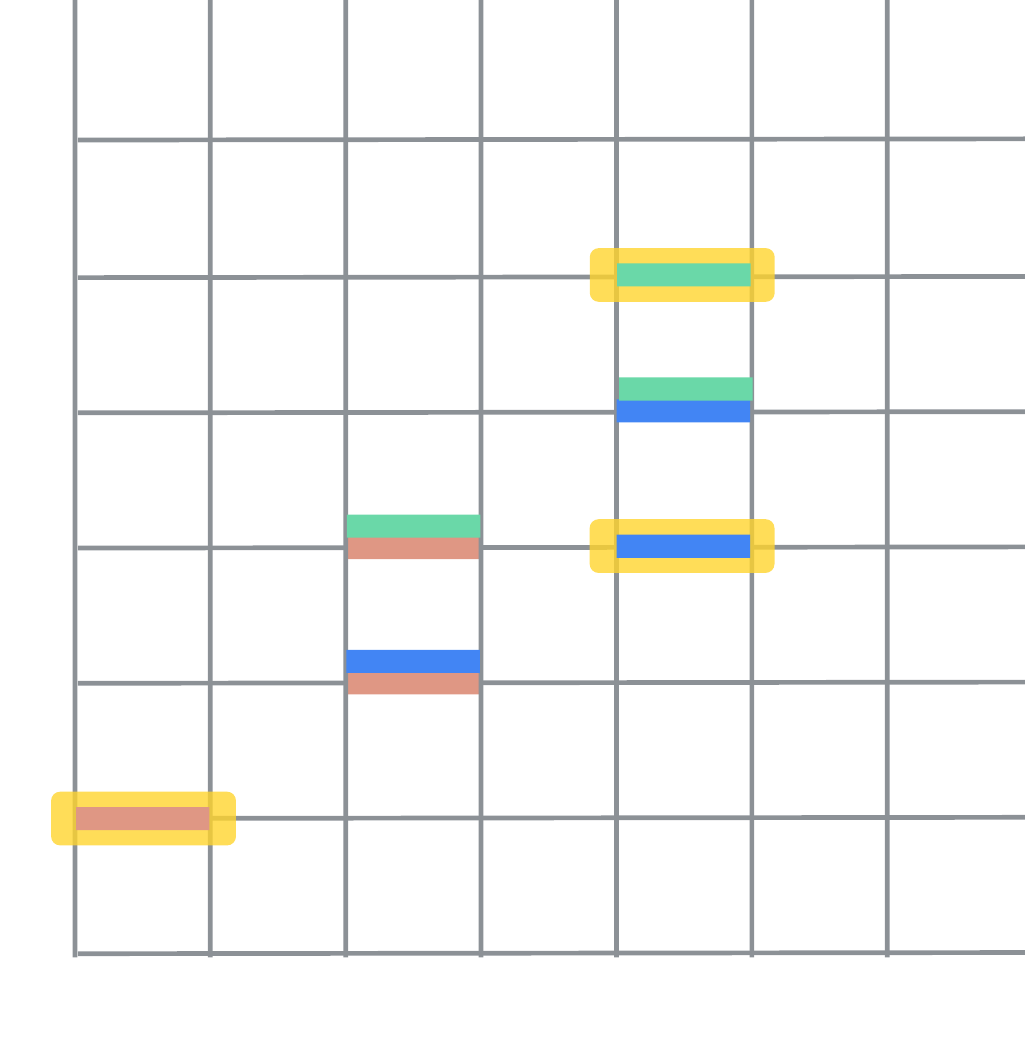}
        \caption{\label{fig:materialized-symmetry-1-b}}
    \end{subfigure}
    \begin{subfigure}[b]{0.38\linewidth}
        \includegraphics[width=\linewidth]{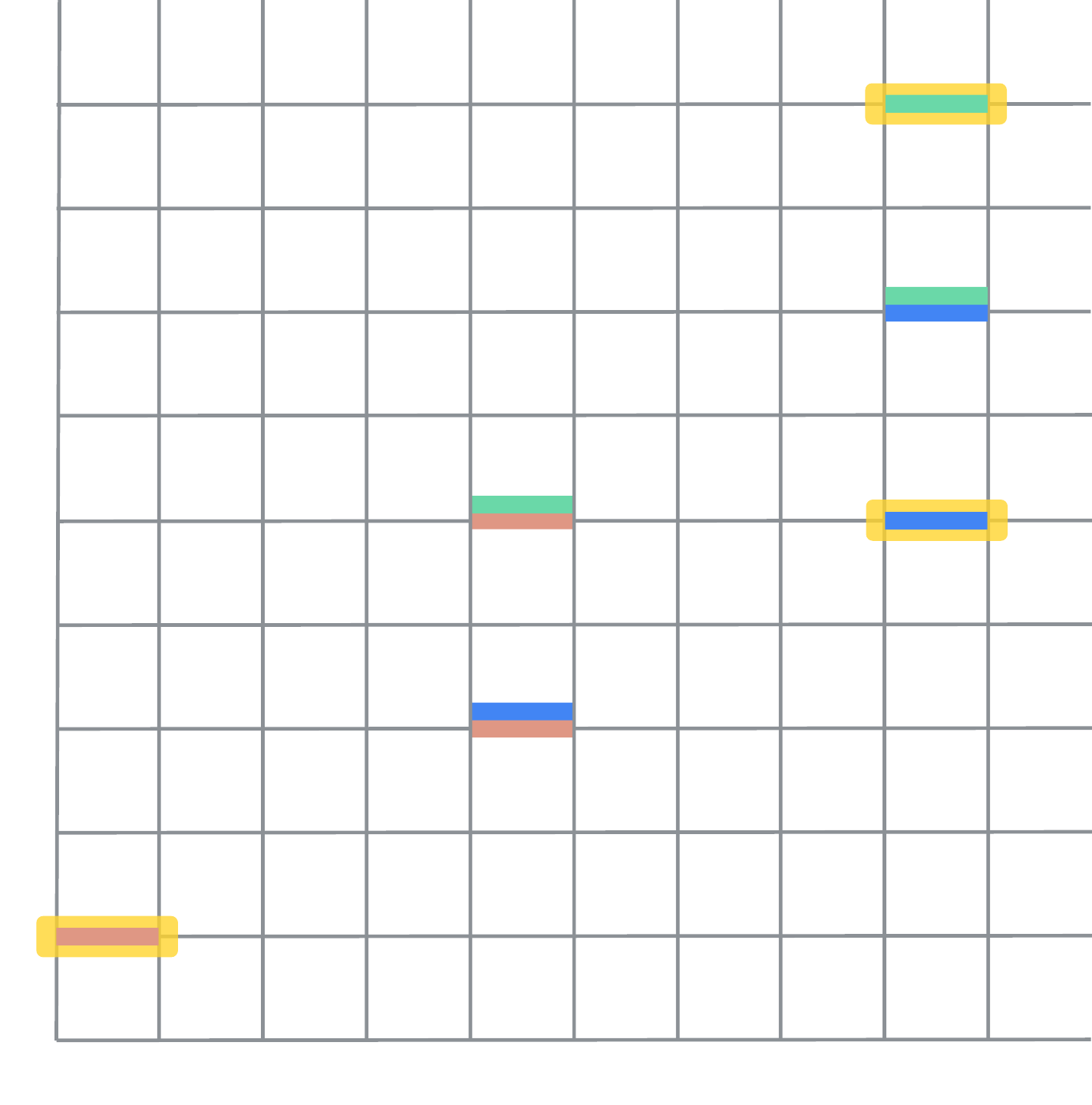}
        \caption{\label{fig:materialized-symmetry-1-c}}
    \end{subfigure}
    \captionsetup{justification=justified,singlelinecheck=false}
    \caption{
        Scaling up infinite-bias stabilizers of the periodic tile code with linear deformation. 
        \textbf{(a)} $X$ part of the linear-deformed $X$ stabilizer, as shown in \cref{fig:linear-deformation} and described in \cref{sec: Clifford Deformed Tile code}. This check detects $Z$ errors on horizontal edges of the lattice. Each horizontal edge has coordinates indicated on the bottom and left sides of the lattice. The check whose bottom qubit has coordinates $(x,y)$ is labeled $\tilde{S}_0^{(x,y)}$. For instance, the check represented here is $\tilde{S}_0^{(2,1)}$.
        \textbf{(b)} By multiplying three checks (red, blue, and green), we obtain a new weight-3 check (highlighted in yellow). This new check is a scaled-up version of the original check, which we call $\tilde{S}_1^{(x,y)}$. The check whose support is highlighted in yellow in the picture is, for instance, $\tilde{S}_1^{(0,1)}$.
        \textbf{(c)} We can repeat this process, multiplying three checks of the form $\tilde{S}_k^{(x,y)}$ to obtain new weight-3 checks $\tilde{S}_{k+1}^{(x,y)}$. We represent here the check $S_2^{(0,1)}$ on a lattice of size $10 \times 10$.
    }
    \label{fig:materialized-symmetry-1}
\end{figure}

\begin{figure}
    \centering
    \begin{subfigure}[b]{0.28\linewidth}
        \includegraphics[width=\linewidth]{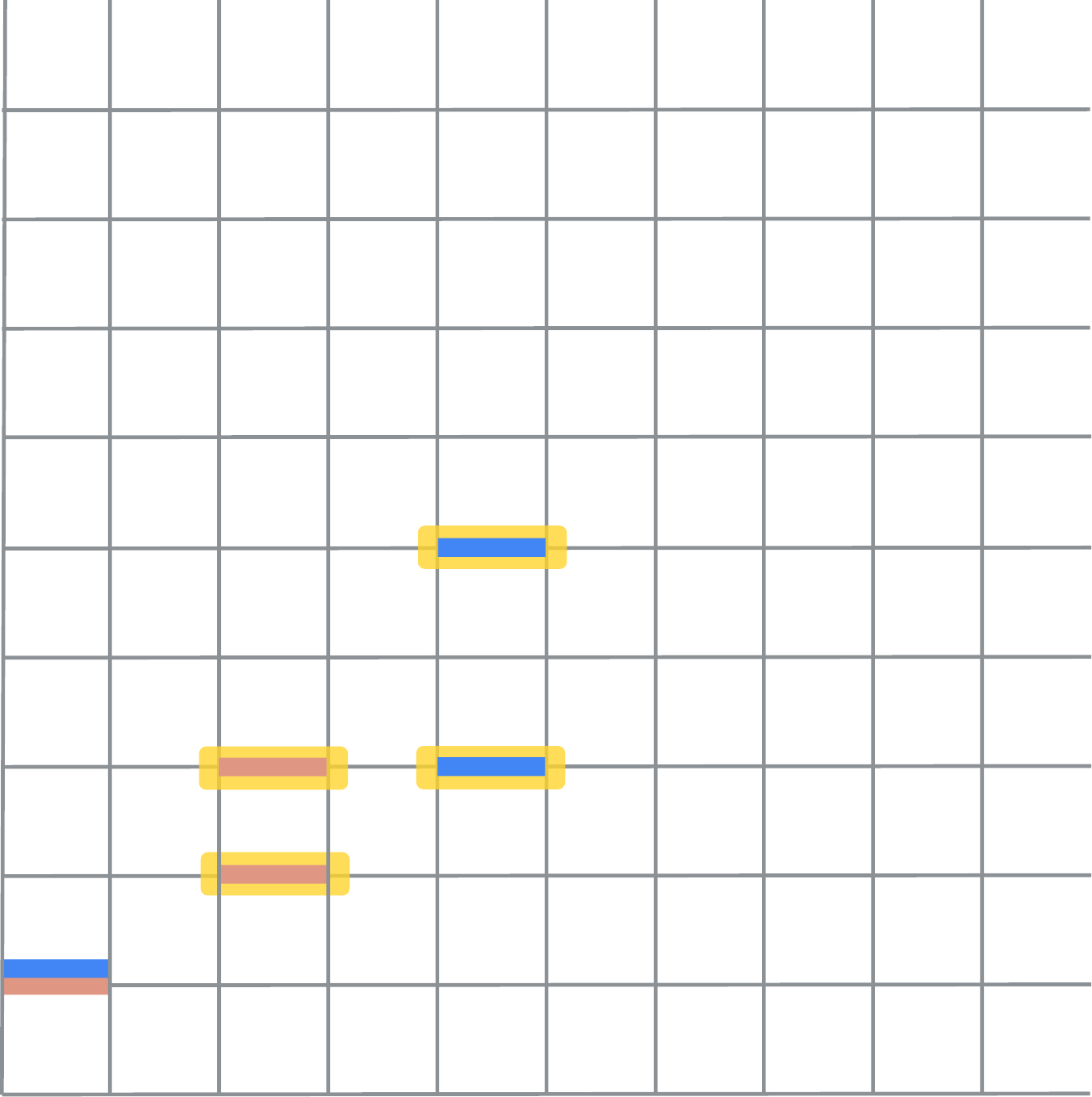}
        \caption{\label{fig:materialized-symmetry-2-a}}
    \end{subfigure}
    \hspace{2em}
    \begin{subfigure}[b]{0.28\linewidth}
        \includegraphics[width=\linewidth]{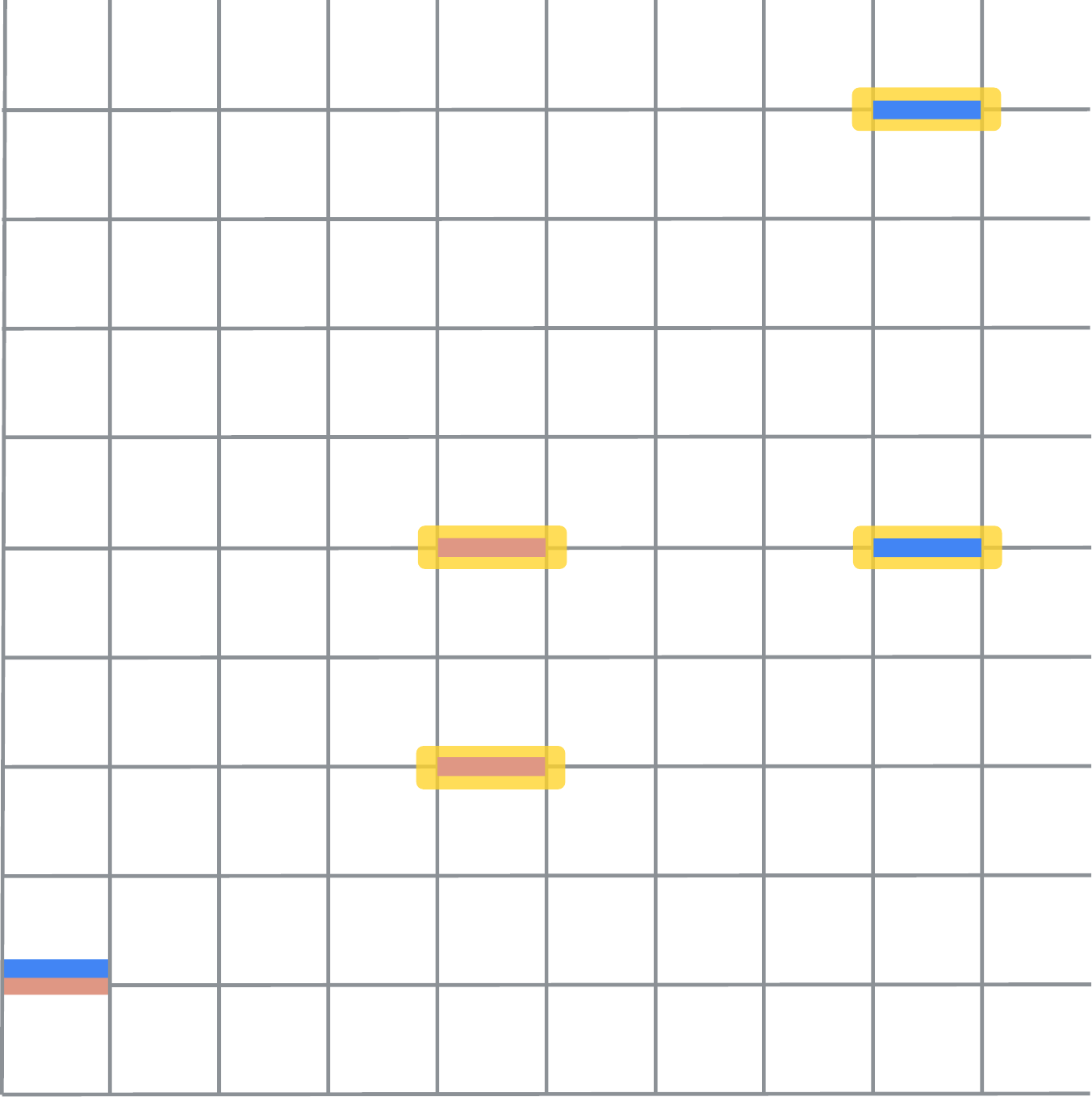}
        \caption{\label{fig:materialized-symmetry-2-b}}
    \end{subfigure}
    \hspace{2em}
    \begin{subfigure}[b]{0.28\linewidth}
        \includegraphics[width=\linewidth]{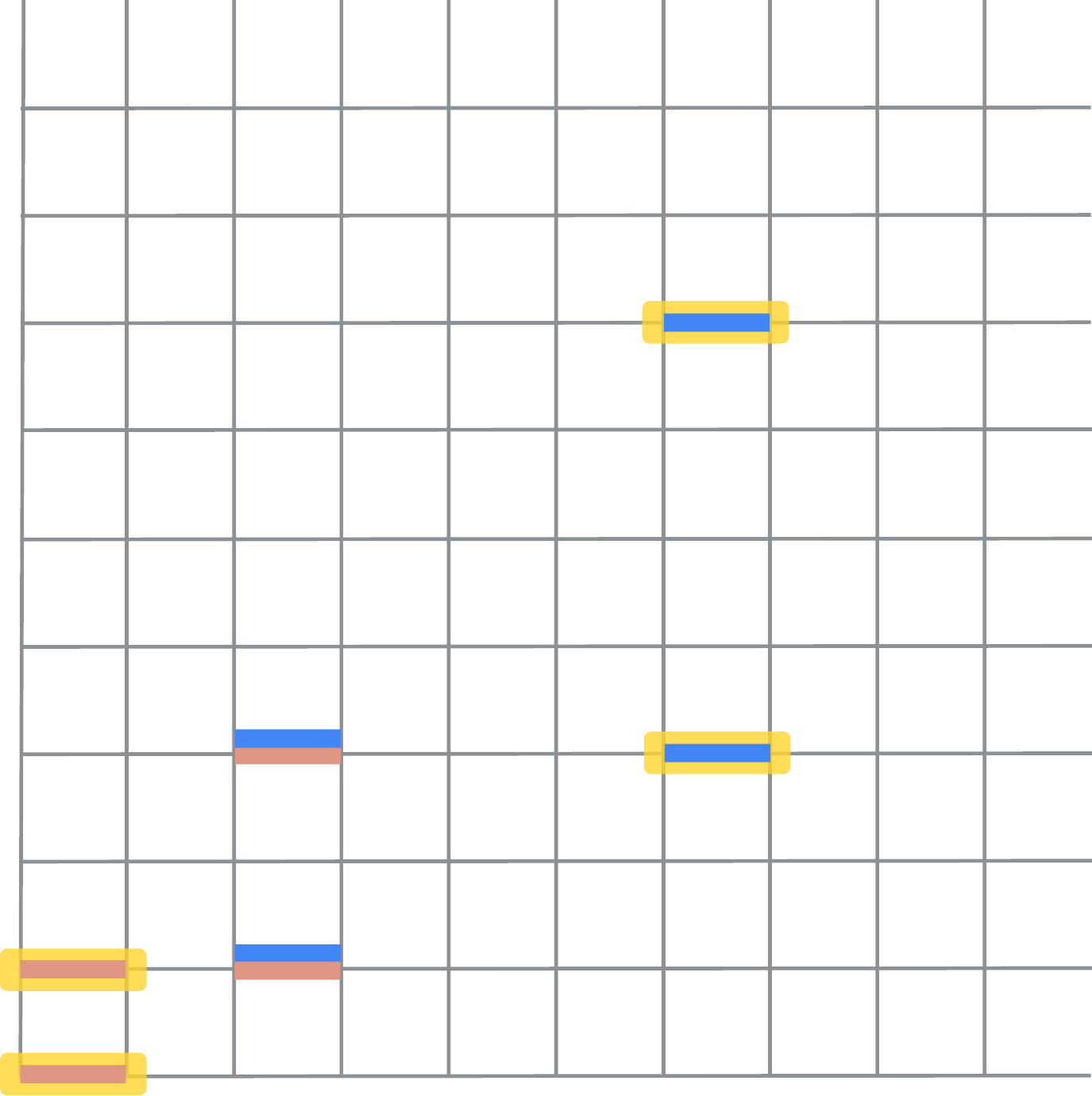}
        \caption{\label{fig:materialized-symmetry-2-c}}
    \end{subfigure}
\captionsetup{justification=justified,singlelinecheck=false}
    \caption{
        Weight-4 checks $S_k^{(x,y)}$ and repetition code on an infinite lattice
        \textbf{(a)} By multiplying $\tilde{S}_k^{(x,y)}$ with $\tilde{S}_{k+1}^{(x,y)}$, we obtain a weight-4 check that we call $S_k^{(x',y')}$, with $x'=x+2^k$ and $y'=y+2^{k-1}$. We represent here $S_0^{(2,2)}=\tilde{S}_0^{(0,1)}\tilde{S}_1^{(0,1)}$, with a support highlighted in yellow. 
        \textbf{(b)} Representation of $S_1^{(4,3)}=\tilde{S}_1^{(0,1)}\tilde{S}_2^{(0,1)}$. 
        \textbf{(c)} Product $M_k^{(x,y)}$ used to form the materialized symmetry. We show here $M_1^{(0,0)}=S_0^{(0,0)} S_1^{(2,1)}$.
        %\textbf{(c)} Materialized symmetry on a $7 \times 7$ periodic lattice: the product of $S_0^{(0,1)}$ (red), $S_1^{(2,2)}$ (blue) and $S_2^{(6,4)}$ (green) is the identity. 
    }
    \label{fig:materialized-symmetry-2}
\end{figure}

\begin{figure}
    \centering
    \includegraphics[width=0.24\linewidth]{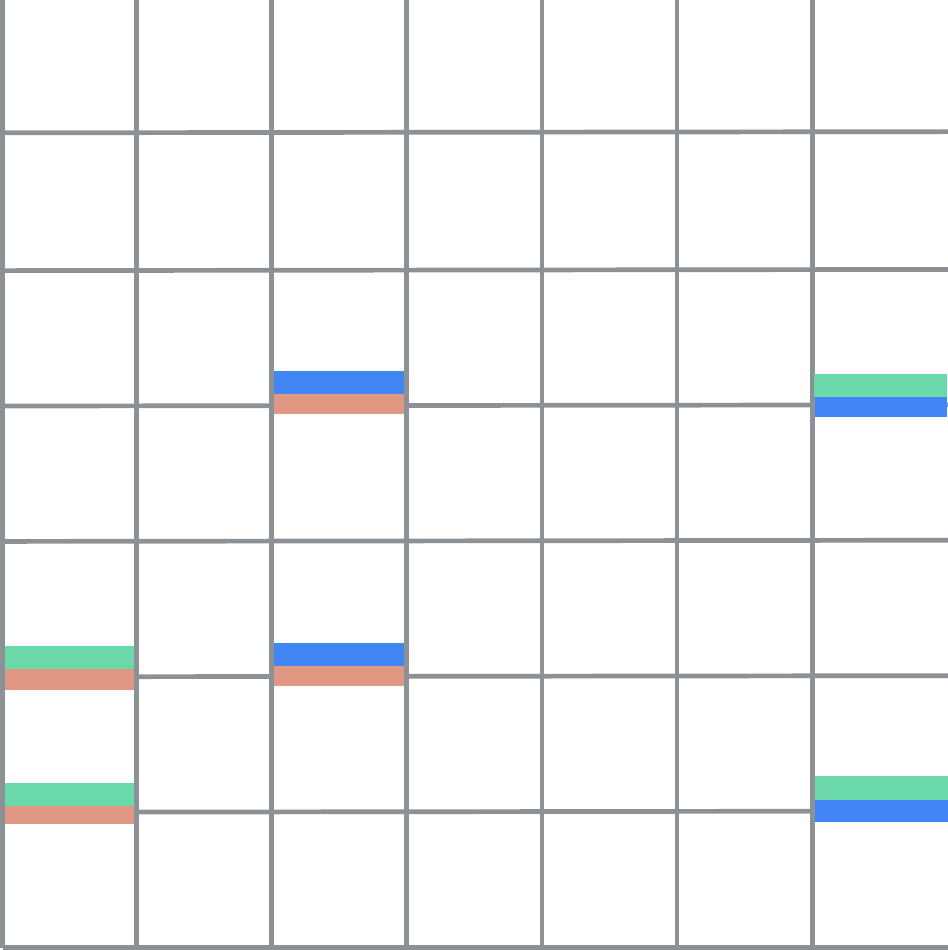}
    \captionsetup{justification=justified,singlelinecheck=false}
    \caption{
        Materialized symmetry on a $7 \times 7$ periodic lattice: the product of $S_0^{(0,1)}$ (red), $S_1^{(2,2)}$ (blue) and $S_2^{(6,4)}$ (green) is the identity. Those three checks form a repetition code, whose variables $X_i$ are the products of the two edges in the same column that are involved in the symmetry: $X_0=X_{(0,1)} X_{(0,2)}$, $X_1=X_{(2,2)} X_{(2,4)}$ and $X_2=X_{(6,4)} X_{(6,1)}$.
    }
    \label{fig:materialized-symmetry-3}
\end{figure}

\cref{eq:condition-for-symmetry-a} and \cref{eq:condition-for-symmetry-b} are equivalent to
\begin{align}
    2^{k+2} - 2 &\equiv 0 \pmod{7\ell}, \label{eq:x-cond}\\
    2^{k+1} - 1 &\equiv 0 \pmod{7\ell}.  \label{eq:y-cond}
\end{align}
Since $L=7\ell$ with $\ell$ odd, we have \(\gcd(2,7\ell)=1\), so we may divide
\eqref{eq:x-cond} by 2 and obtain
\begin{align}
    2^{k+1} \equiv 1 \pmod{7\ell},
\end{align}
which is exactly \eqref{eq:y-cond}.  
Since \(\gcd(2,7\ell)=1\), such a $k$ always exists, and the smallest one is given by
\[
k^\ast = \ord_{7\ell}(2) - 1
\]
where $\ord_{7\ell}(2)$ is the multiplicative order of \(2\) in \(\mathbb{Z}_{7\ell}^\times\).
It corresponds to the number of checks involved in the materialized symmetry, and therefore, to the size of the corresponding repetition code. In order to prove that the code has a threshold of $50\%$, we need to show that those repetition codes scale with the system size.

\textbf{Scaling of $k^\ast$. }
We now study how $\operatorname{ord}_{7\ell}(2)$ scales with system size, using our assumptions about $L=7\ell$:
\begin{itemize}
  \item \(\ell\) is prime and \(\ell \equiv 2 \pmod{3}\),
  \item \(2\) is a primitive root modulo \(\ell\), that is
        \(\operatorname{ord}_\ell(2) = \ell - 1\).
\end{itemize}

Since \(\gcd(7,\ell)=1\), we can apply the Chinese remainder theorem.  
For any integer \(x\) with \(\gcd(x,7\ell)=1\),
\[
\operatorname{ord}_{7\ell}(x) = \operatorname{lcm}
\bigl(\operatorname{ord}_7(x), \operatorname{ord}_\ell(x)\bigr).
\]
In particular,
\[
\operatorname{ord}_{7\ell}(2)
= \operatorname{lcm}\bigl(\operatorname{ord}_7(2), \operatorname{ord}_\ell(2)\bigr).
\]

Since $2^3 = 8 \equiv 1 \pmod{7}$, we have $\operatorname{ord}_7(2) = 3$.
Therefore
\[
\operatorname{ord}_{7\ell}(2) = \operatorname{lcm}\bigl(3,\operatorname{ord}_\ell(2)\bigr).
\]

By Lagrange's theorem, the order \(\operatorname{ord}_\ell(2)\) divides
\(\ell - 1\).  
Since \(\ell \equiv 2 \pmod{3}\), one has $\ell - 1 \equiv 1 \pmod{3}$
so \(3\) does not divide \(\ell - 1\), which implies
$3 \nmid \operatorname{ord}_\ell(2)$
and
\[
\ord_{7\ell}(2) = 3\,\operatorname{ord}_\ell(2).
\]

Under the additional hypothesis that \(2\) is a primitive root modulo \(\ell\),
we have \(\operatorname{ord}_\ell(2) = \ell - 1\), so finally
\[
k^\ast = 3(\ell - 1) - 1
\]
In particular, for these primes \(\ell\), the step at which a materialized symmetry appears in the scaling sequence grows linearly with \(\ell\).

\textbf{Decoding strategy. } Every pair of successive checks of a materialized symmetry $M^{(x,y)}_{k^\ast}$ overlaps on two qubits on the same column, with coordinates of the form $(x+2^{j+1}-2, y+2^j-1)$ and $(x+2^{j+1}-2,y+2^{j+1}-1)$ modulo $L$. Writing
\begin{align}
    X_j = X_{(x+2^{j+1}-2, y+2^j-1)} X_{(x+2^{j+1}-2,y+2^{j+1}-1)}
\end{align}
we have
\begin{align}
    M^{(x,y)}_{k^\ast}= \prod_{j=0}^{k^\ast} \left( X_{j} X_{j+1} \right)
\end{align}
That is, every check involved in the symmetry can be written as a weight-2 check over the new variables $X_j$, and the set of those checks forms a length-$k^\ast$ repetition code. For a small error and a successful decoding of this repetition code, we can therefore infer the value of the product
\begin{align}
    X_{(x+2^{j+1}-2, y+2^j-1)} X_{(x+2^{j+1}-2,y+2^{j+1}-1)}
\end{align}
for all $j \in \{0,\ldots,k^\ast\}$. In particular, picking $j=0$, we now have the value of
\begin{align}
    X_{(x, y)} X_{(x,y+1)}
\end{align}
Solving the repetition code corresponding to every materialized symmetry $M^{(x,y)}_{k^\ast}$, we therefore have a new repetition code for every column of the lattice, with checks of reduced weight $2$. Decoding it allows us to infer the value of every variable $X_{(x,y)}$. 

\textbf{Threshold. } Let us now show that this decoder has a threshold error rate of 50\%. The probability of success of our strategy is lower bounded by the probability that each repetition code decoder succeeds. The first step of our decoder consists of decoding $L^2$ repetition codes of size $\Theta(L)$, and the second step consists of decoding $L$ repetition codes of size $\Theta(L)$. As the success probability of a given repetition code is of the form $1-Ae^{-\alpha L}$ with $\alpha > 0$ and $0 < A < 1$ for any physical error rate below $50\%$, we have the lower bound 
\begin{equation}
    p_{\text{success}} \geq \left(1-e^{-\alpha L}\right)^{L^2} \left(1-e^{-\beta L}\right)^L > \left( 1-L^2e^{-\alpha L} \right) \left( 1-Le^{-\beta L} \right) \xrightarrow[L \rightarrow \infty]{} 1
\end{equation}
on the total success probability. 
Therefore, for any physical error rate below $50\%$, the error probability goes to zero as we increase the lattice size. This finishes the proof that the code has a threshold of $50\%$. 
\end{proof}

\subsection{Belief Propagation Decoding for finite bias}

Belief Propagation with Ordered Statistics Decoding (BPOSD) is a powerful two-stage decoder for LDPC codes. In the first stage, Belief Propagation (BP) operates on the Tanner graph to generate soft reliability estimates for each qubit, providing approximate marginal probabilities of Pauli errors consistent with the measured syndrome. Although BP is computationally efficient and exploits the sparsity of LDPC structures, it can stall or misidentify errors in the presence of short cycles or degeneracy. To address this, BPOSD adds a second stage, Ordered Statistics Decoding (OSD), which uses the BP-derived reliabilities to identify the least trustworthy qubits and performs a targeted search over a small set of low-weight flip patterns. For each candidate, OSD solves the parity-check equations exactly and selects the most likely correction compatible with the syndrome. The combination of fast probabilistic inference (BP) with a refined algebraic search (OSD) yields a decoder that is both efficient and remarkably accurate, approaching maximum-likelihood performance on many LDPC codes.

\textit{Logical failure estimation for CSS variant:}

Let's consider a CSS stabilizer code defined by a binary parity–check matrix $\mathcal{H}_{CSS }=\left( \begin{matrix}
\mathcal{H}_X & 0 \\
0 &  \mathcal{H}_Z
\end{matrix} \right),$ with $\mathcal{H}_X,$ $\mathcal{H}_Z$ $\in \mathbb{Z}^{m\times n}_2$ and logical operators $\mathcal{L}_X,$ $\mathcal{L}_Z$ $\in \mathbb{Z}^{k\times n}_2.$
Each qubit $q$ undergoes a biased Pauli channel,
\begin{align}
    \Pr(E_q = P) = p_q(P), \qquad P \in \{I, X, Y, Z\}.
\end{align}
For an \(n\)-qubit system, the physical error is the Pauli vector
\begin{align}
E = (E_1, \ldots, E_n), \qquad
\Pr(E) = \prod_{q=1}^n p_q(E_q).
\end{align}
In the binary symplectic representation, the same Pauli error is written as
\begin{align}
    e = [ e_X \mid e_Z] \in \mathbb{Z}_2^{2n},
\end{align}
where $
e_X,\, e_Z \in \mathbb{Z}_2^{n}
$
denote the \(X\)- and \(Z\)-supports of the Pauli operator.
Given some error $e,$ one can extract the syndrome $s$ from the relation 
\begin{align}
    \mathcal{H}_{CSS }\odot e=s 
\end{align}
and given that syndrome $s,$ the BPOSD decoder provides the most probable correction operator $c.$ The residual operator
\begin{align}
    r=e+c \hspace{0.1in}\text{(mod 2)} \in \text{ker}(\mathcal{H}_{CSS})
\end{align}
that implies;  $r$ can be either a stabilizer or a logical operator. 

Logical failure is determined by the residual Pauli operator after decoding. In binary symplectic form, write the residual as
\[
r=(r_X\mid r_Z)\in \mathbb{F}_2^{2n},
\]
where $r_X,r_Z\in\mathbb{F}_2^n$ denote the $X$- and $Z$-components of the residual operator.

For the undeformed CSS code, let
\[
L_X\in\mathbb{F}_2^{k\times n},
\qquad
L_Z\in\mathbb{F}_2^{k\times n},
\]
be binary matrices whose rows represent a choice of independent $X$-type and $Z$-type logical operators. In full symplectic form, these are embedded as
\[
(L_X\mid 0),
\qquad
(0\mid L_Z).
\]
The residual operator $r$ is logically trivial if and only if it commutes with all independent logical operators. Equivalently, decoding fails whenever $r$ anticommutes with at least one logical operator, namely whenever
\begin{align}
L_Z\,r_X^T \neq 0 \pmod 2
\qquad \text{or} \qquad
L_X\,r_Z^T \neq 0 \pmod 2.
\end{align}
If either condition holds, the decoder has introduced a non-trivial logical operator and decoding fails. If both expressions vanish, then the residual is logically trivial, i.e.\ it differs from the identity only by a stabilizer.

\textit{Decoding for deformed codes:}

Under a Clifford deformation, we keep the physical noise channel fixed and deform the code instead. For the original CSS code, the parity-check matrix in symplectic form is
\[
H=
\begin{pmatrix}
H_X & 0\\
0 & H_Z
\end{pmatrix},
\]
and the independent logical operators are written as
\[
L=
\begin{pmatrix}
L_X & 0\\
0 & L_Z
\end{pmatrix}.
\]

After deformation, each of these blocks is mapped to a full symplectic matrix. Concretely,
\[
(H_X\mid 0)\longmapsto H_X^{\mathrm{def}},
\qquad
(0\mid H_Z)\longmapsto H_Z^{\mathrm{def}},
\]
and similarly
\[
(L_X\mid 0)\longmapsto L_X^{\mathrm{def}},
\qquad
(0\mid L_Z)\longmapsto L_Z^{\mathrm{def}}.
\]
Thus the full deformed parity-check and logical matrices are obtained by vertical stacking:
\[
H_{\mathrm{def}}=
\begin{pmatrix}
H_X^{\mathrm{def}}\\
H_Z^{\mathrm{def}}
\end{pmatrix},
\qquad
L_{\mathrm{def}}=
\begin{pmatrix}
L_X^{\mathrm{def}}\\
L_Z^{\mathrm{def}}
\end{pmatrix}.
\]
Here \(H_X^{\mathrm{def}},H_Z^{\mathrm{def}}\in\mathbb{F}_2^{m_X\times 2n}\) and \(L_X^{\mathrm{def}},L_Z^{\mathrm{def}}\in\mathbb{F}_2^{k\times 2n}\) are full symplectic matrices. In particular, the superscript “def” indicates the deformed image of the original CSS blocks, not merely the \(X\)- or \(Z\)-part of a deformed matrix.

Using these deformed stabilizers and logical operators, decoding proceeds as above. The residual \(r=(r_X\mid r_Z)\) fails whenever it has a nonzero symplectic product with at least one of the \(2k\) deformed logical rows. Unlike the undeformed CSS case, after deformation, the logical operators are generally mixed \(X\)- and \(Z\)-type operators, so all rows of both \(L_X^{\mathrm{def}}\) and \(L_Z^{\mathrm{def}}\) must be checked.

% \paragraph{Heisenberg picture (deformed noise).}
% Here the parity-check matrices and logical operators remain fixed
% but the physical Pauli channel
% is transformed according to the Clifford deformation, yielding effective
% probabilities;
% \begin{align}
% \{p(I),\; p(X),\; p(Y),\; p(Z)\}
% \xrightarrow[\text{}]{\text{deformation}}
% \{p'(I),\; p'(X),\; p'(Y),\; p'(Z)\}.
% \end{align}

% After that errors are sampled from this deformed channel. However Heisenberg-picture decoding is more subtle than simply deforming the physical noise channel. Although the Clifford deformation can be pushed onto the Pauli channel at the operator level, BP decoding is not invariant under this transformation because its variable and check node update rules depend on the specific X/Z structure of the stabilizers. Clifford rotations mix Pauli components and can introduce multi-qubit correlations, while BP assumes factorized single-qubit priors and fixed parity constraints. Therefore, updating only the error channel is not sufficient: one must also transform the BP message-passing rules to reflect the rotated Pauli basis. Without this modification, the Heisenberg picture does not reproduce the decoding behavior of the Schrödinger picture.
\subsection{Threshold plots for CSS variant and the translational invariant tile codes}
Here we show the threshold plots for CSS tile codes under periodic and open boundary conditions under infinite bias ($\eta = \infty$) with corresponding thresholds $\approx 7.46\%$ and $\approx 10.04\%$ respectively (See \cref{fig:periodic} and \cref{fig:open}). Each data point for $p_{\mathrm{L}}$ is averaged over 40000 Monte Carlo runs. The thresholds are extracted from finite-size scaling. All translational invariant codes show a 50\% threshold under infinite bias ($\eta = \infty$), shown in \cref{fig:ti}.

\begin{figure}
    \centering
    \includegraphics[width=1\linewidth]{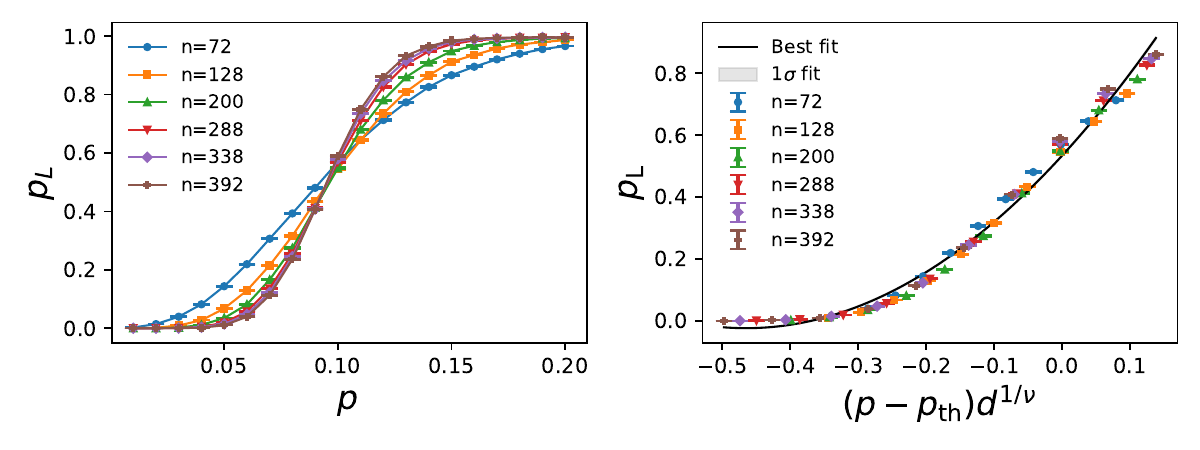}
    \captionsetup{justification=justified,singlelinecheck=false}
    \caption{\textbf{Threshold for CSS tile code under open boundary}: The performance using BP+OSD decoder for the tile codes under open boundary condition. From finite-size scaling, we get the threshold $p_\mathrm{th}$=0.1004}
    \label{fig:open}
\end{figure}

\begin{figure}
    \centering
    \includegraphics[width=1\linewidth]{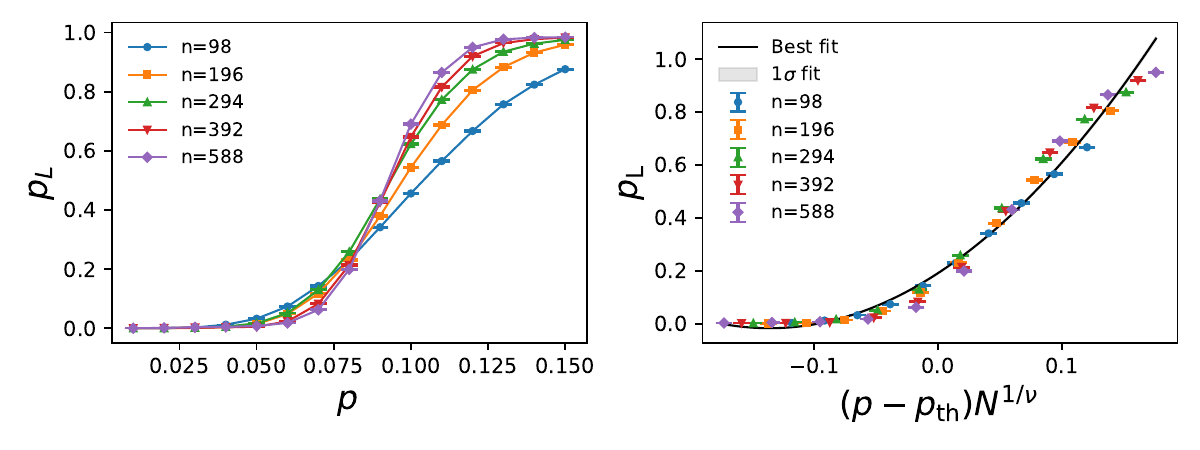}
    \captionsetup{justification=justified,singlelinecheck=false}
    \caption{\textbf{Threshold for CSS tile code under periodic boundary}: The performance using BP+OSD decoder for the tile codes under periodic boundary condition.  From finite-size scaling, we get the threshold $p_\mathrm{th}$=0.0746}
    \label{fig:periodic}
\end{figure}

\begin{figure}
\centering
\includegraphics[width=1
\columnwidth]{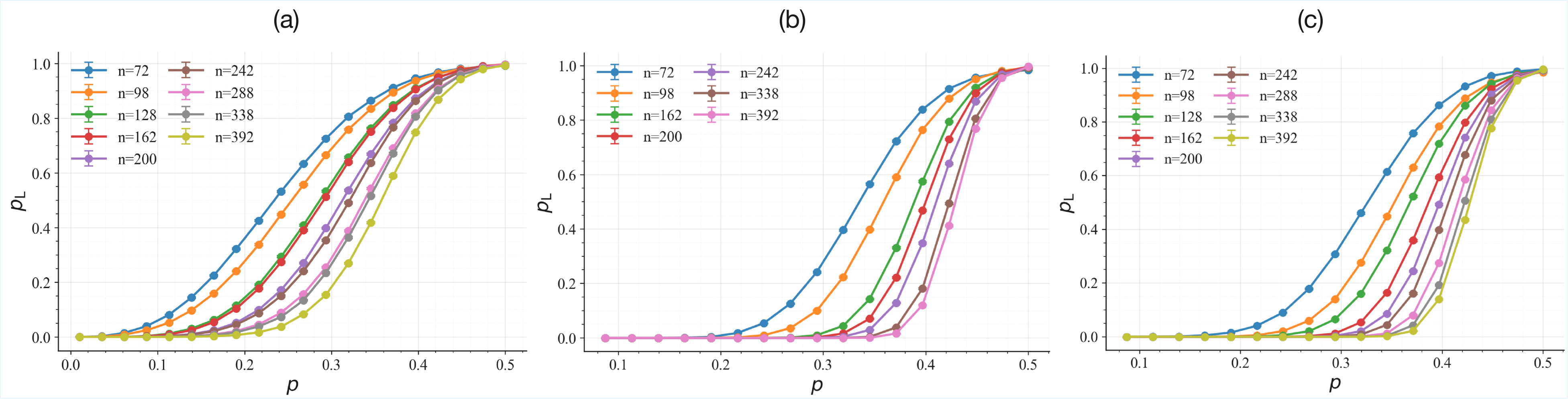}
\captionsetup{justification=justified,singlelinecheck=false}
\caption{\textbf{Threshold for three translational invariant codes}: Performance of CDTCs in the infinite-bias limit is shown for the (a) linear deformation, the  (b) XY deformation, and (c) the deformation (associated with the phase point (0.25,0.5)) whose unit cell is defined in ~\cref{fig:TI_middle_unit_cell}. All three exhibit a 50\% threshold.
\label{fig:ti}}
\end{figure}

% \begin{figure}
% \centering
% \includegraphics[width=1
% \columnwidth]{TI_025_05_without_MWBLO_final.pdf}

% \caption{ Fig. \cref{fig:TI_025_05_without_MWBLO_final} (a) shows the performance of CDTCs under TI (0.25,0.5) for infinite bias. The unit cell for this code is described in Fig. \cref{fig:TI_025_05_without_MWBLO_final} (b) where $YZ (XZ)$ deformations are performed on the red(green) qubits. The black qubits remain invariant. 
% \label{fig:TI_025_05_without_MWBLO_final}}
% \end{figure}
\vspace{0.2in}
For the below subsections, most of the associated plots are provided in Supplemental Material (SM) \cite{supplement}.
\subsection{Phase Diagram}
\label{app:phase-diagram}
As we discussed, we employ tile codes because they constitute a widely accessible family of LDPC codes with a constant number of logical qubits~\cite{steffan2025tile}. 
Throughout this work, we focus on the family encoding eight logical qubits, as described in  \cref{sec: Clifford Deformed Tile code}, using open boundary conditions. 
All translationally invariant (TI) Clifford-deformed tile codes exhibit clear threshold behavior in this setting. 
Consequently, open-boundary tile codes are used in all numerical studies except for the phase-diagram analysis, where periodic boundary conditions are adopted.

Under infinite bias with random Clifford deformations, open-boundary codes exhibit a pronounced finite-size crossover (pseudo-threshold) region in the physical error-rate range
\(0.2 \lesssim p \lesssim 0.3\) for most phase points. 
This behavior is attributed to a decoder-induced finite-size crossover, likely arising from belief-propagation instabilities caused by boundary-induced short cycles in the Tanner graph. 
The anomaly disappears under periodic boundary conditions, where the graph becomes more regular, and belief propagation remains stable.

For the same code family, imposing periodic boundary conditions reduces the number of logical qubits from eight to six and restricts the allowed lattice sizes to
\(L_x, L_y \in 7\mathbb{N}\), where \(\mathbb{N}\) denotes the set of positive integers.

Here we show the performance plots \footnote{, In most of the plots of the following subsections, the number of qubits is expressed by $N$,} for different phase points $(\Pi_{XZ},\Pi_{YZ})$ for random CDTCs under infinite bias for the phase diagram shown in  \cref{fig:phase_diagram}. The plots exhibit logical
error rate $p_L$ versus physical error rate $p$. The threshold is the maximum physical error rate $p_{\text{th}}$ such that, for all $p<p_{\text{th}}$, the logical failure probability decreases as the code distance (or system size) increases. Each data point is averaged over
50,000 Monte Carlo runs of the BPOSD decoder. 

% Fig. \cref{fig:50_percent_set1}, \cref{fig:50_percent_set2},\cref{fig:50_percent_set3} and \cref{fig:near_50_percent}
Figs. S1, S2, S3, and S4  (See SM \cite{supplement}) show the plots of the 50\% threshold corresponding to the blue regime in the phase diagram. For the phase points in Fig. S4,
%\cref{fig:near_50_percent},
The largest code considered exhibits deviations from the expected scaling behavior, which is possibly due to the decoder artifacts.  For a few phase points lying on the yellow boundary—such as $(0.5,0),(0,1)$ the threshold extracted from the BP-OSD decoder is ambiguous or not sharply defined. Although we expect a 50\% threshold for those points, if we study certain finer subfamilies of tile codes with this periodic boundary condition. Note that there is a 50\% threshold for translational invariant tile code under linear and XY deformation, where we used \textit{open} boundary condition. The thresholds for some points on the yellow boundary are illustrated in Fig. S5.
%\cref{fig:boundary_yellow}.
As we move away from the blue region, the threshold systematically decreases. Interestingly, however, the phase diagram contains several iso-threshold contours, where distinct parameter regions exhibit similar threshold behavior. The corresponding performance plots for these iso-threshold regimes are shown in Figs. S6-S10

\subsection{MWBLO Result}
\label{app:mwbo}
In this subsection, we provide a detailed characterization of the MWBLO (Minimum Weight basis Logical Operator) for different CDTCs under infinite $Z$-bias. Throughout the plots, $|\text{BLO}|= |\mathcal{B}_Z(n)|$ denotes the number of logical operators in the basis, while the $Z$-distance refers to the weight of the minimum-weight logical operator within that basis. The overlap per logical quantifies the total overlap associated with the minimum-weight logical.

We first present the MWBLO data for several phase points within the 50\%-threshold region (see Fig. 
S11 and S12 in SM \cite{supplement}),
%\cref{fig:MWBLO_data_with_50_threshold_set1} and %\cref{fig:MWBLO_data_with_50_threshold_set2}), 
followed by some representative points located along the yellow boundary in Fig. S13.
%\cref{fig:MWBLO_data_with_50_threshold_boundary}.
The results in Fig.  S11 and S12 
%\cref{fig:MWBLO_data_with_50_threshold_set1} and %\cref{fig:MWBLO_data_with_50_threshold_set2}
are consistent with the analytical proof establishing a $50\%$ threshold. 
In particular, each plot satisfies the conditions
\[
 0<\alpha\le 1 .
\]

As we move from the center of the blue regime to the yellow boundary (as in phase diagram \cref{fig:phase_diagram}), the $|\text{BLO}|$ starts increasing very slowly, maintaining the 50\% threshold criteria, and the corresponding results are summarized in Fig. S14.
%\cref{fig:middle_to_yellow_boundary}. 
Beyond the yellow boundary, the MWBLO data supporting \cref{fig:support_for_threshold_decrease_final} illustrate that as we move further away from the blue regime of the phase diagram, $|\text{BLO}|$ grows rapidly and the corresponding threshold drops below the 50\% criterion. This behavior, along with the average $Z$-distance and total overlap per logical are shown in Fig. S16.
%\cref{fig:logical_vs_threshold}.
Different iso-threshold contours outside the blue regime can be explained by the Figs. S18 and S19.
%\cref{fig:supporting_data_1}  and \cref{fig:supporting_data_2}.
As discussed in \cref{sec:mwblo}, when $|\text{BLO}|$ starts to increase significantly, the \cref{thm:subexp_logicals_threshold} becomes invalid. In this context, we study the slope $s$ in scaling $|\text{BLO}| = sn + c.$ We show how the average $|\text{BLO}|$ increases as the threshold drops, as well as each iso-threshold contour reflects the same scaling of $|\text{BLO}|$ shown in  \cref{fig:support_for_threshold_decrease_final}.

\begin{figure}[t]
\centering
\includegraphics[width=1\columnwidth]{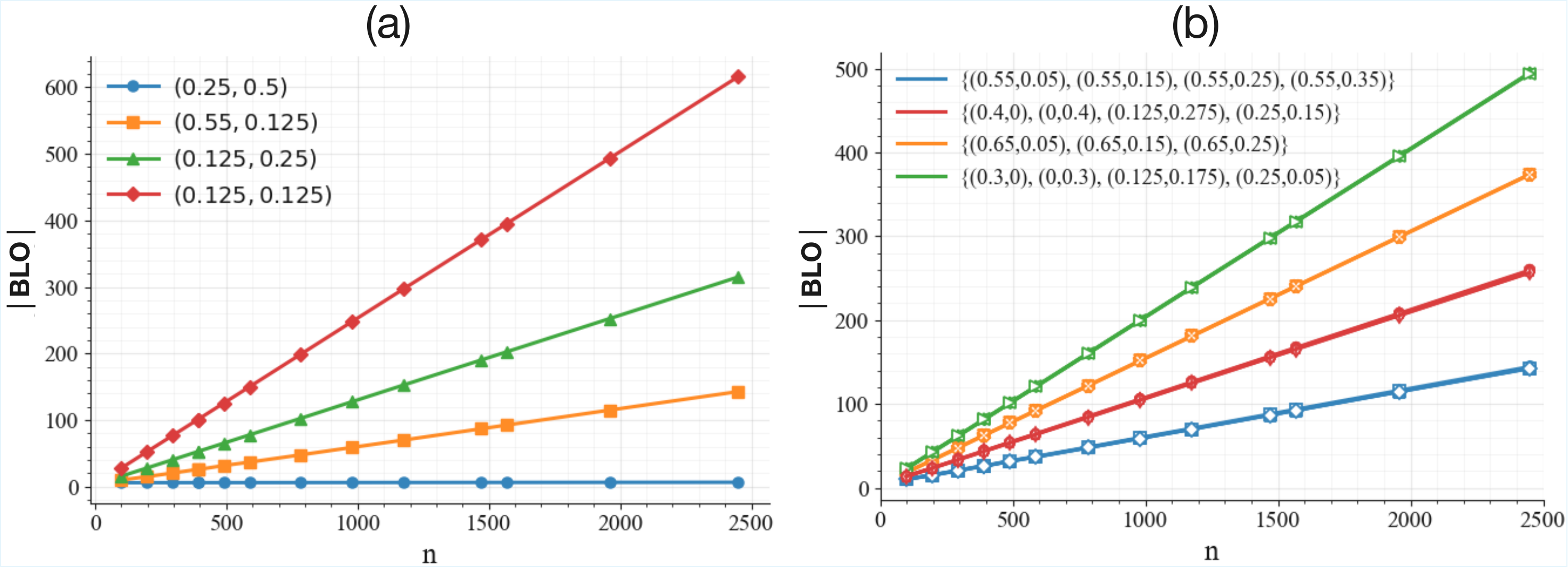}
\captionsetup{justification=justified,singlelinecheck=false}
\caption{
Monotonic growth of BLO size as the threshold decreases and scaling across iso-threshold contours. (a) As one moves away from the $50\%$ threshold regime in ~\cref{fig:phase_diagram}, the decoding threshold decreases, and the number of BLO elements grows sharply with system size. (b) Within each iso-threshold contour, the BLO size exhibits the same scaling with system size, indicating a shared slope $s$. Different contours correspond to different $s$, separating the phase diagram into families with distinct asymptotic BLO growth.}
\label{fig:support_for_threshold_decrease_final}
\end{figure}

Earlier, we described the procedure used to minimize the weight of each basis logical operator. Specifically, for a given basis logical operator, we collect all nearby $Z$ stabilizers and form all possible linear combinations of these stabilizers. Each such combination is multiplied by the logical operator under consideration, and the representative with the minimum weight is selected.
As one moves into lower-threshold regimes, the number of relevant $Z$ stabilizers grows rapidly, making the enumeration of all linear combinations computationally expensive. Consequently, MWBLO data are presented only for smaller system sizes at phase points corresponding to lower thresholds.

Apart from the phase diagram, there are three translational invariant tile codes for the open boundary. The MWBLO data clearly justifies the 50\% threshold of those three codes (See Fig. S17). 
%\cref{fig:TI_MWBLO_final}). 
On the other hand, for the code family under consideration,  the TI linear and $XY$ variants under periodic boundary do not exhibit clear thresholds from the decoding of the BP-OSD decoder, and the associated MWBLO plots also show fluctuating behaviors (shown in Fig. S15).
%\cref{fig:TI_periodic}). 
But it is possible for some finer sub-families, for example, in the case of the sub-family in the weight reduction subsection (see \cref{app:weight-reduction}), we get a 50\% threshold for the linear deformed tile code under periodic boundary.

% \subsection{Finite Bias Code Capacity Model}

\subsection{Stim-based circuit-level simulations and generalization to tile codes}
\label{app:stim}

To perform circuit-level threshold studies for the Clifford-deformed tile codes, we use the high-performance stabilizer-circuit simulator \texttt{Stim}. Stim enables extremely fast Monte Carlo simulation of large stabilizer circuits through two core design features: firstly 
an optimized \textit{stabilizer-tableau backend} that compiles the circuit into an inverse tableau, allowing deterministic measurements to be updated in linear---rather than quadratic---time; and secondly 
a \textit{Pauli-frame sampling engine} that reuses a compiled tableau to propagate batches of Pauli frames, enabling millions of independent shots to be generated at negligible marginal cost.

These optimizations, combined with cache-friendly memory layouts and explicit SIMD vectorization in Stim's C++ backend, allow simulation of surface-code-scale memory circuits with $10^6$--$10^7$ gates and hundreds of qubits. Stim is therefore ideally suited for the large-scale circuit-level threshold sweeps required in this work.

\subsubsection{Detector error models.}
Stim exposes the full space--time structure of a stabilizer circuit via the \texttt{DETECTOR}, 
\texttt{SHIFT\_COORDS}, and \texttt{OBSERVABLE\_INCLUDE} instructions.
A \textit{detector} specifies a parity constraint on a set of measurement outcomes and \textit{fires} whenever that parity flips.
From a detector-annotated circuit, Stim's compiler produces a \textit{detector error model} (DEM): a hypergraph-like representation that records which elementary errors flip which detectors and logical observables.
The DEM can be passed directly to the decoders, including BPOSD, union-find, and MWPM-based decoders, eliminating all hand-written syndrome logic.
Thus, the circuit itself fully specifies the syndrome-extraction process.

\subsubsection{Baseline generator.}
Our circuit generator follows the ``head--body--tail'' structure used in Stim's built-in surface-code templates:

\textit{Head:} initializes data qubits and ancillas, assigns planar coordinates via \texttt{QUBIT\_COORDS}, and defines the first layer of detectors;

\textit{Body:} repeats the stabilizer-measurement cycle consisting of (i) ordered CNOT/CZ sub-rounds and (ii) ancilla measurements. Consecutive rounds are linked using \texttt{SHIFT\_COORDS} and differential detectors;

\textit{Tail:} performs final data measurements and appends terminating detectors and \texttt{OBSERVABLE\_INCLUDE} instructions to define logical observables.

A key architectural feature is the clean separation between \textit{geometry} (qubit coordinates, ancilla placement, and interaction offsets) and \textit{circuit logic} (round structure, detectors, and measurement scheduling).
Once the geometry is specified, the same head--body--tail logic produces valid Stim circuits for any lattice configuration.

\subsubsection{Generalization to CSS tile codes.}
To extend the surface-code generator to the Clifford-deformed tile code, we retain the same head--body--tail architecture but replace the geometry layer.

\textit{Data-qubit graph.}
For a tile code of size $(l,m)$ with block parameter $B=3$, we enumerate horizontal and vertical edges of the underlying square lattice, keep only those participating in at least one stabilizer, and assign each retained edge a data-qubit index.
Each qubit is then assigned planar coordinates (scaled by an appropriate factor) so that all data and ancillas occupy distinct integer lattice sites.

\textit{Ancilla placement.}
Red (X-type) and blue (Z-type) stabilizers are generated from fixed local offset templates applied to bulk anchors and boundary anchors.
A  real-axis displacement ($\pm 1$) is added between the two ancilla sets purely to avoid coordinate collisions, ensuring that Stim can assign unique detector coordinates.

\subsubsection{Finite-bias noise and Clifford deformation.}
Stim supports biased Pauli noise through \texttt{PAULI\_CHANNEL\_1}.  
For a total error rate $p$ and Z-bias $\eta$, we use the decomposition
\begin{align}
    p_Z = p \frac{\eta}{\eta + 2}, \hspace{0.2in}
    p_X = p_Y = p \frac{1}{\eta + 2},
\end{align}
consistent with the biased-Pauli model used in the main text.
Noise can be inserted before measurement, after reset, after Clifford gates, or at the end of each stabilizer-measurement round.

We apply the deformations on top of the CSS variant (See  \cref{fig:circuit_css}) by using suitable Clifford gates on the qubits and/or between qubits and checks throughout the full circuit.  Because all stabilizer rounds (including the linear and CSS-like deformations) consist only of Clifford gates, the entire deformed tile-code circuit remains a stabilizer circuit and is exactly representable in Stim.
Stim therefore produces an exact DEM, and the BPOSD decoder can be applied directly to the resulting detection graph.
The same pipeline applies to all deformation families studied, with only the interaction schedules differing.

Overall, Stim’s detector-based circuit compiler combined with our geometry-parametric tile-code generator enables large-scale circuit-level threshold studies of Clifford-deformed tile codes under finite Z-bias, without requiring any manual syndrome bookkeeping.

\subsubsection{Results}
From the circuit-level simulations, we extract the thresholds from the $p_L$ vs $p$  curves for the tile codes under four different Clifford variants. The codes in descending order of their thresholds are the Linear, TI (0.25, 0.5), TI-XY, and CSS variants, with corresponding thresholds of 1.5\%, 1.04\%, 0.814\%, and 0.634\%, respectively, as shown in  \cref{fig:circuit_level_linear_css_10000_final} and \cref{fig:circuit_level_XY_middle_10000_final}.
We evaluated the performance of the four CDTCs under finite biases
$\eta = 10{,}000$, $1{,}000$, $100$, $50$, and $10$, and extracted the corresponding
thresholds using finite-size scaling (See Fig. S24 - S31 in SM \cite{supplement}).
% \cref{fig:circuit_level_linear_scaling_1}-\cref{fig:circuit_level_TI_middle_scaling_2}.)
The variation of subthreshold logical error rates $(p_L)$ against distance $(d)$ for different bias $\eta$ and physical error rate ($p$) is summarized in Fig.  S20 -S23 of SM \cite{supplement}.
% \cref{fig:subthreshold_006_final}-\cref{fig:subthreshold_003_final}. 
The best performing code is TI(0.25,0.5).

\begin{figure}
\centering
\includegraphics[width=0.85\columnwidth]{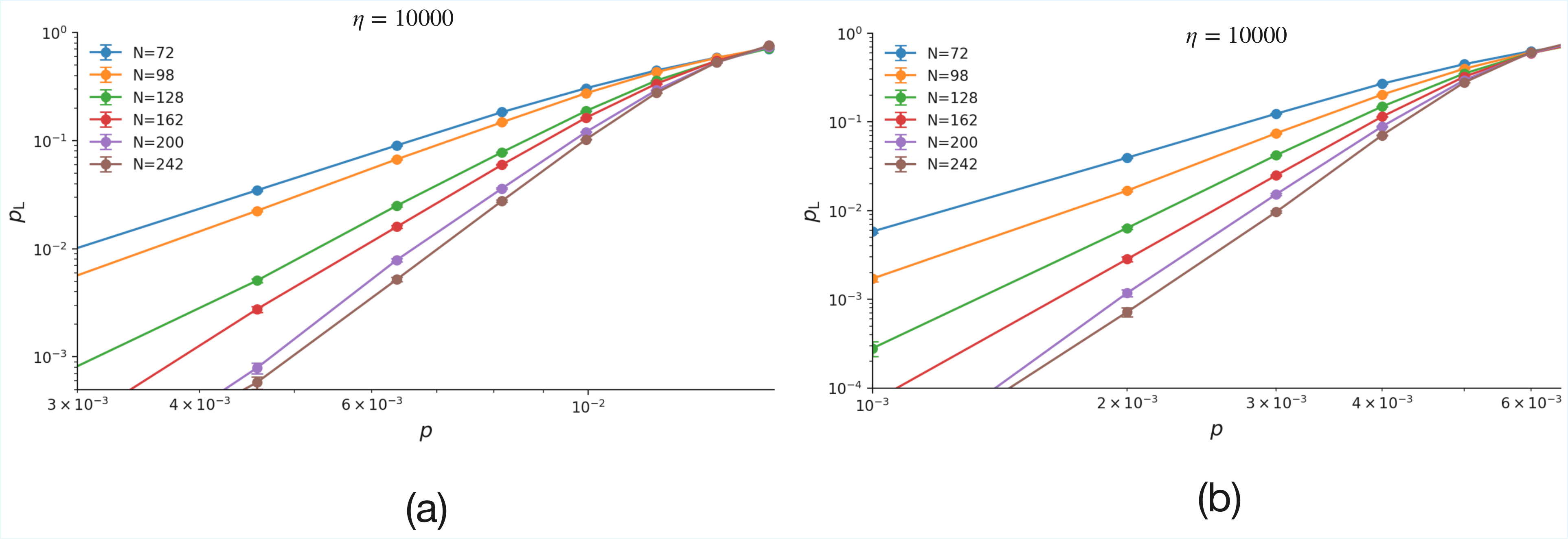}
\captionsetup{justification=justified,singlelinecheck=false}
\caption{ Circuit-level performance of open tile codes decoded with the StimBPOSD decoder using 100,000 trials and eight rounds. Fig (a) shows results for linear deformation, while Fig (b) presents results for CSS tile codes. The corresponding thresholds for bias $\eta=10000$ are nearly 1.5\% for the linear case and 0.634\% for the CSS case respectively. 
\label{fig:circuit_level_linear_css_10000_final}}
\end{figure}
\begin{figure}
\centering
\includegraphics[width=0.85\columnwidth]{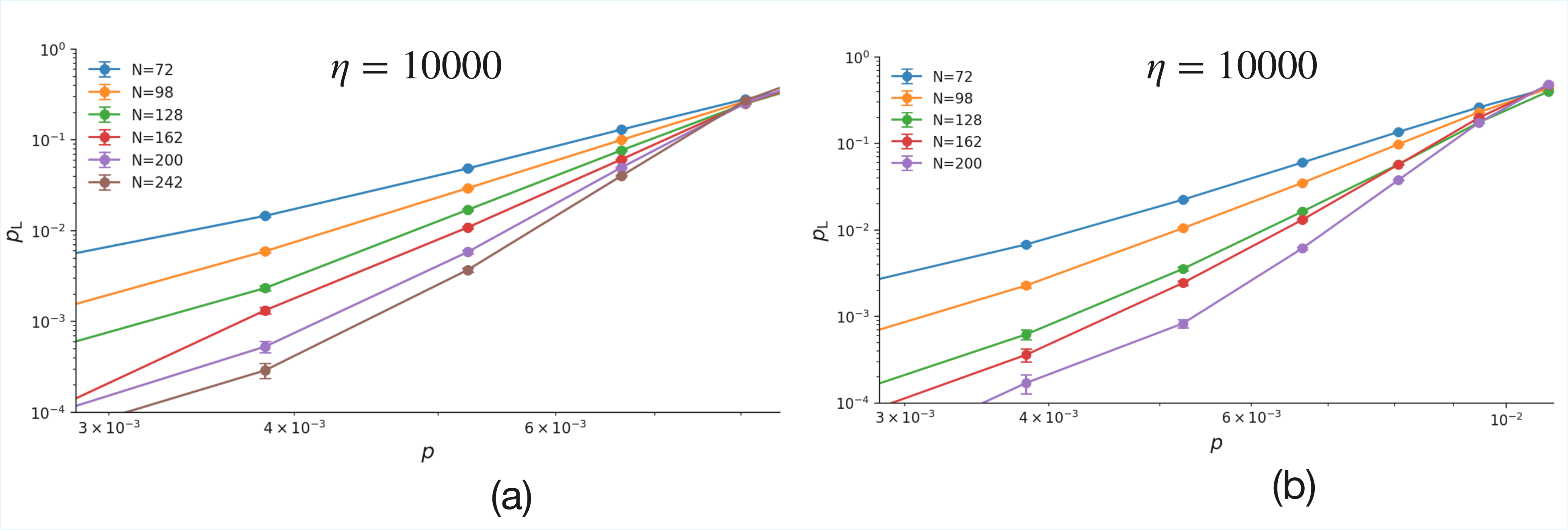}
\captionsetup{justification=justified,singlelinecheck=false}
\caption{ Circuit-level performance of open tile codes decoded with the StimBPOSD decoder using 100,000 trials and eight rounds. Fig (a) shows results for TI XY tile codes, while Fig (b) presents results for TI (0.25,0.5) codes. The corresponding thresholds for bias $\eta=10000$ are nearly 0.814\% for the TI XY case and 1.04\% for the TI (0.25,0.5) case, respectively. 
\label{fig:circuit_level_XY_middle_10000_final}}
\end{figure}

\begin{figure}
\centering
\includegraphics[width=0.69\columnwidth]{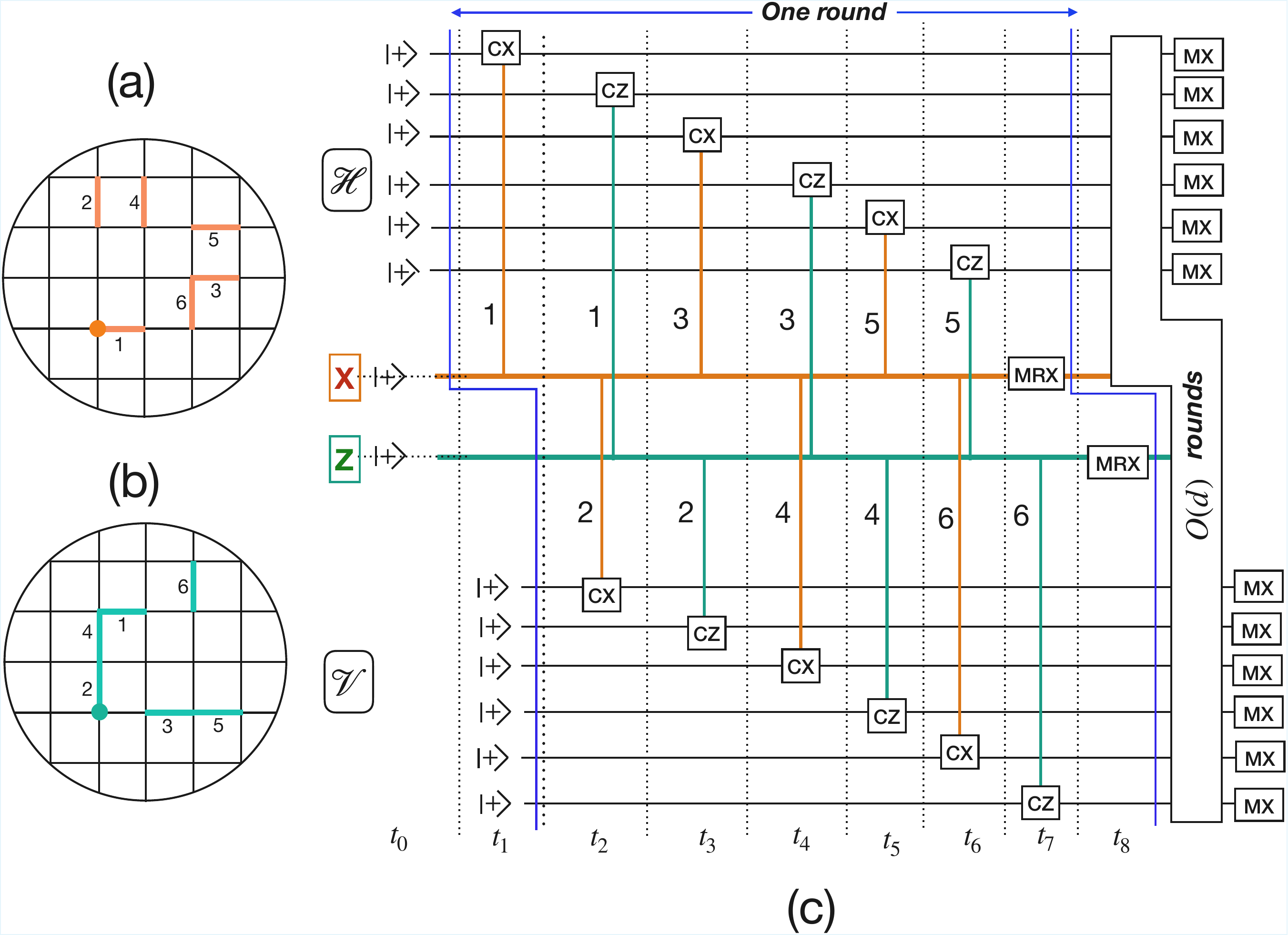}
\captionsetup{justification=justified,singlelinecheck=false}
\caption{Single-round syndrome-measurement circuit in (c) for the CSS tile code, involving $CX$ and $CZ$ gates. Each $X$ and $Z$ check is connected (via $CX$ and $CZ$ gates, respectively) to three vertical and three horizontal qubits. Here $\mathcal{H}$ and $\mathcal{V}$ represent the horizontal and vertical qubits, respectively, in the tile code. All data qubits and checks are initialized in the $\ket{+}$ state. MRX indicates measurement and reset in $\ket{+}$ basis, whereas  MX corresponds to only measurement in the same basis. The qubit ordering for $X$ and $Z$ stabilisers is shown in (a) and (b), respectively.}
\label{fig:circuit_css}
\end{figure}

\subsection{Optimization of the Syndrome Extraction Schedule}
\label{sec:appendix_optimization}

When implementing stabilizer‑measurement circuits, it is important to avoid space–time error mechanisms that can reduce the effective circuit distance. Such mechanisms arise from error propagation involving both data and ancilla qubits during syndrome extraction, and they depend sensitively on the ordering of two‑qubit gates.

Here, for the weight-6 checks of the open tile code, we can choose a disjoint measurement schedule in which the $X$ checks and $Z$ checks are measured sequentially for each round, which would require 12 entangling time steps for each round. Although this circuit is distance preserving, the increased depth of the syndrome extraction circuit makes the data qubits susceptible to significant idle errors. Under standard circuit-level noise models, this accumulation of idle errors can substantially reduce the fault-tolerant threshold.

Therefore, to mitigate idle errors and optimize the threshold, we try to compress the syndrome-extraction circuit by executing the entangling gates for the $X$ and $Z$ checks concurrently. However, if $X$ and $Z$ checks, which do not mutually commute on shared data qubits, are interleaved in an arbitrary order, the resulting circuit need not preserve stabilizer eigenvalues at intermediate times. In such schedules, a valid code state can be mapped outside the intended stabilizer subspace before readout, which in the detector‑error‑model description manifests as “non‑deterministic detectors,” i.e., detectors whose outcomes are not fixed by the encoded state even in the absence of physical errors.
% \AD{We first say "compress the syndrome extraction circuit into 7 time steps" but later says "we finally get 3 out of the 720 schedules that work" and "optimally scheduled, 6-step circuits." Clarify whether the circuit has 6 or 7 time steps. Just use the final schedule and we don't need to tell what was the research obstacles; describe the final tool and the optimal schedules}
% \SD{The circuit has 7 steps. I have corrected the error. Also corrected the language accordingly.}

To identify a fault-tolerant syndrome-extraction schedule, we held the measurement schedule for the $X$ checks fixed and then performed an exhaustive search over the $6! = 720$ distinct sequence permutations of the $Z$ checks.  
We use \texttt{Stim} \cite{gidney2021stim} to construct the complete detector error model (DEM) for each circuit and check whether they produce non-deterministic detectors. After filtering out the invalid schedules, we obtain 6 of the 720 schedules that work.
Among them we select one 7-step circuit shown in the \ref{fig:circuit_css} and exactly calculate the circuit distance and check that it is indeed distance preserving. We use this particular circuit schedule for all threshold and sub-threshold logical error-rate calculations presented in this work.
\begin{figure*}[t]
\centering
\begin{minipage}{0.95\textwidth}
{\captionsetup[algorithm]{aboveskip=2pt,belowskip=2pt}

\hrule height 0.4pt
\captionof{algorithm}{Syndrome-extraction schedule optimization}
\hrule height 0.4pt
}
\vspace{2pt}
\label{alg:schedule_optimization}
\begin{algorithmic}[1]
\Require Base spatial ordering of Z-stabilizers $S_Z$, base spatial ordering of X-stabilizers $S_X$
\Ensure List of valid schedules $\mathcal{V}$ with deterministic detectors
\State Initialize $\mathcal{V} \leftarrow []$
\ForAll{permutation $P$ of the six Z-check layers from $S_Z$}
    \State Initialize circuit $C \leftarrow \emptyset$
    \State Append first X-check entangling layer (CNOT) per $S_X$; insert tick
    \For{$k = 2, \ldots, 6$}
        \State Append $k$-th X-check layer (CNOT) per $S_X$
        \State Append $(k-1)$-th Z-check layer (CZ) per $P$; insert tick
    \EndFor
    \State Append final Z-check layer (CZ) per $P_6$; insert tick
    \State Append measurements and detector definitions
    \If{DEM builds successfully using Stim}
        \State Append $P$ and $C$ to $\mathcal{V}$
    \Else
        \Comment{Non-deterministic detectors}
        \State Discard schedule 
    \EndIf
\EndFor
\State \Return $\mathcal{V}$
\end{algorithmic}
\vspace{2pt}
\hrule height 0.4pt
\end{minipage}
\end{figure*}
\subsection{Effective Pauli Bias Estimation} \label{app:effective_bias}

In this segment of the appendix, we describe how the effective bias of the compiled circuits in ~\cref{sec:qubit-platforms} is obtained. We propagate Pauli error strings through the circuit via Monte Carlo sampling, collecting them at the end to compute effective error probabilities, as described in Algorithm~\cref{alg:2}. 

The circuit is represented as a gate sequence $\mathcal{G}$ of $m$ tuples $(G_k,\mathbf{q}_k)$, where $G_k$ is a Clifford gate acting on qubits $\mathbf{q}_k$. $\mathcal{Q}$ denotes the full qubit set. Before starting, we group gates that can be applied simultaneously into gate layers. Each $e_j$ denotes a single-qubit Pauli sampled from $\{I,X,Y,Z\}$ with specified probabilities.

Initialization and measurement errors are modeled in the $X$ basis. In this case, only errors that flip the prepared or measured state contribute, which correspond to $Y$ and $Z$ errors. This leads to the effective Pauli channel
\[
\mathbf{w}_{\mathrm{init/meas}}=(0,0,p_Y+p_Z,1-(p_Y+p_Z)).
\]

During circuit evolution, each gate layer is applied ideally via Clifford conjugation. In the absence of hardware-specific gate noise, a single-qubit Pauli channel with weights $\mathbf{w}_{\mathrm{circ}}=(p_X,p_Y,p_Z,p_I)$ is applied independently to all qubits.

When incorporating platform-specific noise, each gate $G_k$ is replaced by its noisy implementation $\widetilde{G}_k = \mathcal{N}_{G_k} \circ G_k$, where $\mathcal{N}_{G_k}$ is a Pauli channel. After ideal propagation through a gate layer, a Pauli error $E_k$ is sampled independently from $\mathcal{N}_{G_k}$ for each gate in the layer and applied to its support $\mathbf{q}_k$. Additionally, single-qubit Pauli errors are sampled and applied to qubits that are idle during the entire layer, i.e., qubits not acted on by any gate in that layer.

After collecting $N_{\mathrm{samp}}$ Pauli error strings propagated to the end of the circuit, the occurrences of each Pauli type are counted across all samples and qubits. This yields effective single-qubit marginal error probabilities $\bar{p}_\sigma$, defined as the probability that a randomly chosen qubit in a randomly sampled final Pauli string carries label $\sigma\in\{I,X,Y,Z\}$. The effective bias is then reported as $\eta_{\mathrm{eff}} = \frac{\bar{p}_Z}{\bar{p}_X + \bar{p}_Y}$.

All Pauli operations are performed using Stim~\cite{gidney2021stim}. This algorithm implements a circuit-level Pauli noise model and maps it to effective single-qubit marginal statistics. More realistic models could incorporate time-dependent error rates to account for idle periods and differing gate durations.
\begin{figure*}[t]
\centering
\begin{minipage}{0.95\textwidth}
{\captionsetup[algorithm]{aboveskip=2pt,belowskip=2pt}

\hrule height 0.4pt
\captionof{algorithm}{Monte-Carlo Pauli propagation}
\hrule height 0.4pt
}
\vspace{2pt}
\label{alg:2}
\begin{algorithmic}[1]
\Require Gate sequence $\mathcal{G}$ grouped into layers $\{\mathcal{L}_1,\ldots,\mathcal{L}_T\}$; qubits $\mathcal{Q}$; physical error rate $p$; bias $\eta$;
platform noise model $\mathcal{N}$; platform label $\mathrm{plat}$; number of samples $N_{\mathrm{samp}}$
\Ensure Effective Pauli probabilities $\{\bar p_I,\bar p_X,\bar p_Y,\bar p_Z\}$

\State $p_Z \gets p\frac{\eta}{\eta+2}$,\quad
       $p_X \gets \frac{p}{\eta+2}$,\quad
       $p_Y \gets \frac{p}{\eta+2}$,\quad
       $p_I \gets 1-(p_X+p_Y+p_Z)$
\State $\mathbf{w}_{\mathrm{circ}} \gets (p_X,p_Y,p_Z,p_I)$
\State $\mathbf{w}_{\mathrm{init/meas}} \gets (0,0,p_Y+p_Z,1-(p_Y+p_Z))$
\State Initialize counts $c_I,c_X,c_Y,c_Z \gets 0$

\For{$n \gets 1$ \textbf{to} $N_{\mathrm{samp}}$}
    \State $P \gets I^{\otimes |\mathcal{Q}|}$

    \Comment{Initialization noise}
    \ForAll{$j \in \mathcal{Q}$}
        \State sample $e_j \sim \mathbf{w}_{\mathrm{init/meas}}$
        \State $P \gets e_j\,P$
    \EndFor

    \For{$t \gets 1$ \textbf{to} $T$}
        \State Let $\mathcal{L}_t=\{(G_k,\mathbf{q}_k)\}$

        \Comment{Ideal Clifford propagation} %\SD{(Heisenberg)}}\PM{Good!}
        \State $P \gets \Big(\prod_{(G_k,\mathbf{q}_k)\in\mathcal{L}_t} G_k\Big)\,
                     P\,
                     \Big(\prod_{(G_k,\mathbf{q}_k)\in\mathcal{L}_t} G_k^\dagger\Big)$

        \If{$\mathrm{plat} \neq \mathrm{ideal}$}
            \Comment{Apply gate Pauli channels}
            \ForAll{$(G_k,\mathbf{q}_k)\in\mathcal{L}_t$}
                \State sample $E_k \sim \mathcal{N}_{G_k}$ on $\mathbf{q}_k$
                \State $P \gets E_k\,P$
            \EndFor

            \Comment{Idle noise on inactive qubits}
            \State $\mathcal{Q}_{\mathrm{active}} \gets \bigcup_k \mathbf{q}_k$
            \ForAll{$j \in \mathcal{Q}\setminus \mathcal{Q}_{\mathrm{active}}$}
                \State sample $e_j \sim \mathbf{w}_{\mathrm{circ}}$
                \State $P \gets e_j\,P$
            \EndFor
        \Else
            \Comment{Uniform circuit-level Pauli noise}
            \ForAll{$j \in \mathcal{Q}$}
                \State sample $e_j \sim \mathbf{w}_{\mathrm{circ}}$
                \State $P \gets e_j\,P$
            \EndFor
        \EndIf
    \EndFor

    \Comment{Measurement noise}
    \ForAll{$j \in \mathcal{Q}$}
        \State sample $e_j \sim \mathbf{w}_{\mathrm{init/meas}}$
        \State $P \gets e_j\,P$
    \EndFor

    \Comment{Accumulate single-qubit Pauli-label counts}
    \State Update $c_I,c_X,c_Y,c_Z$ from the letters of $P$
\EndFor

\State \Return $\bar p_\sigma = c_\sigma/(c_I+c_X+c_Y+c_Z)$ for $\sigma\in\{I,X,Y,Z\}$
\end{algorithmic}
\vspace{2pt}
\hrule height 0.4pt
\end{minipage}
\end{figure*}
\subsection{Native gates for the studied qubit platforms}\label{app:native_gates}

In this segment, we review the native entangling gates for the hardware platforms considered in the main text by specifying the effective Hamiltonian that is activated during the entangling operation. In general, a driven gate is described by the time-evolution operator
\begin{equation}
\hat{U}(t_g)
=
\mathcal{T}\exp\!\left(-\frac{i}{\hbar}\int_0^{t_g}\hat{H}(t)\,dt\right),
\end{equation}
where \(\mathcal{T}\) denotes time ordering, \(t_g\) is the gate duration, and \(\hat{H}(t)\) may include single-qubit drives and/or two-qubit interactions. In all the gate models considered here, the dominant time dependence enters through a pulse envelope (Rabi frequency) \(\Omega(t)\) multiplying a fixed generator \(\hat{G}\), i.e., \(\hat{H}(t)=\hbar\,\Omega(t)\,\hat{G}\). In this case \([\hat{H}(t),\hat{H}(t')]=0\), time ordering is unnecessary, and the evolution reduces to a rotation
\begin{equation}
\hat{U}(t_g)=\exp\!\left(-i\,\theta\,\hat{G}\right),
\qquad
\theta \equiv \int_0^{t_g}\Omega(t)\,dt,
\end{equation}
so that the gate angle is set by the pulse area \(\theta\).
In practice, native entangling gates are often combined with single-qubit rotations to construct Clifford gates. Since single-qubit Cliffords such as the Hadamard gate exchange \(X\) and \(Z\), they do not preserve a strong dephasing bias in general; consequently, even if the native entangling interaction is bias preserving, an arbitrary Clifford circuit compiled from it need not be.

We consider three qubit platforms: (i) trapped-ion qubits (with native entangling interactions locally equivalent to CX or CZ), (ii) superconducting qubits (native iSWAP, compiled to CX), and (iii) neutral-atom qubits (native CZ via Rydberg blockade).

\subsubsection{Trapped-ion qubits: native CX from a M\o lmer--S\o rensen interaction}
A commonly used entangling interaction in trapped ions is the M\o lmer--S\o rensen (MS) gate, which arises from driving excitations on two ions with two lasers of different frequency, but near the atomic energy gap \cite{Ion-traps-cnot-2000}. In an appropriate interaction picture and for suitable drive phases, the MS interaction can be viewed as an effective Ising coupling along the \(X\) axis, e.g.
\begin{equation}
\hat{H}_{\mathrm{MS}}
=
\frac{\hbar\Omega}{2}\,\hat{X}\!\otimes\!\hat{X}.
\label{eq:H_MS}
\end{equation}
Evolving under \(\hat{H}_{\mathrm{MS}}\) for an appropriate duration produces an entangling gate locally equivalent to \(R_{XX}(\pi/2)\), from which a CX gate can be obtained using single-qubit \(X\) and \(Y\) rotations. Because the interaction in Eq.~\eqref{eq:H_MS} contains \(X\), it does not commute with local \(Z\) operators. As a result, a pure dephasing-biased error model is not preserved by this entangling interaction.

\subsubsection{Trapped-ion qubits: native CZ from a Liebfried--S\o rensen interaction}
Trapped ions can also realize an entangling phase gate via the Liebfried--S\o rensen (LS) interaction \cite{Ion-traps-cz-2000}, which implements a \(ZZ\) phase up to local \(Z\) rotations. The LS gate can be represented by an effective Ising interaction of the form
\begin{equation}
\hat{H}_{ZZ}
=
\frac{\hbar\chi}{2}\,\hat{Z}\!\otimes\!\hat{Z},
\label{eq:H_ZZ_ions}
\end{equation}
where \(\chi\) depends on parameters such as the Lamb--Dicke parameter, the Rabi frequency, the number of ions, and the detuning of the applied fields \cite{Ion-traps-cz-2003}. Evolving under Eq.~\eqref{eq:H_ZZ_ions} for the appropriate duration yields a \(ZZ\)-phase gate (and hence a CZ gate up to single-qubit \(Z\) rotations). Importantly, \(\hat{Z}\!\otimes\!\hat{Z}\) commutes with single-qubit \(Z\) operators, so dephasing-biased noise is preserved at the level of the native entangling interaction.

\subsubsection{Superconducting qubits: native iSWAP (\(XY\) exchange interaction) and compilation to CX}
For capacitively coupled superconducting qubits, an effective exchange (``\(XY\)'') interaction is commonly engineered. A minimal model is
\begin{equation}
\hat{H}_{XY}
=
\frac{\hbar J}{2}\Big(\hat{X}\!\otimes\!\hat{X}+\hat{Y}\!\otimes\!\hat{Y}\Big),
\label{eq:H_XY}
\end{equation}
where \(J\) depends on the coupling capacitance. This model generates an iSWAP gate at a fixed interaction time, which can be compiled into CX using single-qubit \(Z\) and \(X\) rotations \cite{Superconducting-2010}. Because Eq.~\eqref{eq:H_XY} contains transverse terms \(X\) and \(Y\), it does not commute with local \(Z\), and therefore it does not preserve a dephasing bias under propagation through the native entangling interaction.

\subsubsection{Neutral-atom qubits: native CZ from Rydberg blockade}
Neutral-atom platforms can implement a CZ gate using the Rydberg blockade mechanism: excitation of the control atom to a Rydberg level shifts the doubly excited state \(\lvert rr\rangle\) by a large interaction energy, suppressing resonant excitation of the target atom and thereby producing a conditional phase \cite{Neutral-atoms-2000,Neutral-atoms-2001}.
A standard protocol uses a three-pulse sequence: \(\pi\) on the control, \(2\pi\) on the target, and \(\pi\) on the control \cite{Neutral-atoms-2010}. In the basis \(\{\lvert0\rangle,\lvert1\rangle,\lvert r\rangle\}\), a simple effective Hamiltonian model for the driven dynamics is
\begin{align}
\hat{H}_{\mathrm{c}} &= \frac{\hbar\Omega}{2}\Big(\lvert r\rangle\langle 1\rvert \otimes I + \mathrm{h.c.}\Big),\\
\hat{H}_{\mathrm{t}} &= \frac{\hbar\Omega}{2}\Big(I \otimes \lvert r\rangle\langle 1\rvert + \mathrm{h.c.}\Big)
+ \hbar V\,\lvert rr \rangle\langle rr \rvert,
\end{align}
where \(\Omega\) is the Rabi frequency and \(V\) is the blockade shift. After projecting back to the computational subspace, the net operation is a CZ gate (up to single-qubit phase corrections that can be implemented with \(Z\) rotations) \cite{Neutral-atoms-2010}. Since the entangling action is a controlled-\(Z\) operation, this native interaction is compatible with preserving a strong dephasing bias.

\subsection{Numerically simulated Pauli channels of the native gates (CX or CZ) across qubit platforms}
\label{app:noise_channels}

As described in the main text, we incorporate platform-specific physics by replacing the circuit-level Pauli channel acting on the qubits participating in each native entangling gate with numerically derived Pauli channels. In this appendix, we summarize how these Pauli channels are extracted from microscopic simulations for each qubit platform introduced in ~\cref{app:native_gates}, and we list the resulting channels used in the Pauli-propagation algorithm from ~\cref{app:effective_bias}. The explicit Pauli error distributions used for single- and two-qubit gates are summarized in ~\cref{tab:pauli_channels_all}.

To construct a Pauli channel associated with a native gate, we first simulate the ideal (noise-free) gate evolution using the von Neumann equation, obtained by omitting the dissipative terms from the Lindblad master equation. From this evolution we compute the Pauli transfer matrix (PTM) of the ideal native gate, denoted \(R_{\mathrm{ideal}}^{\mathrm{nat}}\).

To incorporate noise during gate evolution, we simulate the Lindblad master equation
\begin{equation}
\begin{aligned}
\dot{\hat{\rho}}(t)
&=
-\frac{i}{\hbar}\big[\hat{H}(t),\hat{\rho}(t)\big]
+\sum_n \lambda_n\,\mathcal{D}[\hat{L}_n]\!\big(\hat{\rho}(t)\big), \\
\mathcal{D}[\hat{L}]\!\big(\hat{\rho}(t)\big)
&=
\hat{L}\hat{\rho}(t)\hat{L}^\dagger
-\frac12\left\{\hat{L}^\dagger \hat{L},\,\hat{\rho}(t)\right\},
\end{aligned}
\label{eq:Lindblad}
\end{equation}
where \(\hat{L}_n\) are the coupling operators and \(\lambda_n\) are the corresponding dissipator coefficients. We take \(\hat{L}_n\) from the Pauli basis: \(\{X,Y,Z\}\) for single-qubit gates and \(\{I,X,Y,Z\}^{\otimes 2}\setminus\{I^{\otimes 2}\}\) for two-qubit gates, corresponding to a Markovian noise model.

Bias toward dephasing is imposed through the dissipator coefficients by setting \(\lambda_{IZ}+\lambda_{ZI}+\lambda_{ZZ}=\Lambda\), while all remaining two-qubit Pauli dissipators are assigned weight \(\Lambda/\eta_{\mathrm{sys}}\). The overall scale \(\Lambda\) is chosen such that the resulting noisy evolution yields the desired gate fidelity.

From this simulation we obtain the PTM of the noisy native gate, \(R_{\mathrm{noisy}}^{\mathrm{nat}}\). The PTM of the corresponding noise channel is then defined by factoring out the ideal evolution,
\begin{equation}
R_{\mathrm{noise}}^{\mathrm{nat}}
=
R_{\mathrm{noisy}}^{\mathrm{nat}}\,
\left(R_{\mathrm{ideal}}^{\mathrm{nat}}\right)^{-1},
\end{equation}
following Ref.~\cite{bias_preserving_CNOT_cat_qubits_2020}.

We represent quantum channels in the Pauli basis via their PTM \(R\), whose elements are given by
\begin{equation}
R_{ij}
=
\frac{1}{d}\,\mathrm{Tr}\!\left(P_i\,\mathcal{E}(P_j)\right),
\end{equation}
where \(d=2^n\) and \(\{P_i\}\) is the \(n\)-qubit Pauli basis. In practice, we evaluate \(\mathcal{E}(P_j)\) by evolving each Pauli operator under the same dynamics used for density matrices, implemented numerically using QuTiP~\cite{Qutip-2012}.

For Pauli channels, the diagonal elements of \(R_{\mathrm{noise}}^{\mathrm{nat}}\) determine the Pauli error probabilities \(\{p_E\}\), which are obtained via a symplectic Walsh--Hadamard transform,
\begin{equation}
p_E
=
\frac{1}{4^n}\sum_{P\in\mathcal{P}_n} H_{P,E}\,R_{PP},
\qquad
H_{P,E}\equiv (-1)^{\langle P,E\rangle},
\end{equation}
where \(\mathcal{P}_n\) is the \(n\)-qubit Pauli group modulo phases and \(\langle P,E\rangle\) is the symplectic commutation indicator.

These probabilities define the stochastic Pauli channel associated with a gate \(G\) on a given platform,
\begin{equation}
\mathcal{N}_{G,\mathrm{plat}}(\rho)
=
\sum_{E\in\mathcal{P}_n} p_E^{(G,\mathrm{plat})}\, E\,\rho\,E.
\end{equation}
Here, \(E\) runs over the Pauli basis acting on the support of the gate: \(E\in\{I,X,Y,Z\}\) for single-qubit gates and \(E\in\{I,X,Y,Z\}^{\otimes 2}\) for two-qubit gates. The probabilities \(p_E^{(G,\mathrm{plat})}\) therefore form a distribution over Pauli strings, indexed consistently with the ordering used in the PTM.

From these Pauli channels, we define a gate-level bias parameter by comparing the total weight of dephasing-type errors to that of all other non-identity Pauli errors. For two-qubit gates, we define
\begin{equation}
\eta_{\mathrm{gate}}
=
\frac{p_{IZ} + p_{ZI} + p_{ZZ}}{\sum_{E \neq I^{\otimes 2}} p_E - (p_{IZ} + p_{ZI} + p_{ZZ})}.
\end{equation}

Using this definition, we obtain the following gate-level bias values (for $\eta_{\mathrm{sys}}=100$):
CZ gates in trapped-ion and neutral-atom platforms preserve the imposed bias, yielding $\eta_{\mathrm{gate}}=100$, while CX implementations exhibit significantly reduced bias, with $\eta_{\mathrm{gate}}=1.13$ for trapped ions and $\eta_{\mathrm{gate}}=4.72$ for superconducting platforms.

Noise channels for single-qubit gates (Hadamard and phase) are computed using the same procedure to account for compilation into non-native gates. The explicit Pauli channels used in the simulations are listed in ~\cref{tab:pauli_channels_all}.

\begin{table}[t]
\centering
\footnotesize
\setlength{\tabcolsep}{4pt}
\renewcommand{\arraystretch}{1.05}

\begin{tabular}{@{} l r @{\hspace{1.2em}} l r @{\hspace{1.2em}} l r @{}}
\toprule
\multicolumn{2}{c}{\textbf{Single-qubit (all platforms)}} &
\multicolumn{2}{c}{\textbf{Native CX}} &
\multicolumn{2}{c}{\textbf{Native CZ}} \\
\cmidrule(lr){1-2}\cmidrule(lr){3-4}\cmidrule(lr){5-6}

\multicolumn{2}{l}{\textbf{Hadamard \(H\)}} &
\multicolumn{2}{l}{\textbf{Superconducting}} &
\multicolumn{2}{l}{\textbf{Trapped ions}} \\
\(I\) & 0.9997144182 & \(II\) & 0.9997120036 & \(II\) & 0.9996700778 \\
\(X\) & \(4.8340817\times 10^{-5}\) & \(IZ\) & \(7.1910099\times 10^{-5}\) & \(IZ\) & \(1.1097799\times 10^{-4}\) \\
\(Y\) & \(9.5198469\times 10^{-5}\) & \(ZI\) & \(7.1439037\times 10^{-5}\) & \(ZI\) & \(1.1097799\times 10^{-4}\) \\
\(Z\) & \(1.4204254\times 10^{-4}\) & \(ZZ\) & \(9.4325179\times 10^{-5}\) & \(ZZ\) & \(1.0469902\times 10^{-4}\) \\
      &                        & \(XY\) & \(2.3734056\times 10^{-5}\) & Other & \(\le 10^{-6}\) \\
      &                        & \(YX\) & \(2.3734056\times 10^{-5}\) &       & \\
      &                        & Other  & \(\le 10^{-6}\)              &       & \\
\cmidrule(lr){1-2}\cmidrule(lr){3-4}\cmidrule(lr){5-6}

\multicolumn{2}{l}{\textbf{Phase \(S\) (and \(S^\dagger\))}} &
\multicolumn{2}{l}{\textbf{Trapped ions}} &
\multicolumn{2}{l}{\textbf{Neutral atoms}} \\
\(I\) & 0.9996827966 & \(II\) & 0.9996826627 & \(II\) & 0.9997240150 \\
\(X\) & \(1.5707914\times 10^{-6}\) & \(ZI\) & \(1.0518675\times 10^{-4}\) & \(IZ\) & \(6.7489247\times 10^{-5}\) \\
\(Y\) & \(1.5707914\times 10^{-6}\) & \(XI\) & \(6.3515191\times 10^{-5}\) & \(ZI\) & \(9.3220474\times 10^{-5}\) \\
\(Z\) & \(3.1406184\times 10^{-4}\) & \(IZ\) & \(4.2060047\times 10^{-5}\) & \(ZZ\) & \(1.0251356\times 10^{-4}\) \\
      &                        & \(IY\) & \(3.1616011\times 10^{-5}\) & Other & \(\le 10^{-6}\) \\
      &                        & \(YI\) & \(3.2289131\times 10^{-5}\) &       & \\
      &                        & \(XY\) & \(1.0497059\times 10^{-5}\) &       & \\
      &                        & \(YX\) & \(1.0495067\times 10^{-5}\) &       & \\
      &                        & Other  & \(\le 10^{-6}\)              &       & \\
\bottomrule
\end{tabular}

\caption{Pauli error channels used in the simulations for \(p=3\times 10^{-4}\) and bias \(=100\). Left: single-qubit channels for \(H\) and \(S\) (platform independent). Middle: dominant two-qubit Pauli errors for native CX implementations. Right: dominant two-qubit Pauli errors for native CZ implementations. For all platforms, each omitted individual Pauli term satisfies \(p_E\le 10^{-6}\).}
\label{tab:pauli_channels_all}
\end{table}
\end{document}